\newtheorem{theorem}{Theorem}
\def\tagform@#1{\maketag@@@{\bfseries(\ignorespaces#1\unskip\@@italiccorr)}}
\renewcommand{\eqref}[1]{\textup{{\normalfont(\ref{#1}}\normalfont)}}
\declaretheoremstyle[
spaceabove=0.5\topsep,
spacebelow=0.5\topsep,
name={Proposition},
]{propsty}
\declaretheoremstyle[
spaceabove=0.5\topsep,
spacebelow=0.5\topsep,
name={Proposition},
]{propsty}
\declaretheoremstyle[
spaceabove=0.5\topsep,
spacebelow=1\topsep,
name=Proof,
headfont=\itshape, numbered=no
]{proofsty}
\author{Wenbo Zhang$^1$\and Harsha Honnappa$^2$\and Satish V. Ukkusuri$^1$\footnote{corresponding author, email:sukkusur@purdue.edu}\\}
\date{
	$^1$Lyles School of Civil Engineering, Purdue University\\
	$^2$School of Industrial Engineering, Purdue University\\[2ex]	
}
\begin{document}

\title{\uppercase{Modeling Urban Taxi Services with E-Hailings: A Queueing Network Approach}}

\maketitle

\vspace{1em}
\section*{Abstract}
The rise of e-hailing taxis have significantly altered urban transportation and resulted in an competitive taxi market with both traditional street-hailing and e-hailing taxis. The new mobility services provide similar door-to-door rides as the traditional one and there is competition across these various services. Meanwhile, the increasing e-hailing supply, together with traditional taxicab flows, influence the urban road network performance, which can also in turn affect taxi mode choice and operation. In this study, we propose an innovative modeling structure for the competitive taxi market and capture the interactions not only within the taxi market but also between the taxi market and urban road system. 

The model is built on a network consisting of two types of queueing theoretic approaches for both the taxi and urban road system. Considering both the passenger and vehicle arrivals, we utilize an assembly-like queue $SM/M/1$ for passenger-vehicle matching within the taxi system, which controls how many and how frequently vehicles drive from the taxi system to the urban road system. A common multi-server $M/M/c$ queue that can account for road capacity is proposed for the urban road system and a feedback of network states are sent back to the taxi system. Moreover, within the taxi system, we introduce state-dependent service rate to account for the stochasticity of passenger-vehicle matching efficiency. Then the stationary state distributions, as well as asymptotic properties, of the queueing network are discussed.

An example is designed based on data from New York City. Numerical results show that the proposed modeling structure, together with the corresponding approximation method, can capture dynamics within high demand areas using multiple data sources. Overall, this study shows how the queueing network approach can measure both the taxi and urban road system performance at an aggregate level. The model can be used to estimate not only the waiting/searching time during passenger-vehicle matching but also the delays in the urban road network. Furthermore, the model can be generalized to study the control and management of taxi markets.  

\textbf{Keywords:} E-hailings; Urban Taxi Market; Queueing Network; Vehicle-Passenger Matching; Synchronization Process; Road Congestion 

\section{Introduction}
The rise of flexible transportation companies (TNCs), enabled by the rapid adoption of smartphones, has significantly altered the urban transportation landscape by providing anywhere, anytime mobility. These new mobility services are mainly based on smartphone apps and thus are named as e-hailings, app-based taxi services (ATS), or mobility-on-demand (MoD) services. Competition between ATS's and traditional taxi services (TTS's) has resulted in significant changes in the urban taxi market. For instance, the price of a medallion in the TTS system in NYC has dropped by ninety percent in the most recent five years due to the entry of ATS's \cite{bresiger_2017}. The lack of barriers in this new taxi market (which is a mix of TTS and ATS taxis) has lead to a competitive market where a small number of firms dominate the ATS and TTS market. The ATS and TTS vehicles have similar door-to-door mobility services (note that we only focus on regular or economy taxi services rather than any shared, premium, or special services) but utilize different hailing methods; and the dominant firms (or taxi authorities) create technical (or monetary) barriers to new entries. The emerging market structure is very different from the previous TTS-dominated taxi market where all cabs apply for permits from a central taxi authority, have mostly the same service characteristics and pricing schemes, and taxi authorities control the fleet size by determining the number of licenses to sell. In particular, there are two fundamental shifts in the ATS market which warrant highlighting: 
\begin{itemize}
	\item Change from a TTS-dominated to a competitive market with both ATS and TTS. In the TTS, the authorities can manage fleet size in a direct and centralized way (e.g. medallions) and has direct interaction with the drivers. On the other hand, the ATS provide a matching platform to pair customers and drivers, and acts as a middleman between riders and drivers;
	\item Flexibility and Reduced barriers of entry. ATS allows flexible service hours and almost no barriers for entry. One ATS driver can become available for service anytime and anywhere within the city where the vehicle registers. There are no limits on service hours per shift and fleet size unlike TTS. Moreover, the ATS platforms  updates the fare rate based on supply and demand conditions instead of fixed fare rate in TTS.   
\end{itemize}

Advances in data science, as well as availability of taxi GPS traces, have contributed to a better understanding of the dynamics and operation of taxi markets. In particular, TTS studies include vehicle/passenger movement patterns \cite{CAI2016590,zhang_pricing}, mode choice \cite{ShaheenSusanA2016Ccit}, service efficiency \cite{RAYLE2016168,ZhanXianyuan2016AGAt}, ride-sharing \cite{Alonso-Mora462,QianXinwu2017Oaai,Vazifeh2018}, to system modeling \cite{HE201593,YANG20101067,ZHA2016249}. However, almost all of these studies focus on a unique taxi service rather than modeling the comprehensive taxi market comprising of both the ATS and TTS, except for a few studies which model the competitive market equilibrium \cite{QIAN201743,Heilker2018196}. Given the limited number of studies in the literature on the systematic modeling of taxi systems, this study makes an important contribution towards the development of fundamental tools for measuring system-wide performance metrics of the new taxi market.       

Currently, there are limited modeling tools that allow for the quantification of the quality of service and capture the inherent stochasticity that arises in the taxi markets. There are three challenges that should be addressed in developing system wide tools for modeling the new taxi market:
\begin{itemize}
	\item Spatial Heterogeneity. Since the taxi activities are highly associated with socio-economic variables, the taxi rides also shows significant spatial heterogeneity, particularly, during rush hours. In other words, most of the taxi pickups may be concentrated in some high economic activity zones, such as central business district, while dropoffs are concentrated in other zones. The modeling has to consider properties that dictate the spatial heterogeneity in an urban area.  
	\item Network externalities. Whenever a customer engages a vehicle, this not only decreases the instantaneous vehicle availability at the source location, but also affects the future vehicle availability at all other locations within short timescales. The impacts of network externalities are more significant under dynamic pricing in ATS. Since changes in vehicle availability can be reflected in dynamic pricing this affects future vehicle supply (e.g. induced supply).      
	\item Role of Stochasticity. In a two-sided market, not only do customers choose when to request a ride but also drivers choose when to work, how long to work, and where to search for customers. Moreover, the platforms frequently examine local states and develop corresponding fare rate for a specific time interval and location. This will in turn affect demand and supply. Even if the transportation network is symmetric (i.e. uniform arrival rates and routing choice at all nodes over a regular grid network), the stochastic nature of arrivals will also quickly drive the system out of balance and hence leads to instability. 
\end{itemize}   
The large-scale and stochastic nature of the system makes it challenging, but critical and important, to develop an understanding of the system dynamics by considering both ATS and TTS within the same framework. More importantly, the system modeling should be not only rich enough to capture the salient features of both passengers' and vehicles' behaviors but also the stochastic nature of the demand-supply dynamics and the resulting stability of the system. With this background, the goal of this study is to develop a queuing based methodology to model the combined ATS and TTS system (or `new taxi system') dynamics and determine various performance metrics of the system.

A short review of the literature is summarized below. While not comprehensive, the literature highlights key studies relevant to this paper and organizes them based on various categories.  
\begin{itemize}
	\item The first line of research is the aggregated models that formulate the relationships among system performance metrics, for instance, nonlinear simultaneous equations of system performance  \cite{Yang2000}, queueing theory \cite{Mu_queue,wong_bilateral_search}, and neural network \cite{xu_neural}. These models are primarily focused on modeling the TTS system and derive the whole system performance without any considerations of spatial variations. Then, they employ different modeling structures to explain interactions among the whole system performance. Hence, the major limitation of aggregated models is the capability of addressing stochastic nature of components and spatial dependencies in the demand and supply. 
	\item The second line of research is the equilibrium models that investigate dynamics of drivers and passengers, taking internal and external factors into account. Based on different definitions of equilibrium, there are three modeling structures: spatial demand-supply equilibrium  \cite{WONG2008985,YANG20101067}, competitive equilibrium \cite{wang_game,yang_taxi_OR}, and other defined stationary distribution states \cite{Buchholz2015SpatialES,frchette2016frictions}. However, almost all equilibrium based models are built on weak assumptions of passenger-driver matching. They also do not capture the stochastic nature of the market dynamics, which leads to unreliable estimations on utilities and state distributions. Moreover, the supply-demand equilibrium is mainly designed for a market with perfect information, homogeneous products, no barriers to entry, and profit maximization of service providers. This is not appropriate for mixed TTS and ATTS market which is the focus of this study.
	\item The third line of research generally emulates individual behaviors and their interactions with others through representing system participants or rides as agents or nodes in a graph, for instance, graph theory \cite{ZhanXianyuan2016AGAt} and agent-based simulation \cite{MACIEJEWSKI2015358}. Obviously, the large-scale system will include considerable agents or nodes, as well as their interactions. These methods are not suitable for capturing the large scale system dynamics due to the significant computational time they incur. Current cases in the literature are mostly based on the small hypothetical networks.
	\item The last line of research is the queueing network approaches for station based transportation system. For example, most studies assume a station-based autonomous taxi/vehicle system where customers arrive at predefined stations for autonomous vehicle rental, drive to destinations, and drop off rental autonomous vehicles. Most of these studies model each station as one $M/M/1$ queue with the assumptions of Poisson arrivals and exponential service then connect queues based on routing \cite{GEORGE2011198,Banerjee:2017:POS:3033274.3085099,zhang_robotic}. The process yields a closed Jackson network with product-form stationary distribution. Queueing networks allow analysts to incorporate two important forms of qualitative prior knowledge: first, the structure of the queueing network can be used to capture known connectivity, such as road network, and second, the queueing model inherently incorporates the assumption that the response time explodes when the workload approaches the system’s maximum capacity, which is useful for examining system performance under heavy flow and worst cases. However, the literature emphasizes the approximation techniques to find optimal control policies under queueing network structure, other than validating queueing network and assumptions for transportation systems. In particular, the regular $M/M/1$ queue can not fully explain the driver-passenger matching, as well as strategic behaviors of drivers and passengers under dynamic controls. Several studies also considered both the passenger and driver arrivals in taxi system and proposed a double-ended or synchronization process queue \cite{SHI20161024,shi_non_zero}.     
\end{itemize}
Besides the aforementioned literature, there are a few adaptions of equilibrium models to the emerging competitive new taxi market. The studies in \cite{HE201593,WANG2016212} applied demand-supply equilibrium models to the competitive taxi market but assumed a static pricing scheme rather than a dynamic one by the ATS. Although the study \cite{QIAN201743} considers dynamic pricing for ATS under a framework of competitive equilibrium between ATS and TTS, they failed to precisely formulate the vehicle-passenger matching process, as well as the utilities of passengers and drivers. 

In this study, we take advantage of both queueing network models and matching queues to investigate the large-scale and stochastic nature of the competitive taxi market, yielding quantitative performance measures. Within the queueing network, we have two types of queues/nodes representing two different subsystems. The first type of nodes are the taxi passenger-vehicle matching subsystem. Since neither of the ATS and TTS have stations, we assume that one homogeneous spatial unit is modeled as one taxi subsystem. Vehicles are matched with passengers based on a synchronization process, forming a `synchronized' $SM/M/1$ queue. The latter differs from the regular $M/M/1$ queue in that it has two independent arrival flows of both passengers and vehicles thus processes synchronized passenger-vehicle pairs that match based on arrival sequences and zero matching time. Although there exist certain differences, the $SM/M/1$ queue can be closely approximated by the simple $M/M/1$ queue by taking the minimum of the two arrival rates as the effective input rate; this approximation is further detailed below. 

The second type of nodes are the road transport subsystem. Again the road network is split into homogeneous units and each homogeneous unit is modeled as one road subsystem in the form of an $M/M/c$ queue. The road subsystem can be connected with both road and taxi subsystems based on geographical contiguity. However, the taxi subsystems only serve their spatial unit and can not directly interact with other taxi subsystems, because vehicles departing from taxi subsystems should travel through road network to get to their destinations. Further, we also model balking behavior of vehicles in the $SM/M/1$ queue given the fact that empty vehicles can drive to another spatial unit if s/he can not find passengers at one spatial unit. Finally, we also account for the dynamics of matching efficiency with state-dependent service rates. Note that one of the novelties of our model is that we explicitly combine a model of `virtual' infrastructure (the passenger-vehicle matching queue) with a model of `physical' infratructure (the urban road network) to obtain a holistic view of the taxi system dynamics.

Queueing theoretic approaches provide insight into system performance under a range of workload conditions. In particular, it allows us to identify the degree of load that will cause a system to become highly congested without actually cascading into failure. On the other hand, queueing approaches have a reputation for making unrealistic distributional assumptions and of lacking robustness to divergence of the actual system from modeling assumptions. Here, we use multiple statistical hypothesis tests at various spatio-temporal resolutions to justify our queueing theoretic model. In general, queueing theoretic models are data agnostic, and provide sufficient conditions under which one can compute performance metrics of interest. Having said that, it is not {\it a priori} apparent that a given queueing network is appropriate at a given spatio-temporal resolution, for the taxi system. We combine the queueing network model with extensive statistical hypothesis testing to justify an appropriate spatio-temporal aggregation scale at which the observed arrival and service conditions are sufficiently `homogeneous,' thereby yielding empirical validation for our modeling assumptions. 


The remainder of the paper is organized as follows: Section 2 presents the modeling structures for the competitive taxi market with both ATS and TTS; Section 3 investigates the stationary state distributions and asymptotic behaviors under the proposed modeling structure; Section 4 provides a case study based on TTS and Uber in NYC; and Section 5 concludes the paper and points out future research directions.


\begin{table} 
	
	
	
	\centering
	\begin{tabular}{r c p{10cm}}
		\hline \multicolumn{1}{c}{\textbf{Symbols}} & \multicolumn{1}{c}{\textbf{}} & \multicolumn{1}{c}{\textbf{Descriptions}} \\ \hline 
		$\lambda_{i}^{p}$ & $ $ & The total passenger arrival rate of both the ATS and TTS at spatial unit $i$, and superscript $p$ denotes passengers\\
		
		$p_{i}^{*}$& $ $ &The probability of passengers using service * at spatial unit $i$, and superscript * can be either ATS or TTS\\
		
		$\lambda_{i}^{p,*}$&$ $&The passenger arrival rate at spatial unit $i$, superscript $p$ denotes passengers, and superscript * can be either ATS or TTS\\
		
		$\lambda_{i}^{v,*}$& $ $ &The external arrival rate of vehicles at spatial unit $i$, $V$ denotes vehicles, and superscript * can be either ATS or TTS\\
		
		$\lambda_{i}^{pv,*}$& $ $&The arrival rate of passenger-vehicle pairs at spatial unit $i$, superscript $pv$ denotes passenger-vehicle pairs, and superscript * can be either ATS or TTS\\
		
		$\hat{\lambda}_{i}^{v,*}$& $ $ &The effective arrival rate of vehicles at spatial unit $i$, and superscript * can be either ATS or TTS\\
		
		$\mu_{i}^{*}$& $ $ &The service rate for passenger-vehicle pairs at spatial unit $i$, and superscript * can be either ATS or TTS\\
		
		$D_{i}^{*}$& $ $ &The departure flow rate of passenger-vehicle pairs at spatial unit $i$, and superscript * can be either ATS or TTS\\

		$\lambda_{i}^{r}$& $ $ &The overall vehicle arrival rate for road queue at spatial unit $i$, and superscript $r$ denotes the road queue\\
		
		$p_{i,.}^{r,*}$& $ $ &The portion of vehicle flows at spatial unit $i$, subscript $.$ can be either $O$ (occupied vehicles) or $E$ (empty vehicles); and superscript * can be either ATS or TTS\\

		$\mu_{i}^{r}$& $ $ &The service rate at road queue of spatial unit $i$, and superscript $r$ denotes the road queue\\
	
		$p_{i}^{p,*}$& $ $ &The probability of empty vehicles successfully picking up passengers at spatial unit $i$, superscript * can be either ATS or TTS, and superscript $p$ denotes passenger pickups\\
		
		$p_{ij}^{.,*}$& $ $ &The probability of vehicles moving from spatial unit $i$ to $j$, superscript $.$ can be either $O$ (occupied vehicles) or $E$ (empty vehicles); and superscript * can be either ATS or TTS\\
		
		$p_{i0}^{.,*}$& $ $ &The probability of vehicles exiting system at spatial unit $i$, superscript $.$ can be either $O$ (occupied vehicles) or $E$ (empty vehicles); and superscript * can be either ATS or TTS\\
		
		$F_{i,in}$& $ $ &The incoming vehicle flow rate at spatial unit $i$, regardless of service types and vehicle status\\
		
		$F_{i,out}$& $ $ &The outgoing vehicle flow rate at spatial unit $i$, regardless of service types and vehicle status\\
		
		$I$& $ $ &The set of spatial units with cardinality of $|I|$\\
		
		$a_{ij}$& $ $ &The connectivity between spatial units $i$ and $j$ with physical road network\\	
		
		$x_i^{*}$& $ $ &The number of vehicle in taxi queue at spatial unit $i$, and superscript * can be either ATS or TTS\\
		
		$x_i^{.,*}$& $ $ &The number of vehicle in road queue at spatial unit $i$, superscript $.$ can be either $O$ (occupied vehicles) or $E$ (empty vehicles); and superscript * can be either ATS or TTS\\

		$c_{i}$& $ $ &Number of road servers at spatial unit $i$\\ \hline
	\end{tabular}
\end{table}

\section{Modeling Structures}
\subsection{Network Presentation of the Competitive Taxi Market}
In the TTS-dominated taxi market, the dynamics can be simply explained by a bilateral passenger-vehicle matching, as well as vehicle movements among spatial units. In particular, the passenger-vehicle matching behaviors are critical to the system performance, for instance, waiting/searching time and utilization. Since the existence of spatiotemporal mismatch between one drop-off and the next pickup, taxicab drivers always search around for passengers. Although ATS also operates like TTS, the centralized platform with real-time controls introduces more complexity, as shown in Fig.\ref{fig:TTS_ATS_system}. The first addition is the competition for passengers between ATS and TTS. The second addition is the ATS drivers' flexible working hours. Moreover, the ATS platform examines demand and supply frequently and utilizes dynamic pricing for seeking balance of demand and supply. Overall, the competitive taxi market receives two types of external flows: passengers and vehicles. The vehicles are operated by both the taxi fleet and ATS driver partners. Due to the market model of free entry, the ATS vehicles can enter and exit the system in a frequent manner. Thus, this is an open system involving external arrivals and exits.              
\begin{figure}[!htbp]
	\centering
	\includegraphics[width=0.9 \columnwidth]{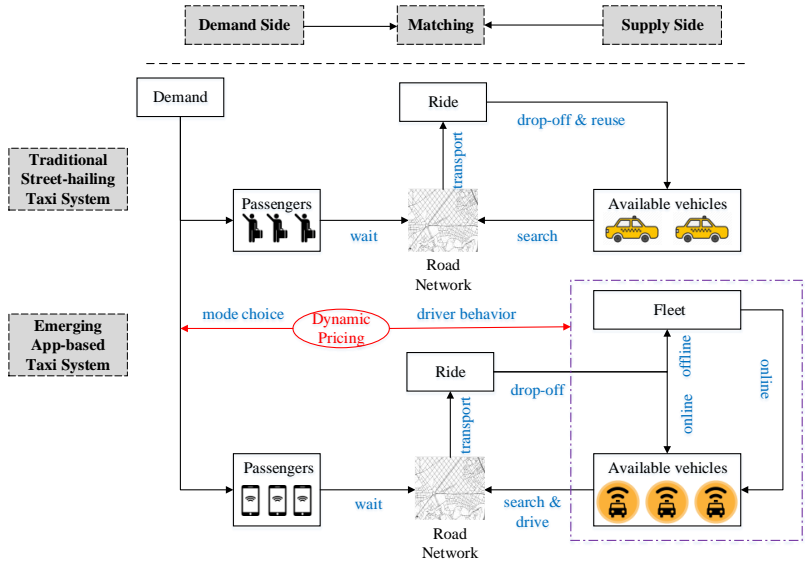}
	\caption{Competitive taxi market with both ATS and TTS}\label{fig:TTS_ATS_system}
\end{figure}

As mentioned before, the major behaviors of the taxi system are passenger-vehicle matching and occupied/empty vehicle movements over urban road network. Thus, the analyses should involve not only the taxi system itself but also the urban road network. It is well known that the both systems are spatially unbalanced, for instance downtown with high ride requests, but with slower ground traffic speed. We first divide the whole system (e.g. a city) into multiple subsystems (e.g. homogeneous spatial units). The empty vehicles can meet passengers in each subsystem, called taxi subsystem and then the matched pairs of passengers and vehicles (or occupied vehicles) travel within and across road subsystems. Therefore, it forms a system of systems $G(N,A)$, where $N$ is the combination of all divided subsystems (or spatial units), each of which operates one taxi and one road subsystem; and $A$ is the set of directed links indicating the connections across subsystems, consisting of $a_{ij}$. Moreover, the directed links are weighted with routing probabilities $p_{ij}$ to describe the routing choices by vehicles. We further classify into four routing probability matrices that all can be derived from our empirical datasets, depending on service types (ATS or TTS) and vehicle status (occupied or empty). Regarding each unique spatial unit, Fig. \ref{network_presentation} illustrate major taxi activities and segment based on vehicle status. One spatial unit generally receives two external arrivals of both vehicles and passengers (e.g. p1,p2, and p3). In specifics, external vehicle arrivals may originally generate within the spatial unit (e.g. e2) or transfer from neighboring spatial units, regardless of occupied (e.g. o6, o10, o12) and empty (e.g. e1, e2, e7) vehicles. The detail structures of taxi subsystems will be presented in the next section, addressing not only two external arrivals but also more complicated behaviors of dropoff followed by pickup, for instance o6 and e3. On the other hand, each road subsystem only describes the vehicle and taxicab movement over road network directed by the corresponding routing probabilities. In addition, we introduce a virtual node $N_0$ as the exit node from the system and describe the exiting vehicle flows. In summary, each taxi subsystem only addresses the matching dynamics between empty vehicles and passengers and then transmits matched pairs of passengers and vehicles to the road subsystems. Each road subsystem moves both occupied and empty vehicles among taxi subsystems. In the following two sections, we model each subsystem using queueing models.          

\begin{figure}[!htbp]
	\centering
	\includegraphics[width=0.85 \columnwidth]{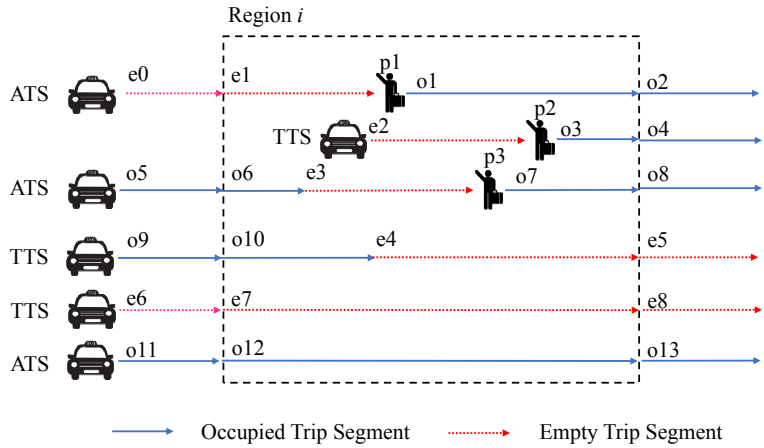}
	\caption{Illustration of major taxi activities in one spatial unit}
	\label{network_presentation}
\end{figure}

\subsection{Passenger-Vehicle Matching}
The taxi system, regardless of whether it is an ATS or TTS, requires matching passengers and vehicles. Standard approaches for modeling matching include nearest distance and Cobb-Douglas production function. However, the former is inappropriate, since it has been observed that even when drivers have perfect knowledge, they do not apply a nearest distance heuristic to find a passenger. On the other hand, it is typically very hard to calibrate Cobb-Douglas production function from available data. A more appropriate approach would be to use matching or assembly-like queues \cite{Harrison_assemble}. Here, passengers and vehicles are queued up in separate ``buffers" and are matched on a first-come-first-served (FCFS) basis. The arrival flow of passengers and vehicles is determined by a ``synchronized" stochastic process, defined as the minimum of the individual flows. Assuming that the individual arrival processes are Poisson processes and that the `service' (i.e. matching time) times are exponentially distributed, we model the matching process as a $SM/M/1$ assembly-like, synchronized queue. As noted above, we assume that matching is conducted on a FCFS basis, which is reasonable way in which passengers and vehicles are matched. The service time, on the other hand, models how quickly a vehicle can reach a passenger's location within a subsystem. This is critical for modeling the dynamics of ATS, in particular. 

Since the ATS and TTS coexist and compete in each subsystem, we introduce an $SM/M/1$ queue for ATS and TTS each and deploy a parallel layout together with demand splitting, as shown in Fig. \ref{fig:synchronization}. Given a taxi subsystem $i\in I$, the overall passenger arrivals to this spatial unit follow a Poisson process with rate $\lambda_{i}^P$. With Bernoulli splitting, the passengers are split into two Poisson processes with rates $\lambda_{i}^{P,ATS}=p_{i}^{ATS} \lambda_{i}^P$ and $\lambda_{i}^{P,TTS}=p_{i}^{TTS} \lambda_{i}^P$ with $p_{i}^{ATS}+p_{i}^{TTS}=1$. The available ATS vehicle arrival $\hat{\lambda}_{i}^{V,ATS}$ consists of two sources: (1) the Poisson process of newly joined vehicles (e.g. e2 in Fig. \ref{network_presentation}) with a rate $\lambda_{i}^{V,ATS}$; and (2) empty vehicles who are searching for passengers and successfully pick up in final, but originate from neighboring spatial units (e.g. e1 and e3  in Fig. \ref{network_presentation}), $F_{i,in}p_i^{p,ATS}$. The effective arrival rate of vehicle is shown in equation \ref{eq1}. The derivations of vehicle incoming flows $F_{i,in}$ will be shown in the next section on network flow balance, since they are based on departure flows from all other spatial units. Similarly, we can also obtain the effective TTS vehicle arrival rate in equation \ref{eq2}.

In addition, we derive service rate $\mu_i^{ATS}$ and $\mu_i^{TTS}$, directly from empirical observations on vehicle searching time, for instance, duration of processes e1, e2, and e3. Before figuring out the service rate measurements, we should clarify several key points. First, the $M/M/1$ queues for both service types are built at zone levels, other than taxi stands or points of interest. It may be related to zonal road network configurations and length but are less likely to be observed in reality. Second, the vehicle searching time are observable, only by counting empty trips that are fully or partially inside spatial unit $i$. The outside trip segments even for same vehicles are assumed to be not related to matching efficiency of the spatial unit $i$. Thus, under the $M/M/1$ modeling structure, we can derive service rate based on observed total system time (i.e. vehicle searching time from begins of passenger searching to pickups). In $M/M/1$, the system time follows exponential distribution, as shown in equation \ref{service_dist} and \ref{service_dist_ats}. Thus, the difference between service rate and arrival rate should be the mean system time as definitions of exponential distribution, as shown in equation \ref{mu_deriv} and \ref{mu_deriv_ats}.           
\begin{gather}
\hat{\lambda}_{i}^{v,ATS}\coloneqq \lambda_{i}^{v,ATS}+F_{i,in}p_i^{p,ATS}   \label{eq1}\\
\hat{\lambda}_{i}^{v,TTS}\coloneqq \lambda_{i}^{v,TTS}+F_{i,in}p_i^{p,TTS} \label{eq2} \\
w(t_{i}^{ATS})=\left(\mu_{i}^{ATS}-\lambda_{i}^{pv,ATS}\right)e^{-\left(\mu_{i}^{ATS}-\lambda_{i}^{pv,ATS}\right)} \label{service_dist}\\
w(t_{i}^{TTS})=\left(\mu_{i}^{TTS}-\lambda_{i}^{pv,TTS}\right)e^{-\left(\mu_{i}^{TTS}-\lambda_{i}^{pv,TTS}\right)} \label{service_dist_ats}\\
\mu_{i}^{ATS}=\lambda_{i}^{pv,ATS}+\hat{t_{i}}^{ATS} \label{mu_deriv}\\
\mu_{i}^{TTS}=\lambda_{i}^{pv,TTS}+\hat{t_{i}}^{TTS} \label{mu_deriv_ats}
\end{gather}
where, $w(\cdot)$ is the probability density function of random variable; $t_{i}^{*}$ is observed vehicle searching time (* can be either ATS or TTS); and $\hat{t_{i}^{*}}$ is empirical mean vehicle searching time (* can be either ATS or TTS). 
\begin{figure}[!htbp]
	\centering
	\includegraphics[width=0.9 \columnwidth]{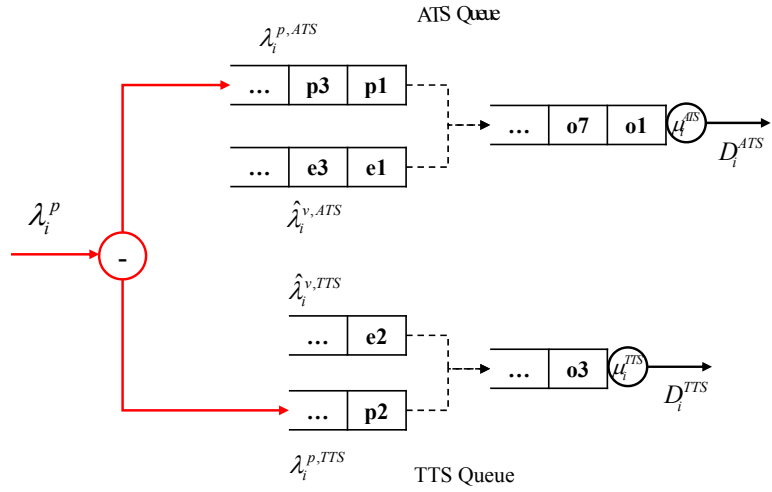}
	\caption{The synchronization process for passenger-vehicle matching at taxi queue $i$}\label{fig:synchronization}
\end{figure}

\subsection{Inclusions of Road Network Performance}
Since taxicabs transport passengers through the road network, there is a close interaction between the taxi- and urban road- systems. As mentioned before, we also split the urban road system into multiple homogeneous subsystems, each of which is modeled as a $\cdot/M/c$ queue, as shown in Fig. \ref{fig:road_server}, where $1<c<\infty$ represents the road capacity. Once a vehicle enters the road subsystem, it queues up and waits for available road space. The derivation of number of servers in each homogeneous road subsystem, $c$, is based on the idea of Macroscopic Fundamental Diagram (MFD) proposed and applied in recent years \cite{GEROLIMINIS2008759,RAMEZANI2017}. MFD models the relationship of traffic accumulation (or network density) and production (outgoing flows) and indicates a critical accumulation leading to a congested road network. The $c$ corresponds to the critical taxi accumulations. Since the both terms reveal the maximum number of vehicles can be processed without delays. On the other hand, the derivation of service rate at each server is similar as taxi queues in equations \ref{service_dist} to \ref{mu_deriv_ats}, by counting vehicle travel time in one specific spatial unit and computing based on exponential distribution of observed travel time.  

The last component of interest in the road queues is the effective arrival and departure flows. Since the road network does not differentiate service types and vehicle status. The effective arrivals should be a pooled flow from both the ATS and TTS containing two types of vehicle flows: (1) matched pairs (i.e. occupied vehicles transporting passengers to destinations) from taxi subsystem $i$, $D_{i}^{TTS}$ and $D_{i}^{ATS}$; and (2) remaining vehicle arrivals in $F_{i,in}$, who just driving through the spatial unit $i$, regardless of searching (e.g. e7 and e4 in Fig.\ref{network_presentation}) or transporting passengers (e.g. o12 and o10 in Fig.\ref{network_presentation}). The effective arrival process is in equation \ref{eq:road_arrival}. The detail analyses on the pooled flows will be presented in the next section on network flow balance. One more interesting point is about the departure flow of $M/M/c$. More complicated than occupied vehicle flow departure from $M/M/1$, the departure flow from road queue will have multiple vehicle status (occupied or empty) and service types (ATS or TTS). Considering different movement patterns, we further distribute departure flow depending on vehicle status and service types. Different type of vehicles are assigned with special routing probabilities for distribution over road network. Identification of vehicle types is primarily based on their portions in incoming flows of spatial units, which are consistent regardless of queue arrival and departure flows. 

\begin{figure}[!htbp]
	\centering
	\includegraphics[width=0.9 \columnwidth]{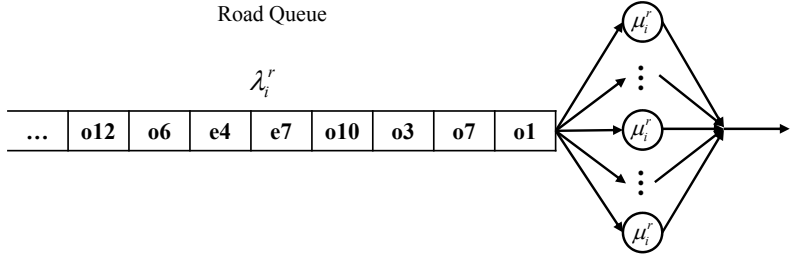}
	\caption{The queue for vehicle traveling through road queue $j$}\label{fig:road_server}
\end{figure}  
\begin{align}\label{eq:road_arrival}
\lambda_{i}^{r}\coloneqq D_{i}^{TTS}+D_{i}^{ATS}+\left(1-p_i^{p,ATS}-p_i^{p,TTS}\right) F_{i,in}
\end{align}  

Except for modal split between ATS and TTS, the both taxi queues interacts with each other in the ways of vehicle flow split and merges within each spatial unit, as shown in Fig.\ref{fig:subnetwork}. Once vehicle flows enter one spatial unit, it will split based on service types and vehicle status. For those vehicles who can pick up new passengers in this spatial unit, $F_{i,in} p_i^{p,ATS}$, they will form external vehicle arrival for ATS queue along with newly online vehicles $\lambda_{i}^{v,ATS}$ and yield a departure flow from ATS queue with rate of $D_i^{ATS}$. Similar split is applied for TTS queue and yields a departure flow for TTS queue with rate of $D_i^{TTS}$. The departure flows from both taxi queues (i.e. occupied vehicles with new pickups) will queue at road queue, along with remaining vehicle arrivals of incoming flow who do not pickup any new passengers. Note that the vehicle sequence of road queue shown in Fig.\ref{fig:subnetwork} is just an example. Following the properties of $M/M/1$ and $M/M/c$, we can also derive following equations:     
\begin{figure}[!htbp]
	\centering
	\includegraphics[width=1.2 \columnwidth]{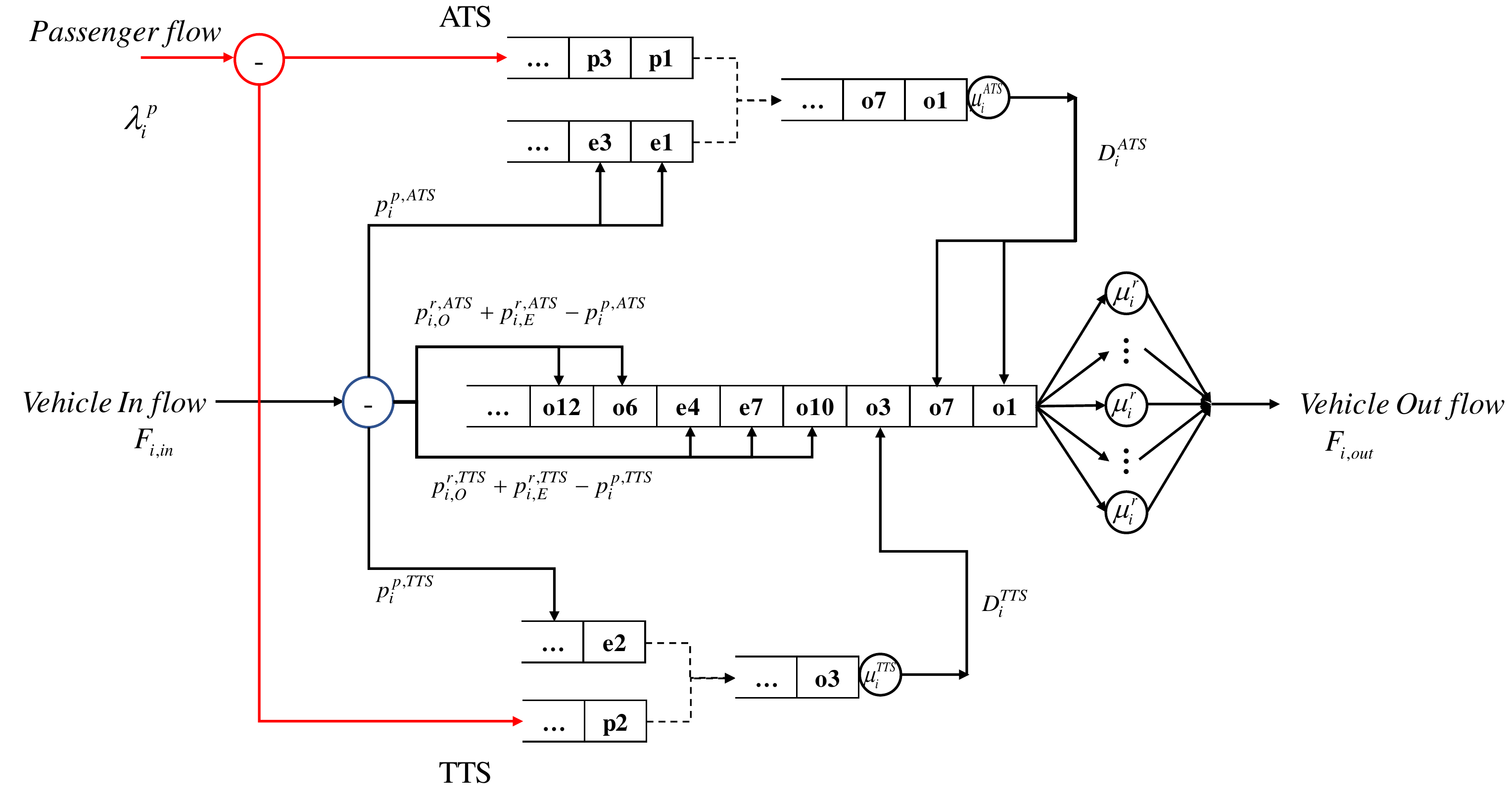}
	\caption{The subnetwork consisting of two taxi queues and one road queue at spatial unit $i$}\label{fig:subnetwork}
\end{figure}

\begin{gather}
\lambda_{i}^{pv,ATS}=D_i^{ATS}\\
\lambda_{i}^{pv,TTS}=D_i^{TTS}\\
F_{i,out}=\lambda_{i}^{r}\coloneqq D_{i}^{TTS}+D_{i}^{ATS}+\left(1-p_i^{p,ATS}-p_i^{p,TTS}\right) F_{i,in}
\end{gather} 

Beyond one unique spatial unit, the incoming and outgoing flows, $F_{i,in}$ and $F_{i,out}$, can be formulated with routing probabilities as follows. 
\begin{gather}
F_{i,in}= \sum_{j\in I}a_{ji} F_{j,out}\left( p_{ji}^{O,ATS}p_{j,O}^{r,ATS} + p_{ji}^{E,ATS}p_{j,E}^{r,ATS} +p_{ji}^{O,TTS}p_{j,O}^{r,TTS}+ p_{ji}^{E,TTS}p_{j,E}^{r,TTS}\right) \label{in_flows}
\end{gather} 



\section{Stationary State of Urban Taxi Queueing Network}
It has been demonstrated that the joint distribution of two individual flows in the synchronization process can never reach the steady state, regardless of arrival and service rates \cite{Harrison_assemble}. Thus, inclusions of synchronization processes make the queueing network complicated and unstable. Therefore, this section starts from the transient and asymptotic behaviors of synchronized flows and discusses the possible approximations for synchronization processes with common queues enabling us to obtain stationary distributions for the proposed queueing network.  

\subsection{Instability of Synchronized Flows and Approximations}
The proposed $SM/M/1$ queue can be presented as a two stage process by introducing a virtual buffer for synchronized pairs of passengers and vehicles, consisting of (1) virtually synchronizing two distinct flows of passengers $\lambda_{P}$ and vehicles $\lambda_{V}$ and immediately queueing up at a virtual buffer; and (2) serving the queued synchronized flow with a service rate of $\mu$, as shown in the left side of Fig.\ref{fig:approximation}. Let $X_{t}^P$ and $X_{t}^V$ be the number of passengers and vehicles in corresponding buffers at time $t$. $S_t=\min(X_{t}^P,X_{t}^V)$ is number of the synchronized pairs of passengers and vehicles at the virtual buffer. Prior studies \cite{Alexander2010,Prabhakar_synchro} have explored the transient and asymptotic behaviors of $S_t$ and proved that the synchronized flow $S_t$ converges to a Poisson process both analytically and numerically. The literature further yields a $M/M/1$ approximation for $SM/M/1$, with the arrival rate $\min(\lambda^{P},\lambda^{V})$ and service rate $\mu>\min(\lambda^{P},\lambda^{V})$ shown in the right side of Fig.\ref{fig:approximation}. The approximation has been validated by simulation demonstrating small differences in system performance metrics between the simulation and approximation, less than 1\% in most cases and no more than 3\% under the condition of heavy traffic (i.e. large $\min(\lambda^{P},\lambda^{V})/\mu$), as well as equal arrival rates of two distinct flows \cite{Alexander2010}. For completeness, we present several key properties of $S_t$. The details in derivations and proofs can be found in \cite{Alexander2010,Prabhakar_synchro}.          
\begin{figure}[!htbp]
	\centering
	\includegraphics[width=0.75 \columnwidth]{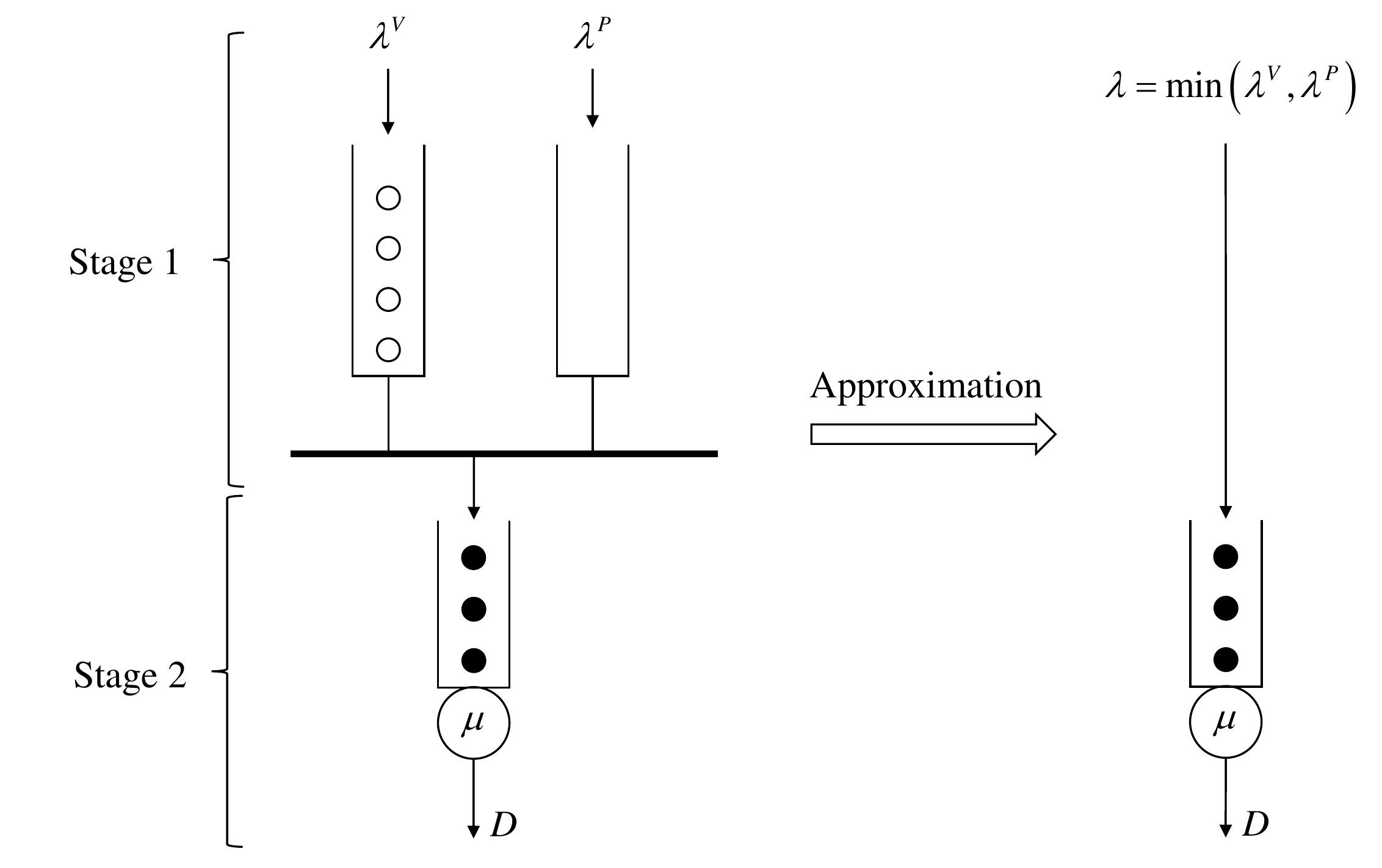}
	\caption{The general case of the $SM/M/1$ approximation with $M/M/1$}\label{fig:approximation}
\end{figure} 

Assume independent Possion arrivals of both passengers and vehicles. Observe that, 
\begin{align*}
P(S_t=s)&=P(S_t=s\ \text{and}\ X_t^{P}\geq X_t^{V}) + P(S_t=s\ \text{and}\ X_t^{P}< X_t^{V})	\\
&=\sum_{x=s}^{\infty}e^{-\lambda_{P}t}\frac{(\lambda^{P}t)^x}{x!}e^{-\lambda^{V}t}\frac{(\lambda^{V}t)^s}{s!}+\sum_{y=s+1}^{\infty}e^{-\lambda_{V}t}\frac{(\lambda^{V}t)^y}{y!}e^{-\lambda^{P}t}\frac{(\lambda^{P}t)^s}{s!}
\end{align*}
. Obviously, this is not a Poisson process. The approximation idea is that as $t\rightarrow\infty$, $S_t$ may approach a Poisson process so that asymptotically we have $M/M/1$-like behavior. Investigating the mean and variance of $S_t$, it is well-known that,      
\begin{itemize}
	\item  The pair process $S_t$ has the following asymptotic properties: 
	\begin{equation*}
	\begin{cases}
	\lim_{t\to\infty} P(S_t=X_t^P)=1 & \text{if}\ \lambda^{P}<\lambda^{V}\\
	\lim_{t\to\infty}P(S_t=X_t^V)=1 & \text{if}\ \lambda^{P}>\lambda^{V}\\
	\lim_{t\to\infty}P(S_t=X_t^V)=\lim_{t\to\infty}P(S_t=X_t^P)=\frac{1}{2} & \text{if}\ \lambda^{P}=\lambda^{V}\\
	\end{cases}
	\end{equation*}
	
	\item 	The long-run averages of the mean and variance of the synchronization process $S_t$ are given by
	\begin{equation*}
	\begin{cases}
	\lim_{t\to\infty}\frac{E[S_t]}{t}=\lim_{t\to\infty}\frac{V[S_t]}{t}=\min(\lambda^{P},\lambda^{V}) & \text{if}\ \lambda^{P}\neq\lambda^{V}\\
	\lim_{t\to\infty}\frac{E[S_t]}{t}=\lambda^{P}\ \text{and} \  \lim_{t\to\infty}\frac{V[S_t]}{t}=\lambda^{P}\left(1-\frac{1}{\pi}\right) & \text{if}\ \lambda^{P}=\lambda^{V}\\
	\end{cases}
	\end{equation*} 
\end{itemize} 

\subsection{Network Flow Balance} 
It is straightforward that the queueing network can be presented by Jackson network consisting of $|I|$ $M/M/1$ queues for ATS passenger-vehicle matching, $|I|$ $M/M/1$ queues for TTS passenger-vehicle matching, and $|I|$ standard $M/M/c$ queues for road subsystems. The $M/M/1$ approximations reduce the two individual flows into a minimum one, which cannot yield a reliable track of demand and supply unbalance. However, recall the asymptotic behaviors of $S_t$, as $t\rightarrow\infty$, the synchronization process can match almost all arrivals in the flow with lower rate. 

For each taxi subsystem $i\in I$,
\begin{align}
\lambda_{i}^{pv,TTS}=D_i^{TTS}&=\min \left(\hat{\lambda}_i^{v,TTS},\lambda_{i}^{p,TTS}\right)=\min \left(\lambda_{i}^{v,TTS}+F_{i,in}p_i^{p,TTS}, p_i^{TTS} \lambda_{i}^{p}\right)\label{eq:se}\\
\lambda_{i}^{pv,ATS}=D_i^{ATS}&=\min \left(\hat{\lambda}_i^{v,ATS},\lambda_{i}^{p,ATS}\right)=\min \left(\lambda_{i}^{v,ATS}+F_{i,in}p_i^{p,ATS}, p_i^{ATS} \lambda_{i}^{p}\right)
\end{align}
For each road subsystem $i \in I$,
\begin{align}
F_{i,out}=\lambda_{i}^{r}&= D_{i}^{TTS}+D_{i}^{ATS}+\left(1-p_i^{p,ATS}-p_i^{p,TTS}\right) F_{i,in}\nonumber\\
&=\min \left(\lambda_{i}^{v,TTS}+F_{i,in}p_i^{p,TTS}, p_i^{TTS} \lambda_{i}^{p}\right)+\min \left(\lambda_{i}^{v,ATS}+F_{i,in}p_i^{p,ATS}, p_i^{ATS} \lambda_{i}^{p}\right)\nonumber \\
&+\left(1-p_i^{p,ATS}-p_i^{p,TTS}\right) F_{i,in} \label{flow_out}
\end{align}

Substitute equation \ref{flow_out} into \ref{in_flows},
\begin{align}
F_{i,in}&= \sum_{j\in I}a_{ji} ( \min (\lambda_{j}^{v,TTS}+F_{j,in}p_j^{p,TTS}, p_j^{TTS} \lambda_{j}^{p})+\min (\lambda_{j}^{v,ATS}+F_{j,in}p_j^{p,ATS}, p_j^{ATS} \lambda_{j}^{p}) \nonumber\\
&+(1-p_j^{p,ATS}-p_j^{p,TTS}) F_{j,in} )( p_{ji}^{O,ATS}p_{j,O}^{r,ATS} + p_{ji}^{E,ATS}p_{j,E}^{r,ATS} +p_{ji}^{O,TTS}p_{j,O}^{r,TTS}+ p_{ji}^{E,TTS}p_{j,E}^{r,TTS}) \label{eq:system_single}
\end{align}

The above equation \ref{eq:system_single} is available for all $|I|$ spatial units, which forms a system of $|I|$ equations. Solving such equation system leads to incoming flows, as well as effective arrival rate for both taxi and road queues. However, we introduce following inequalities to the equation system and convert the problem into a linear programming one, due to the existence of min sets. These inequalities are strictly consistent with min sets in equation \ref{eq:system_single}.     
\begin{align}
&\lambda_{i}^{pv,TTS}\geq \lambda_{i}^{v,TTS}+F_{i,in}p_i^{p,TTS} \label{eq:10}\\
& \lambda_{i}^{pv,TTS}\geq p_i^{TTS} \lambda_{i}^{p}\\
&\lambda_{i}^{pv,TTS}\leq \lambda_{i}^{v,TTS}+F_{i,in}p_i^{p,TTS}\\
& \lambda_{i}^{pv,TTS}\leq p_i^{TTS} \lambda_{i}^{p}\\
&\lambda_{i}^{pv,ATS}\geq \lambda_{i}^{v,ATS}+F_{i,in}p_i^{p,ATS}\\
&\lambda_{i}^{pv,ATS}\geq p_i^{ATS} \lambda_{i}^{p} \\
&\lambda_{i}^{pv,ATS}\leq \lambda_{i}^{v,ATS}+F_{i,in}p_i^{p,ATS}\\
&\lambda_{i}^{pv,ATS}\leq p_i^{ATS} \lambda_{i}^{p} \label{eq:13}
\end{align} 

Therefore, we can convert the effective arrival rate computations into a linear programming problem that can be solved in polynomial time and maintain same solutions:
\begin{align}\label{eq:linear}
&\min_{\{\lambda_{i}^{pv,TTS},\lambda_{i}^{pv,ATS}, F_{i,in}\}_{i\in I}} \sum_{i\in I} \left(\lambda_{i}^{pv,TTS}+\lambda_{i}^{pv,ATS}\right) \\
\text{Subject to}& \nonumber\\
&\text{equations}\ \ref{eq:10} \text{to} \ref{eq:13}\nonumber\text{, for every}\ i\in I\nonumber\\
&\text{equations}\ \ref{eq:system_single}\nonumber\text{, for every}\ i\in I\nonumber\\
&  \lambda_{i}^{pv,TTS}\geq 0,\lambda_{i}^{pv,ATS}\geq 0, F_{i,in}\geq 0 \text{, for every}\ i\in I\nonumber
\end{align}

\subsection{Stationary State Distribution of Queueing Network}
Recall the subnetwork in Fig.\ref{fig:subnetwork}, the taxi system in each spatial unit behaves as an independent system of one or multiple (depending on the scale of spatial units) set of two parallel $M/M/1$ queues and one $M/M/c$ queue. We can prove the existence of a steady-state distribution for the subnetwork and derive.   

\begin{theorem}
	If we have $\lambda_i^{pv,TTS}<\mu_{i}^{TTS}$, $\lambda_i^{pv,ATS}<\mu_{i}^{ATS}$, and $\lambda_{i}^r<c\mu_{i}^{r}$. Further, for the state $X=\left\{x_i^{TTS}, x_i^{ATS}, x_i^{O,TTS}, x_i^{E,TTS}, x_i^{O,ATS}, x_i^{E,ATS}\right\}$, the steady state probability is given by:
	\begin{equation}\label{eq:part_state}
	\pi(X)=\begin{cases}
	\left(\frac{\lambda_{i}^{pv,TTS}}{\mu_{i}^{TTS}}\right)^{x_i^{TTS}} \left(\frac{\lambda_{i}^{pv,ATS}}{\mu_{i}^{ATS}}\right)^{x_i^{ATS}} \frac{1}{x_i!}\left(\frac{\lambda_{i}^r}{\mu_{i}^{r}} \right)^{x_i} \pi(\phi) & \text{if}\ 0\leq x_i<c_i \\
	\left(\frac{\lambda_{i}^{pv,TTS}}{\mu_{i}^{TTS}}\right)^{x_i^{TTS}}\left(\frac{\lambda_{i}^{pv,ATS}}{\mu_{i}^{ATS}}\right)^{x_i^{ATS}} \frac{1}{c_i^{x_i-c_i}c_i!}\left(\frac{\lambda_{i}^r}{\mu_{i}^r} \right)^{x_i} \pi(\phi) & \text{if}\ x_i\geq c_i
	\end{cases}
	\end{equation}
	where, $x_i=x_i^{O,TTS}+x_i^{E,TTS}+x_i^{O,ATS}+x_i^{E,ATS}$,\\ $\pi(\phi)=\left(1-\frac{\lambda_{i}^{pv,TTS}}{\mu_{i}^{TTS}}\right)\left(1-\frac{\lambda_{i}^{pv,ATS}}{\mu_{i}^{ATS}}\right)\left(\frac{\mu_{i}}{(c_i-1)!(c_i\mu_{i}^r-\lambda_{i}^r)}\left( \frac{\lambda_{i}^r}{\mu_{i}^r}\right)^{c_i}+\sum_{n=0}^{c_i-1}\frac{1}{n!}\left(\frac{\lambda_{i}^r}{\mu_{i}^r}^n\right)\right)^{-1}$
\end{theorem}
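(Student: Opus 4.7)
The statement asserts a product-form stationary distribution for a feedforward subnetwork of three queues: two parallel $M/M/1$ queues (ATS and TTS passenger-vehicle matching) whose outputs, together with pass-through traffic, feed a single $M/M/c$ road queue. My plan is to recognize this as an open Jackson network of the simplest possible topology---purely feedforward within the subnetwork---and apply the standard product-form theorem (Jackson/Burke). The three hypotheses $\lambda_i^{pv,TTS}<\mu_i^{TTS}$, $\lambda_i^{pv,ATS}<\mu_i^{ATS}$, $\lambda_i^r<c_i\mu_i^r$ are exactly the stability conditions required for each queue to admit a geometric (resp.\ Erlang-C) stationary law.

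\textbf{Step 1: Reduce $SM/M/1$ to $M/M/1$.} Using the asymptotic approximation of the synchronization process recalled in Section~3.1, each $SM/M/1$ queue for ATS and TTS can be replaced by an $M/M/1$ queue whose arrival rate is $\lambda_i^{pv,*}=\min(\hat{\lambda}_i^{v,*},\lambda_i^{p,*})$ (for $*\in\{TTS,ATS\}$) computed from the network-flow balance program \eqref{eq:linear}. After this reduction the inputs to both taxi queues are (independent) Poisson.

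\textbf{Step 2: Poisson flow into the road queue.} By Burke's theorem, the departure process of a stable $M/M/1$ queue in equilibrium is Poisson with rate equal to its arrival rate. Thus the departure flows $D_i^{TTS}$ and $D_i^{ATS}$ are Poisson with rates $\lambda_i^{pv,TTS}$ and $\lambda_i^{pv,ATS}$, and the pass-through component $(1-p_i^{p,ATS}-p_i^{p,TTS})F_{i,in}$ is Poisson by assumption. Independent Poisson superposition yields a Poisson arrival process of rate $\lambda_i^r$ into the $M/M/c$ road queue, which is then stable by the third hypothesis.

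\textbf{Step 3: Product form and normalization.} Because the subnetwork is feedforward (no flow returns from road queue into the taxi queues within the same spatial unit), the joint stationary distribution of the three queue lengths factorizes. Each taxi queue contributes the standard $M/M/1$ geometric factor $(1-\rho_i^{*})(\rho_i^{*})^{x_i^{*}}$ with $\rho_i^{*}=\lambda_i^{pv,*}/\mu_i^{*}$, while the road queue contributes the classical $M/M/c$ factor, which is $\tfrac{1}{x_i!}(\lambda_i^r/\mu_i^r)^{x_i}$ for $x_i<c_i$ and $\tfrac{1}{c_i!\,c_i^{x_i-c_i}}(\lambda_i^r/\mu_i^r)^{x_i}$ for $x_i\geq c_i$. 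Multiplying the three factors reproduces \eqref{eq:part_state}; the constant $\pi(\phi)$ is just the product of the three normalizers, which after collecting terms gives the stated closed-form expression. Note that the distribution depends on the road state only through the aggregate count $x_i$, consistent with the fact that the $M/M/c$ server does not distinguish vehicle types---the split into $(x_i^{O,TTS},x_i^{E,TTS},x_i^{O,ATS},x_i^{E,ATS})$ is governed by the incoming mix and does not enter the balance equations.

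\textbf{Main obstacle.} The one nontrivial point is justifying Burke's theorem in the present setting: the synchronized departure process from an $SM/M/1$ queue is not exactly Poisson, so strict product form holds only for the $M/M/1$ surrogate of Section~3.1. I would therefore state the theorem as exact under the $M/M/1$ approximation and defer a sensitivity discussion (supported by the simulation bounds of $<3\%$ cited earlier from \cite{Alexander2010}) to a remark, rather than attempting to quantify the approximation error inside the proof itself.
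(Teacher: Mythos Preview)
Your proposal is correct, but the paper proves the theorem by a different (and more general) route. You argue via Burke's theorem and the feedforward topology of the single-unit subnetwork: once the two taxi queues are replaced by their $M/M/1$ surrogates, their equilibrium departure processes are Poisson and independent of the current queue lengths, so the downstream $M/M/c$ road queue sees a Poisson input and the three marginal stationary laws multiply. This is clean and entirely valid for a tandem/feedforward block.

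The paper instead invokes Kelly's reversed-process criterion (Theorem~1.13 in \cite{kelly}): it \emph{guesses} the product-form $\pi$, posits transition rates $q'(n,m)$ for the time-reversed chain, and verifies $\pi(m)q(m,n)=\pi(n)q'(n,m)$ across an enumeration of all possible one-step transitions (external arrival to a taxi queue; taxi$\to$road; road$\to$road; road$\to$taxi; road$\to$exit). This is more laborious for the subnetwork in isolation, but it is the method that survives once feedback is present: in Theorem~2 the full network has road$\to$taxi routing, so Burke's theorem no longer applies station-by-station and your feedforward argument would not extend. The paper's choice is therefore dictated by the need to reuse the same machinery for Theorem~2 rather than by any deficiency in your approach for Theorem~1 alone. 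If you keep your proof, you should flag explicitly that it does not generalize to Theorem~2 and that the Jackson/Kelly verification is required there.
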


\begin{proof}
	Following the theorem 1.13 in \cite{kelly}, we prove the stationary state distribution for the subnetwork as follows: Let $X(t)$ be a stationary Markov process with transition rates $q(m,n)$, where, $m,n$ are two system states. If we can find a collection of numbers $q^{\prime}(m,n)$, such that $q^{\prime}(m)=q(m)$ and a collection of positive numbers $\pi(m)$, summing to unity, such that $\pi(m)q(m,n)=\pi(n) q^{\prime}(n,m)$, then $q^{\prime}(n,m)$ are the transition rates of the reversed process $X(\tau-t)$ and $\pi(m)$ is the equilibrium distribution of both processes.
	
	First, given the state $m\vcentcolon=\left\{x_i^{TTS}, x_i^{ATS}, x_i^{O,TTS}, x_i^{E,TTS}, x_i^{O,ATS}, x_i^{E,ATS}\right\}$, we can enumerate the system states and define the rates of reversed process: (1)The one TTS (similar for ATS) arrival at taxi queue $i$ yields the state $n\vcentcolon=\left\{x_i^{TTS}+1, x_i^{ATS}, x_i^{O,TTS}, x_i^{E,TTS}, x_i^{O,ATS}, x_i^{E,ATS}\right\}$;   
	$$q(m,n)=\lambda_{i}^{pv,TTS}\ ,  q^{\prime}(n,m)=\mu_{i}^{TTS}.$$
	
	(2) One occupied vehicle departing from TTS $i$ (similar for ATS) and arriving at corresponding road queue yield the state $n\vcentcolon=\left\{x_i^{TTS}-1, x_i^{ATS}, x_i^{O,TTS}+1, x_i^{E,TTS}, x_i^{O,ATS}, x_i^{E,ATS}\right\}$;
	$$q(m,n)=\mu_{i}^{TTS}\ , \ q^{\prime}(n,m)= (x_i+1)\lambda_{i}^{pv,TTS} \mu_{i}^r/\lambda_{i}^r\ \text{if}\ 0\leq x_i<c_i\ \text{or}\ c_i\lambda_{i}^{pv,TTS}\mu_{i}^r/\lambda_{i}^r\ \text{if}\ x_i\geq c_i.$$ 
	
	(3) One vehicle departing from one road queue $i$ and arriving at another road queue $j$ yields the state $n\vcentcolon=\left\{x_i-1, x_j+1\right\}$;
	\begin{align*}
	&q(m,n)=x_i\mu_{i}^r p_{ij}\ ,q^{\prime}(n,m)=\lambda_{i}^r p_{ij}(x_j+1) \mu_{j}^r/\lambda_{j}^r\    \text{if}\ 0\leq x_i<c_i\ \text{and}\ 0\leq x_j< c_j\ \text{or,}\\
	&q(m,n)=x_i\mu_{i}^r p_{ij}\ ,q^{\prime}(n,m)=\lambda_{i}^r p_{ij}c_j \mu_{j}^r/\lambda_{j}^r\    \text{if}\ 0\leq x_i<c_i\ \text{and}\ x_j\geq c_j \ \text{or,}\\ 
	&q(m,n)= c_i\mu_{i}^r p_{ij}\ ,q^{\prime}(n,m)=\lambda_{i}^r p_{ij} (x_j+1) \mu_{j}^r/\lambda_{j}^r\  \text{if}\ x_i\geq c_i\ \text{and}\ 0\leq x_j< c_j\ \text{or,}\\
	&q(m,n)= c_i\mu_{i}^r p_{ij}\ ,q^{\prime}(n,m)=\lambda_{i}^R p_{ij}c_j \mu_{j}^r/\lambda_{j}^r\ \text{if}\ x_i\geq c_i\ \text{and}\ x_j\geq c_j.
	\end{align*} 
	
	(4) One vehicle departing from one road queue $j$ and arriving at one TTS (similar for ATS) queue $i$ yields the state $n\vcentcolon=\left\{x_i^{TTS}+1, x_i^{ATS}, x_j-1\right\}$;  
	\begin{align*}
	&q(m,n)= x_j\mu_{j}^r p_{ji}\ ,q^{\prime}(n,m)=\lambda_{j}^r p_{ji} \mu_{i}^{TTS}/\lambda_{i}^{pv,TTS}\    \text{if}\ 0\leq x_j<c_j\  \text{or, }\\ 
	& q(m,n)= c_j\mu_{j}^r p_{ji}\ ,q^{\prime}(n,m)=\lambda_{j}^r p_{ji} \mu_{i}^{TTS}/\lambda_{i}^{pv,TTS}\ \text{if}\ x_j\geq c_j.
	\end{align*} 
	
	(5) One vehicle getting destination and departing the system immediately after road queue $i$ yields the state $n\vcentcolon=\left\{x_i^{TTS}, x_i^{ATS}, x_i-1\right\}$;
	\begin{align*}
	&q(m,n)= x_i\mu_{i}^r p_{i0}\ ,q^{\prime}(n,m)=\lambda_{i}^r p_{i0}\    \text{if}\ 0\leq x_i<c_i\  \text{or,}\\ &q(m,n)= c_i\mu_{i}^r\ ,q^{\prime}(n,m)=\lambda_{i}^r p_{i0}\ \text{if}\ x_i\geq c_i.
	\end{align*} 
	With the above 5 state transitions, it can now be easily checked that for any two states $i,j$, the equation \ref{eq:part_state} is satisfied. 
\end{proof}

\begin{theorem}
	If for every taxi or road subsystem $i\in I$, we have $\lambda_i^{pv,TTS}<\mu_{i}^{TTS}$, $\lambda_i^{pv,ATS}<\mu_{i}^{ATS}$, and $\lambda_{i}^r<c_i\mu_{i}^r$. Further for the state\\ $X=\left\{x_i^{TTS},x_i^{ATS}, x_i^{O,TTS},x_i^{E,TTS},x_i^{O,ATS},x_i^{E,ATS}\right\}_{i=1}^{|I|}$, the steady state probability is given by:
	\begin{equation}
	\pi(X)=\begin{cases}
	\prod_{i\in I}\left(\frac{\lambda_{i}^{pv,TTS}}{\mu_{i}^{TTS}}\right)^{x_i^{TTS}} \left(\frac{\lambda_{i}^{pv,ATS}}{\mu_{i}^{ATS}}\right)^{x_i^{ATS}}\prod_{i\in I} \frac{1}{x_i!}\left(\frac{\lambda_{i}^r}{\mu_{i}^r} \right)^{x_i} \pi(\phi) & \text{if}\ 0\leq x_i<c_i \\
	\prod_{i\in I}\left(\frac{\lambda_{i}^{pv,TTS}}{\mu_{i}^{TTS}}\right)^{x_i^{TTS}}\left(\frac{\lambda_{i}^{pv,ATS}}{\mu_{i}^{ATS}}\right)^{x_i^{ATS}}\prod_{i\in I} \frac{1}{c_i^{x_i-c_i}c_i!}\left(\frac{\lambda_{i}^r}{\mu_{i}^r} \right)^{x_i} \pi(\phi) & \text{if}\ x_i\geq c_i
	\end{cases}
	\end{equation}
	where, $x_i=x_i^{O,TTS}+x_i^{E,TTS}+x_i^{O,ATS}+x_i^{E,ATS}$,\\ $\pi(\phi)=\prod_{i\in I}\left(1-\frac{\lambda_{i}^{pv,TTS}}{\mu_{i}^{TTS}}\right)\left(1-\frac{\lambda_{i}^{pv,ATS}}{\mu_{i}^{ATS}}\right)\prod_{i\in I}\left(\frac{\mu_{i}^r}{(c_i-1)!(c_i\mu_{i}^r-\lambda_{i}^r)}\left( \frac{\lambda_{i}^r}{\mu_{i}^r}\right)^{c_i}+\sum_{n=0}^{c_i-1}\frac{1}{n!}\left(\frac{\lambda_{i}^r}{\mu_{i}^r}^n\right)\right)^{-1}$
\end{theorem}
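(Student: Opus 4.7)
The plan is to extend the single-subnetwork argument of Theorem 1 to the full $|I|$-subnetwork system by invoking Kelly's Theorem 1.13 on the enlarged state space and verifying the detailed balance conditions for the candidate product-form distribution. Because the proposed $\pi(X)$ factorizes as $\prod_{i\in I}\pi_i(X_i)$, where $\pi_i$ is precisely the single-subnetwork distribution from Theorem 1, it suffices to exhibit reversed-process rates $q'(n,m)$ with $q'(m)=q(m)$ and check, for every pair of states $m,n$ differing by a single admissible event, that $\pi(m)q(m,n)=\pi(n)q'(n,m)$.

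First, I would re-enumerate the same five classes of transitions used in the proof of Theorem 1, but now indexed by spatial unit, and observe that intra-subsystem events (passenger-vehicle matching at either ATS or TTS queue, occupied vehicle entering the local road queue, and vehicle exit from the system) are handled verbatim. Under the factorization, $\pi(n)/\pi(m)=\pi_i(n_i)/\pi_i(m_i)$ for these events, so the local computations performed inside Theorem 1 transfer without change. The substantive new step is the inter-subsystem transition in which a vehicle departs from road queue $i$ and re-enters either road queue $j$ or a matching queue at $j$ (a case already partially treated in item (3) of Theorem 1's proof, but now spanning distinct subnetworks). For such events the ratio $\pi(n)/\pi(m)$ splits as a product of two local ratios, one decrement at $i$ and one increment at $j$, and the detailed balance equation accordingly decomposes into two equations of the same algebraic form already verified in Theorem 1, coupled only through the routing probabilities $p_{ij}^{\cdot,*}$.

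Next, I would show that the flow-balance system of Section 3.2, in particular equation~\eqref{eq:system_single} together with the identities $\lambda_i^{pv,*}=D_i^{*}$ and the definition of $\lambda_i^r$, are exactly the \emph{traffic equations} needed to close the detailed balance computation across subsystems. Once the linear program~\eqref{eq:linear} is solved, the quantities $\lambda_i^{pv,TTS}$, $\lambda_i^{pv,ATS}$, and $\lambda_i^r$ are consistent constants, and the quasi-reversibility of each $M/M/1$ and $M/M/c$ component (already exploited in Theorem 1) propagates to the joint process. Summing or multiplying the per-unit normalization constants then produces the stated $\pi(\phi)$, and the stability hypotheses $\lambda_i^{pv,TTS}<\mu_i^{TTS}$, $\lambda_i^{pv,ATS}<\mu_i^{ATS}$, $\lambda_i^r<c_i\mu_i^r$ guarantee that the product is summable and hence a bona fide probability distribution.

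The main obstacle, inherited from Theorem 1, is that the $SM/M/1$ synchronization queue is not truly Markov on its buffer contents and admits no proper stationary law, so the argument only operates under the $M/M/1$ approximation justified in Section 3.1 (with arrival rate $\min(\hat\lambda_i^{v,*},\lambda_i^{p,*})=\lambda_i^{pv,*}$). A secondary subtlety is that the effective arrival rates $\hat\lambda_i^{v,*}$ depend on $F_{i,in}$, which itself depends on the departure behaviour of every other subsystem through the routing matrices; this feedback would normally break the product-form structure, but once \eqref{eq:system_single} is solved the rates are frozen, and the remaining verification becomes a standard Jackson-network calculation. I would therefore conclude by noting that, modulo the synchronization approximation, the result is the natural network analogue of Theorem 1 and follows by replicating its five-case state-transition argument across all $i\in I$ simultaneously.
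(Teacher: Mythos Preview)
Your proposal is correct and follows essentially the same approach as the paper, which merely remarks that Theorem~1's five-case state-transition argument extends to the full network because routing is governed by a fixed probability matrix (Bernoulli splitting), so that the product form is the classical Jackson-network result and the details are omitted. Your sketch is in fact more explicit than the paper's own treatment, correctly isolating the inter-subsystem transitions as the only genuinely new ingredient and identifying the flow-balance system~\eqref{eq:system_single} as the traffic equations that close the Kelly detailed-balance verification.
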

The proof for theorem 1 is developed for the subsystem of queueing network, consisting of state transitions between taxi and road queues within one spatial unit. Then it can be easier to extend the proof to the whole queuing network, since the routing process over network is based on a fixed probability matrix or a Bernoulli splitting process. It is straightforward that the stationary state distribution of the proposed queueing network is the product of the stationary state distribution of the subsystem. This is also one classical proof in the literature, thus not presented here. 

\subsection{Performance Metrics} 
The $M/M/1$ and $M/M/c$ queues have well-defined performance metrics under stationary distribution. For example, we have the queue server utilization rate $\rho_i^{ATS}, \rho_i^{TTS}$, the expected number of vehicles at one queue $L_i^{ATS}, L_i^{TTS}$, expected number of waiting passenger-vehicle pairs $L_i^{q,ATS}, L_i^{q,TTS}$, expected sojourn time $W_i^{ATS}, W_i^{TTS}$, and expected waiting time in queue $W_i^{q,ATS}, W_i^{q,TTS}$ as follows:
\begin{align*}
&\rho_i^{ATS}=\frac{\lambda_i^{pv,ATS}}{\mu_i^{ATS}}\ , \rho_i^{TTS}=\frac{\lambda_i^{pv,TTS}}{\mu_i^{TTS}}\\
&L_i^{ATS}=\frac{\lambda_i^{pv,ATS}}{\mu_i^{ATS}-\lambda_i^{pv,ATS}} \ ,L_i^{TTS}=\frac{\lambda_i^{pv,TTS}}{\mu_i^{TTS}-\lambda_i^{pv,TTS}} \\
&L_i^{q,ATS}=\frac{(\lambda_i^{pv,ATS})^2}{\mu_i^{ATS}(\mu_i^{ATS}-\lambda_i^{pv,ATS})} \ , L_i^{q,TTS}=\frac{(\lambda_i^{pv,TTS})^2}{\mu_i^{TTS}(\mu_i^{TTS}-\lambda_i^{pv,TTS})} \\
&W_i^{ATS}=\frac{1}{\mu_i^{ATS}-\lambda_i^{pv,ATS}}\ , W_i^{TTS}=\frac{1}{\mu_i^{TTS}-\lambda_i^{pv,TTS}}  \\ &W_i^{q,ATS}=\frac{\lambda_i^{pv,ATS}}{\mu_i^{ATS}(\mu_i^{ATS}-\lambda_i^{pv,ATS})} \ , W_i^{q,TTS}=\frac{\lambda_i^{pv,TTS}}{\mu_i^{TTS}(\mu_i^{TTS}-\lambda_i^{pv,TTS})}
\end{align*}  
For each road queue $i\in I$, given $\lambda_{i}^r$ and $\mu_{i}^r$, we can also derive similar system performance metrics as follows:  

\begin{align*} 
&\rho_i^r=\frac{\lambda_{i}^r}{c_i\mu_{i}^r} \ ,
L_i^{r}=c_i \rho_i^r + \left(\frac{(c_i \rho_i^r)^{c_i}\rho_i^r}{{c_i}!(1-\rho_i^r)^2}\right)p_i(0) \ ,
L_i^{q,r}=\left(\frac{(c_i \rho_i^r)^{c_i}\rho_i^r}{{c_i}!(1-\rho_i^r)^2}\right)p_i(0) \\
&W_i^{r}=\frac{1}{\mu_{i}^r}+ \left(\frac{(c_i \rho_i^r)^{c_i}\rho_i^r}{c_i!c_i\mu_{i}^r(1-\rho_i^r)^2}\right)p_i(0) \ ,
W_i^{q,r}=\left(\frac{(c_i \rho_i^r)^{c_i}\rho_i^k}{c_i!c_i\mu_{i}^r(1-\rho_i^r)^2}\right)p_i(0)
\end{align*}
where, $\rho_i^r$ is the utilization rate of road queue, $L_i^{r}$ is the expected queue length of the system, $L_i^{q,r}$ is the expected queue length waiting for service; $W_i^{r}$ is the expected sojourn time; and $W_i^{q,r}$ is the expected waiting time before service begins; and $p_i(0)$ is the probability of empty queues, derived from $\left(\frac{r^{c_k}}{c_k!(1-\rho)+\sum_{n=0}^{c_k-1}\frac{r^n}{n!}}\right)^{-1}$.

Within the proposed queueing network consisting of taxi queues $M/M/1$ and road queues $M/M/c$, we have average number of vehicles in network $L_I$, total average load on network $\gamma_I$, and average delay throughout the network $W_I$.  
\begin{align*}
&L_I=\sum_{i\in I}L_i^{r}+\sum_{i\in I}(L_i^{ATS}+L_i^{TTS}) \  ,
\gamma_I=\sum_{i\in I} \left(\hat{\lambda}_{i}^{v,TTS}+\hat{\lambda}_{i}^{v,ATS}\right) \   ,
W_I=\frac{L_I}{\gamma_I}
\end{align*}

\section{Case Study}
\subsection{The Case of New York City}
In this section, we apply our proposed queueing network into the competitive taxi market of New York City (NYC), where has one of the largest TTS (fleet size of more than 13,000 yellow taxicabs) and ATS market (weekly active Uber drivers more than 45,000) in North America, as of April 2017. The case study is developed with multiple datasets: (a) the ride records of both yellow taxicabs and for-hire vehicle (Uber) shared by the NYC Taxi \& Limousine Commission (ref:\url{http://www.nyc.gov/html/tlc/html/about/trip_record_data.shtml}), containing a time-stamped trip record of 6-tuples $(O, D, t, tt, d, F)$, which denote the geolocation of origin and destination, pickup time, in-vehicle travel time, trip distance, and total fares respectively; (b) the Uber trajectory and operation data directly crawled from Uber platform with high-frequency (every 5s) empty vehicle trajectory information, as well as surge pricing and estimated waiting time at 470 specific locations \cite{zhang_pricing}; (c) the city traffic flow data with 15-minute short count link volume and speed data in one specific week of each year, shared by New York Department of Transportation (ref:\url{https://www.dot.ny.gov/divisions/engineering/technical-services/highway-data-services/hdsb}); and (d) the Google Maps Directions API for shortest route planning between the pair of locations (ref:\url{https://developers.google.com/maps/documentation/directions/intro}).     

This study also introduces the idea of a `homogeneous region,' which is defined based on the Poisson assumption on the passenger and vehicle arrival process. We perform extensive hypothesis tests of the Poisson assumption under different spatial scales (e.g. Borough\footnote{5 in total and each is with an average area of 60.4 mi$^2$, see \url{https://en.wikipedia.org/wiki/Boroughs_of_New_York_City}}, community districts\footnote{71 in total if we include regions of airports and parks, and each is with an average of $4.3 $mi$^2$, see \url{https://www1.nyc.gov/site/planning/community/community-portal.page} }, zip code tabulation area [ZCTA] \footnote{214 in total, and each is with an average of 1.4 mi$^2$, see \url{https://www.census.gov/geo/reference/zctas.html}}, and census tract \footnote{2165 in total, and each is with an average of 0.14 mi$^2$, see \url{https://www.census.gov/geo/reference/gtc/gtc_ct.html}}), as well as arrival count interval (varying from 1 minute to 1 hour) and study period (time-of-the-day and day-of-the-week). The Kolmogorov Smirnov (KS) test is adapted for discrete distribution\cite{correct_ks} and is used to test whether the passenger or vehicle arrivals can be assumed to be Poisson distributed. In addition, three additional $\chi^2$ distribution based tests~\cite{poisson_test} (e.g. Anscombe, Likelihood, Conditional) are applied to test whether the arrivals follow a Poisson distribution. The key test results for passengers and vehicle arrivals during peak hours are summarized in Fig. \ref{arrivals} and \ref{vehicle_arrivals}. Note that we skip several results due to space limits and share full results as appendix at the end of this manuscript. To sum up, we have the following findings for case study development: a) the study period should be limited to one hour peak (from 6 to 7pm) or off peak (from 10 to 11 am) of every Mondays to Thursdays, which leads to more spatial units holding Poisson arrival assumptions; and b) we should aggregate trip records at community districts (71 in total, $\sim$4.3 mi$^2$ on average per community district) and 1-minute count interval, which has higher probability of being in line with our Poisson arrival flow assumptions. 
\begin{figure}[!htbp]
	\centerline{\subfigure[borough 1-hour]{\includegraphics[width=0.335 \columnwidth]{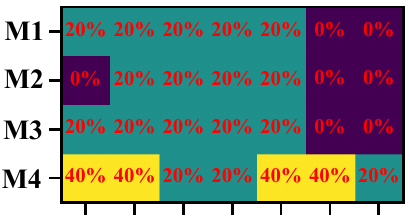}}	
		\subfigure[ZCTA 1-hour]{ \includegraphics[width=0.3 \columnwidth]{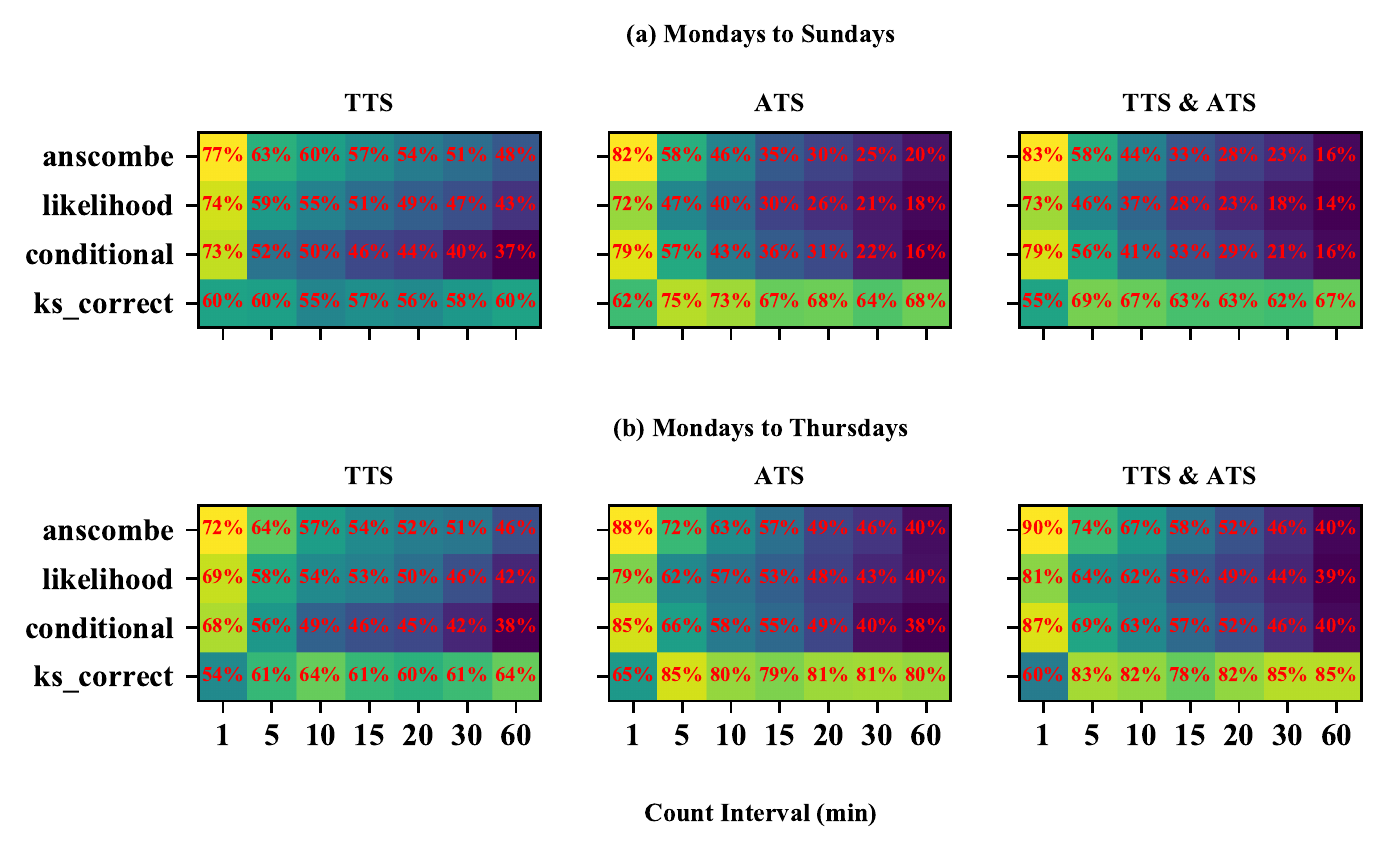}}
		\subfigure[census tract 1-hour]{ \includegraphics[width=0.3 \columnwidth]{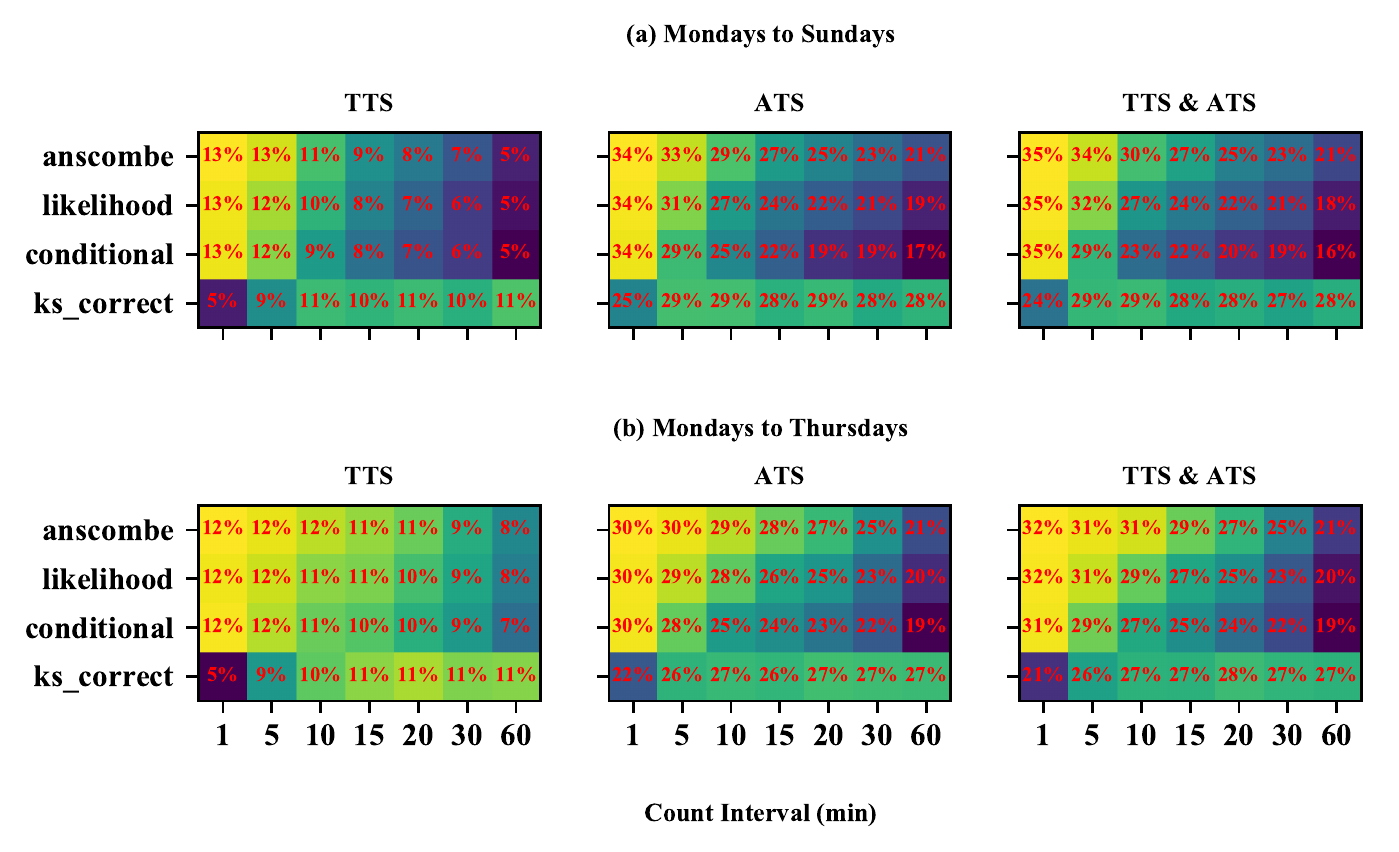}}
	}
	\centerline{ \subfigure[community districts 3-hour]{\includegraphics[width=0.335 \columnwidth]{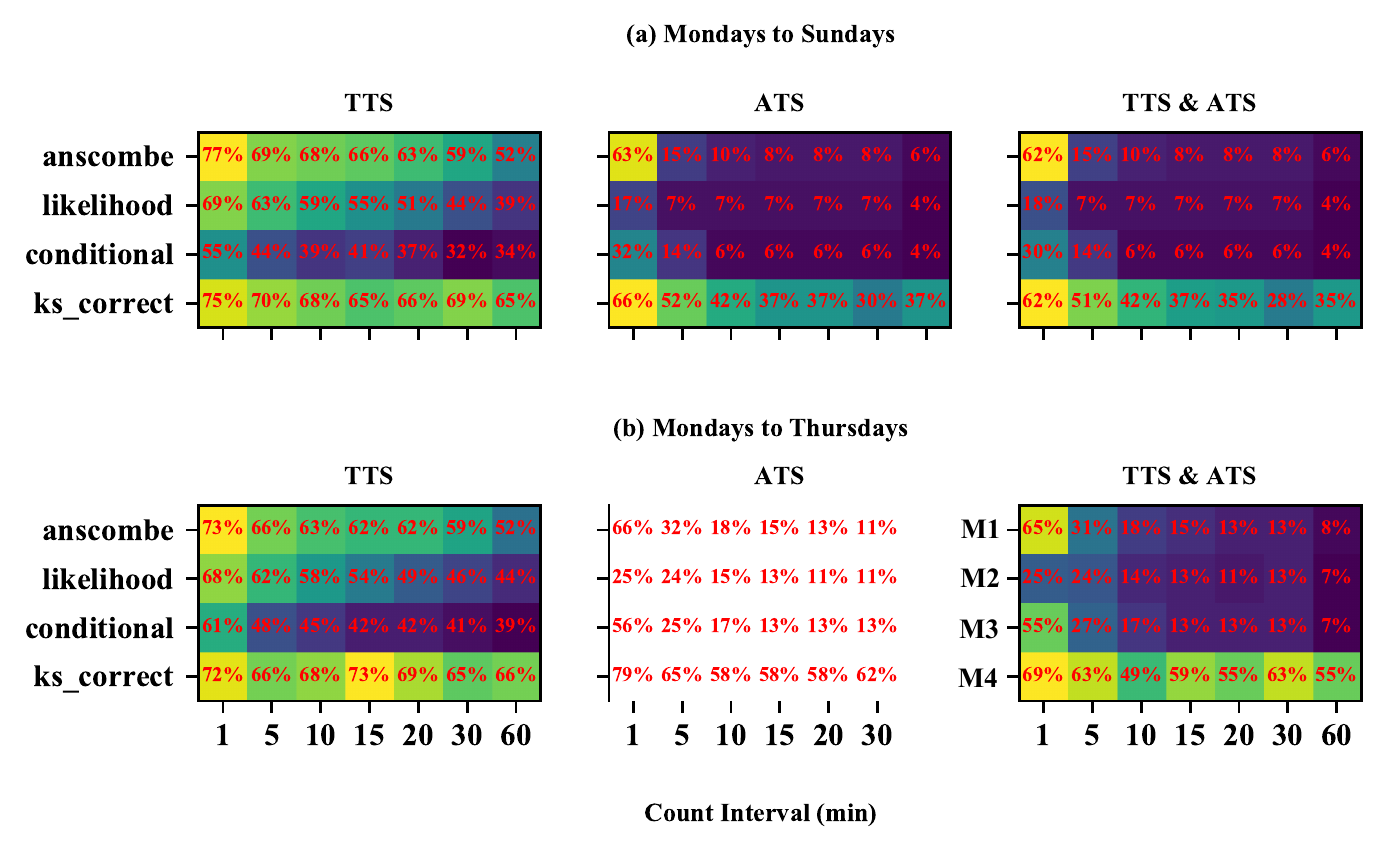}}
		\subfigure[community districts all days]{ \includegraphics[width=0.3 \columnwidth]{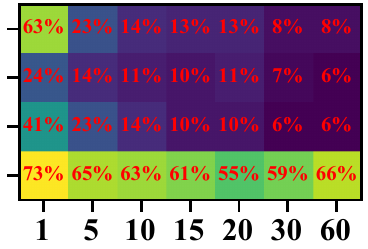} }
		\subfigure[community districts 1-hour]{\includegraphics[width=0.3 \columnwidth]{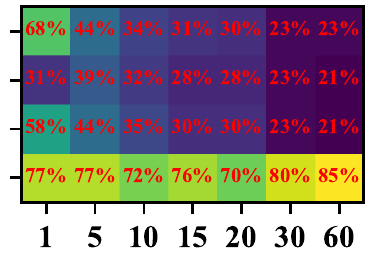}}
	}
	\caption{The percentage of community districts with significant hypothesis testing results under various scenarios: (a) borough and 1-hour peak of Mondays to Thursdays; (b) zcta and 1-hour peak of Mondays to Thursdays; (c) census tract and 1-hour peak of Mondays to Thursdays; (d) community districts and 3-hour peak of Mondays to Thursdays; (e) community districts and 1-hour peak of all days; and (f) community districts and 1-hour peak of Mondays to Thursdays. Moreover, x-axis indicates the arrival count interval (minutes), and y-axis denotes test methods (M1: Anscombe, M2: Likelihood, M3: Conditional, and M4: adapted KS).}
	\label{arrivals}
\end{figure}
\begin{figure}[!htbp]
	\centerline{ \subfigure[TTS vehicle arrivals]{\includegraphics[width=0.465 \columnwidth]{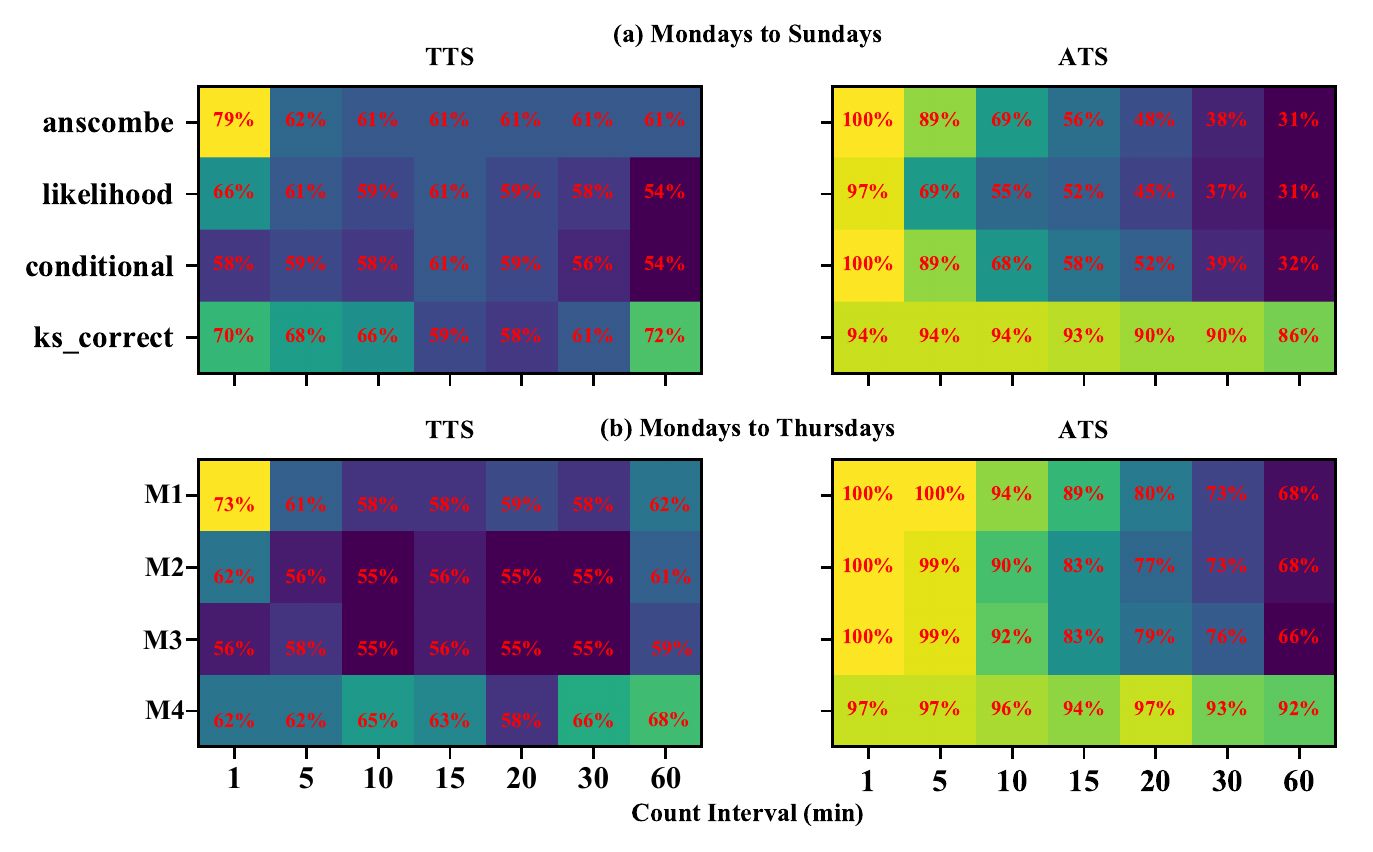}}
		\subfigure[ATS vehicle arrivals]{\includegraphics[width=0.44 \columnwidth]{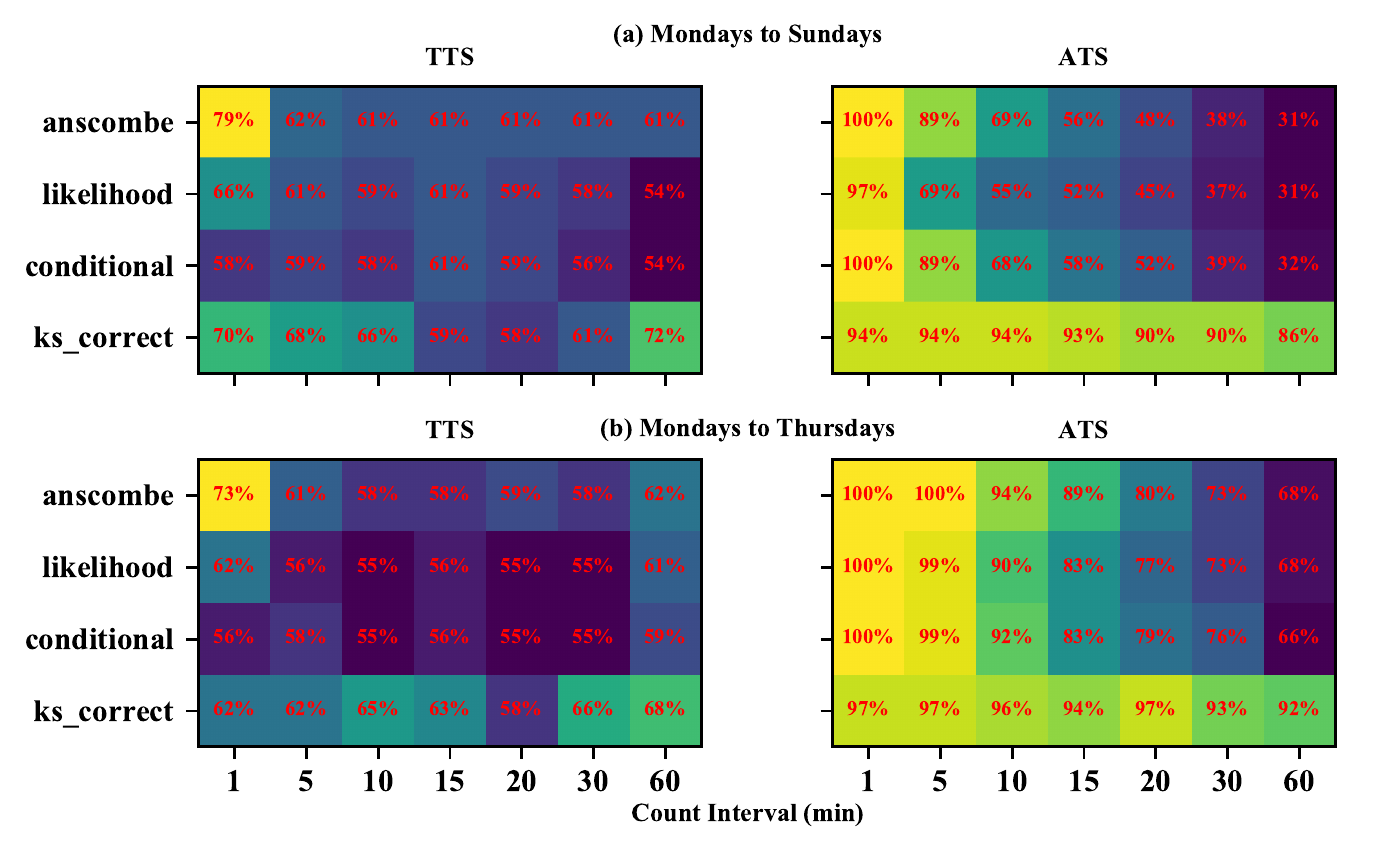}    }}
	\caption{The percentage of community districts with significant hypothesis testing results. X-axis indicates the arrival count interval (minutes), and y-axis denotes test methods (M1: Anscombe, M2: Likelihood, M3: Conditional, and M4: adapted KS). }
	\label{vehicle_arrivals}
\end{figure}

\subsection{Queue Inputs}
Under proposed spatiotemporal aggregation scales, we further investigate the passenger and vehicle arrival rates $\lambda_{i}^{p}, \lambda_{i}^{v,ATS}, \lambda_{i}^{v,TTS}$. Fig. \ref{input_arrivals} (a) to (c) exhibits the p values from hypothesis testing in each community district. The red color indicates small p values less than 0.05, which reject the null hypothesis of Poisson arrivals at confidence level of 95\%. It is straightforward that most community districts can be assumed to have Poisson passenger and vehicle arrivals. Fig. \ref{input_arrivals} (d) to (f) show corresponding arrival rates. Downtown Manhattan areas have relatively higher passenger arrival rates of more than 100 passengers per minute. In contrast, remote areas have much fewer arrivals, revealing a significantly imbalanced distribution of passengers. On the other hand, either ATS or TTS vehicle arrival rates are less than 5 vehicles per minute and do not have many variance across space, since the vehicle arrival only counts new onlines and excludes those who have been in system and move from other community districts.
\begin{figure}[!htbp]
	\centerline{\includegraphics[width=0.7\columnwidth]{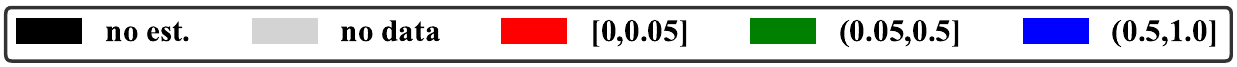}}
	\centerline{\subfigure[passenger arrivals]{\includegraphics[width=0.275 \columnwidth]{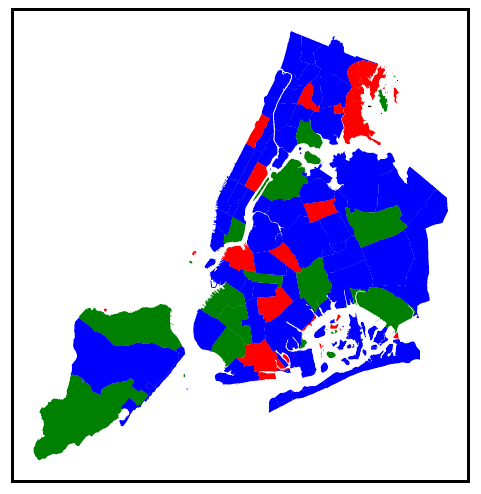}}	
		\subfigure[TTS vehicle new arrivals]{ \includegraphics[width=0.295 \columnwidth]{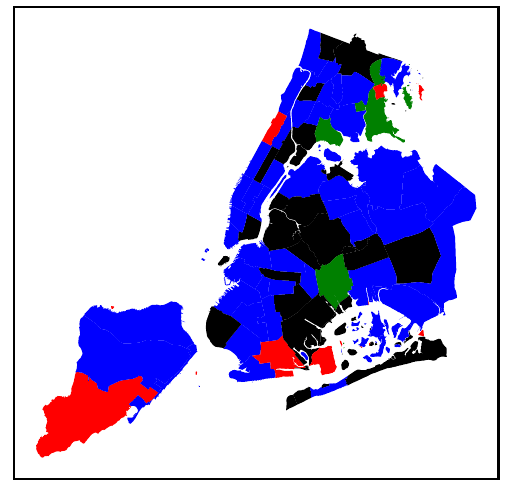}}
		\subfigure[ATS vehicle new arrivals]{ \includegraphics[width=0.295 \columnwidth]{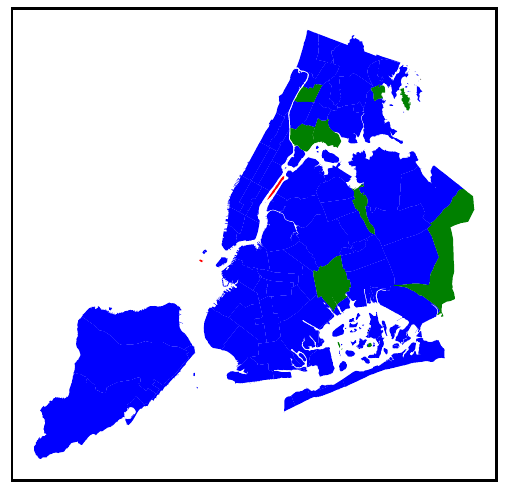}}
	}
	\centering
	 \subfigure[$\lambda_{i}^{p}$]{\includegraphics[width=0.51 \columnwidth]{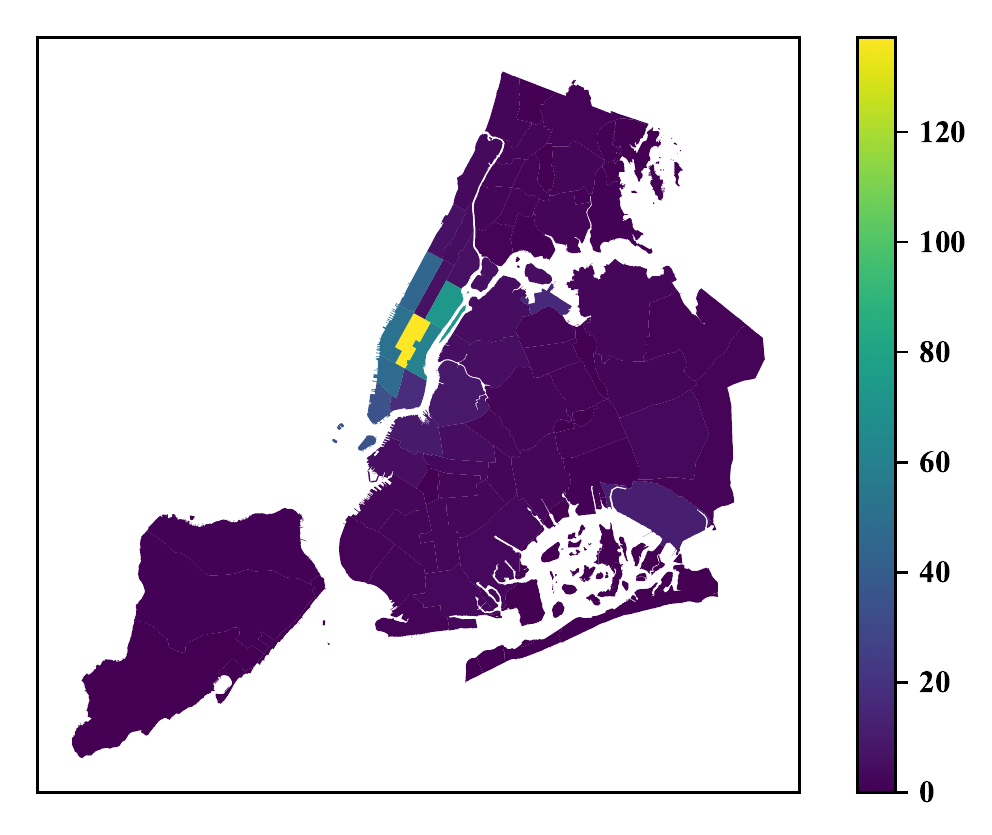}}
	 \centerline{
		\subfigure[$\lambda_{i}^{v,TTS}$]{ \includegraphics[width=0.49 \columnwidth]{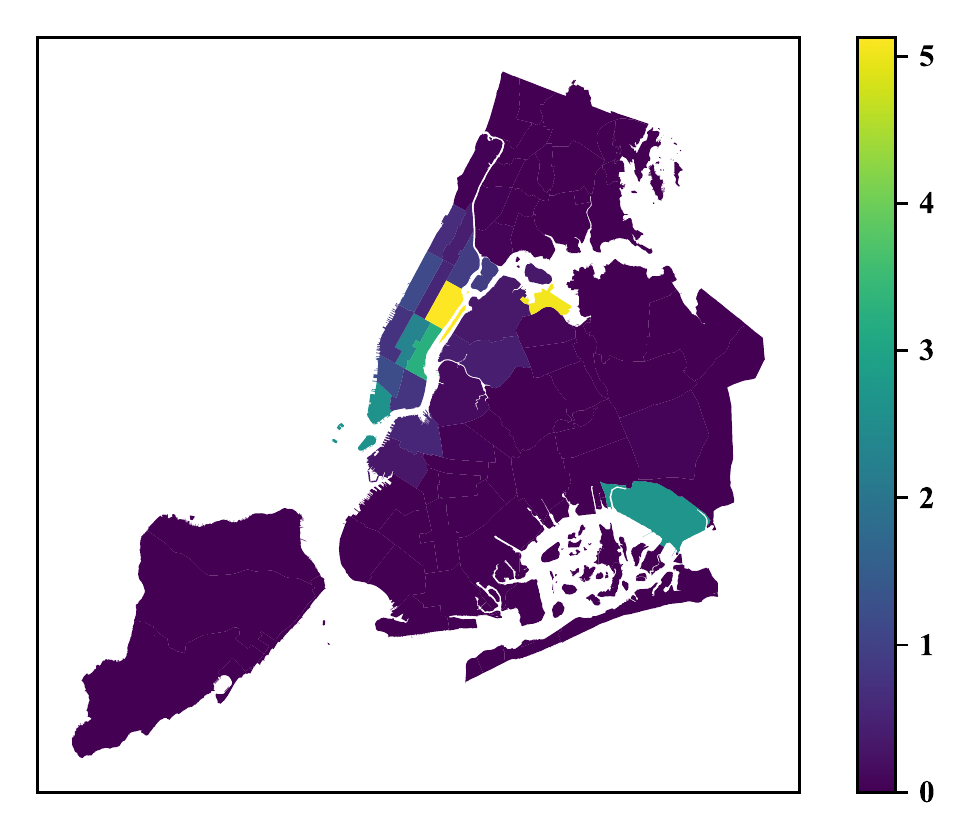} }
		\subfigure[$\lambda_{i}^{v,ATS}$]{\includegraphics[width=0.5 \columnwidth]{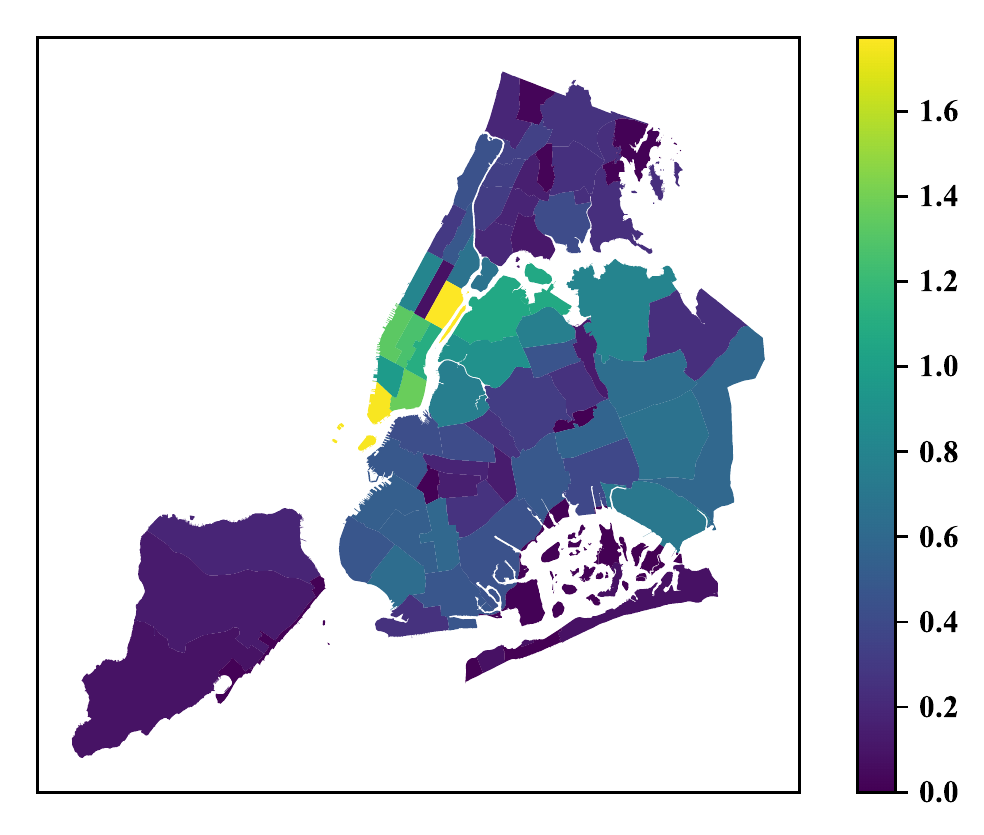}}}
	\caption{The hypothesis test p value (subplots a to c) and arrival rate (subplots d to f) of both passenger and new vehicles during peak hours}
	\label{input_arrivals}
\end{figure}

Except for external passenger and vehicle arrivals, we also examine the fixed probabilities of modal split $p_i^{ATS}$, pickup $p_i^{p,ATS} \text{or} p_i^{p,TTS}$, and system-exit $p_{i0}^{E,ATS}\text{or} p_{i0}^{O,TTS}$. Fig. \ref{fixed_split} (a) to (e) present mean value of corresponding probabilities across minutes. Fig. \ref{fixed_split} (f) to (j) summarize variance value of corresponding probabilities across minutes. The dark color indicates almost zero variance, which provides strong empirical evidence of fixed probabilities. We can take the mean value of the corresponding probabilities as mode inputs. Regarding the empirical observations on vehicle passenger searching time, $t_i^{ATS}$ can be directly counted from trajectory dataset. However, the measurement on $t_i^{TTS}$ is relatively more complicated, by introducing shortest path planning data from the Google API. We segment TTS rides with Google shortest path then enumerate defined passenger search time or travel time. Also, the vehicle routing matrix is derived based on similar measurement method as that for $t_i^{ATS}$ and $t_i^{TTS}$. The last input of interest is the number of servers at road queues, $c_i$. Here, the service rate can also be defined as the inverse of expected travel time under free flow in each road subsystem. We employ the free flow speed and together with measured trip distance by datasets and estimated trip distance from segmented shortest paths, obtain expected travel time under free flow. The number of servers are mainly generated based on the critical taxi accumulation in the MFD-like shapes, as shown in Fig.\ref{mfd}. Each borough tends to have consistent critical accumulations and has small differences from others.       

\begin{figure}[!htbp]
	\centerline{\subfigure[mean of $p_i^{ATS}$]{\includegraphics[width=0.33 \columnwidth]{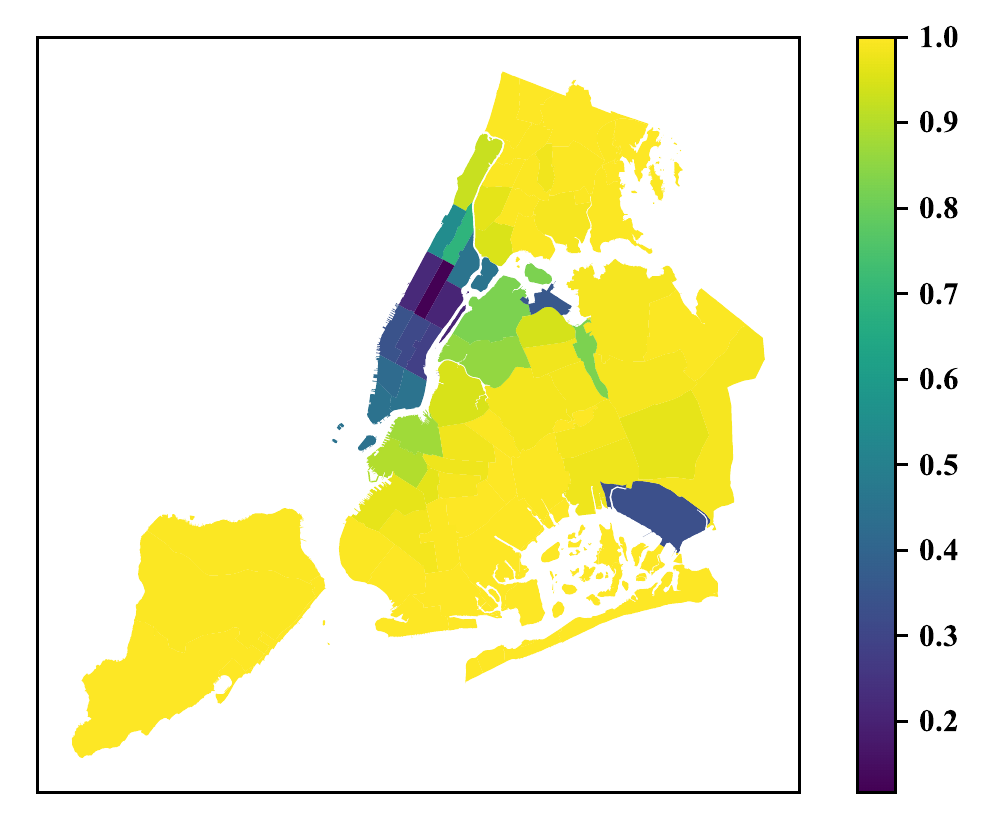}}	
		\subfigure[mean of $p_i^{p,ATS}$]{ \includegraphics[width=0.33 \columnwidth]{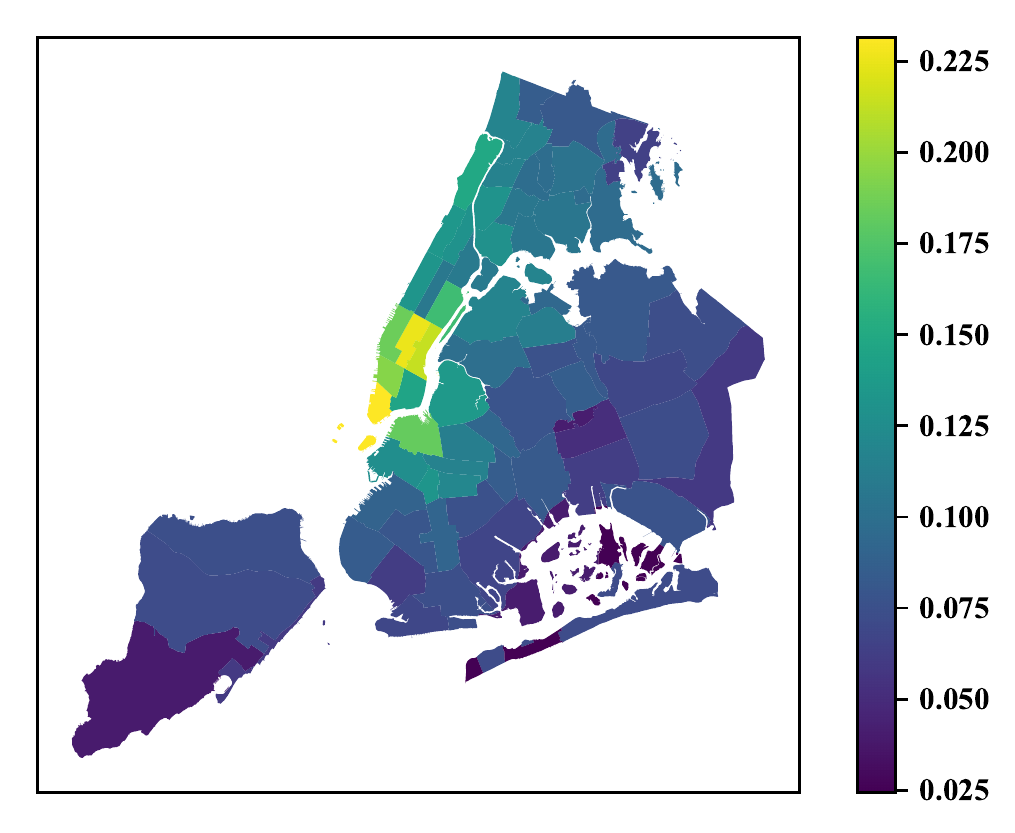}}
		\subfigure[mean of $p_{i0}^{E,ATS}$]{ \includegraphics[width=0.33 \columnwidth]{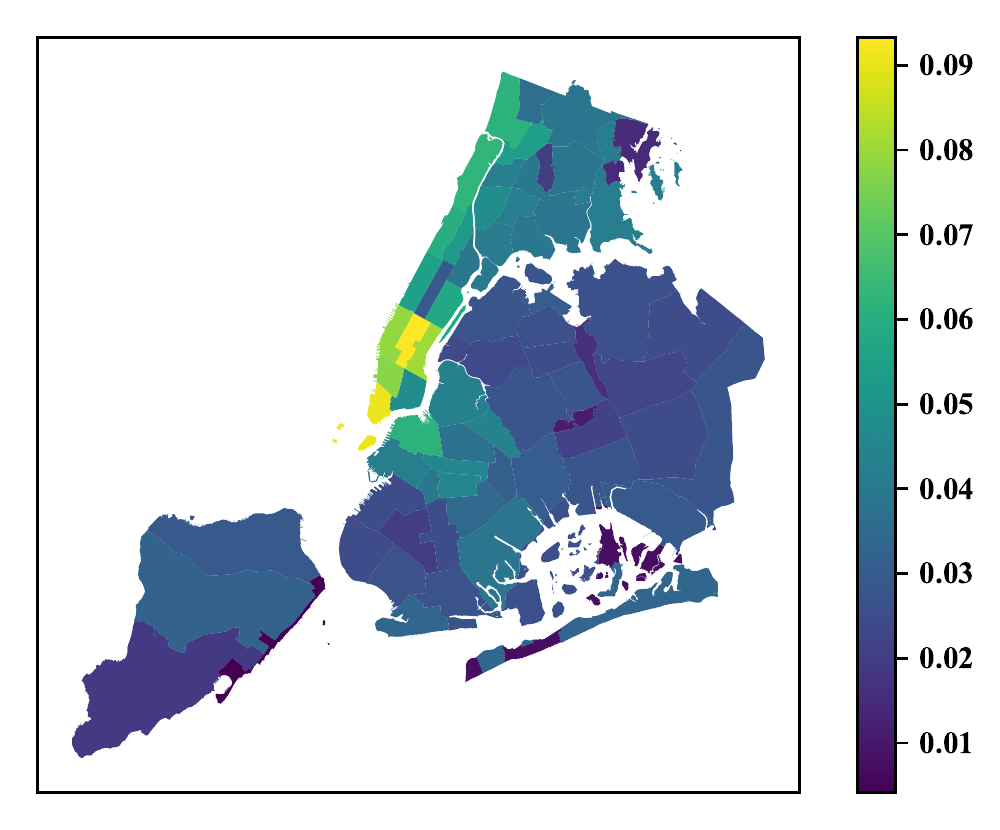}}	
}	\centerline{
		\subfigure[mean of $p_i^{p,TTS}$]{ \includegraphics[width=0.33 \columnwidth]{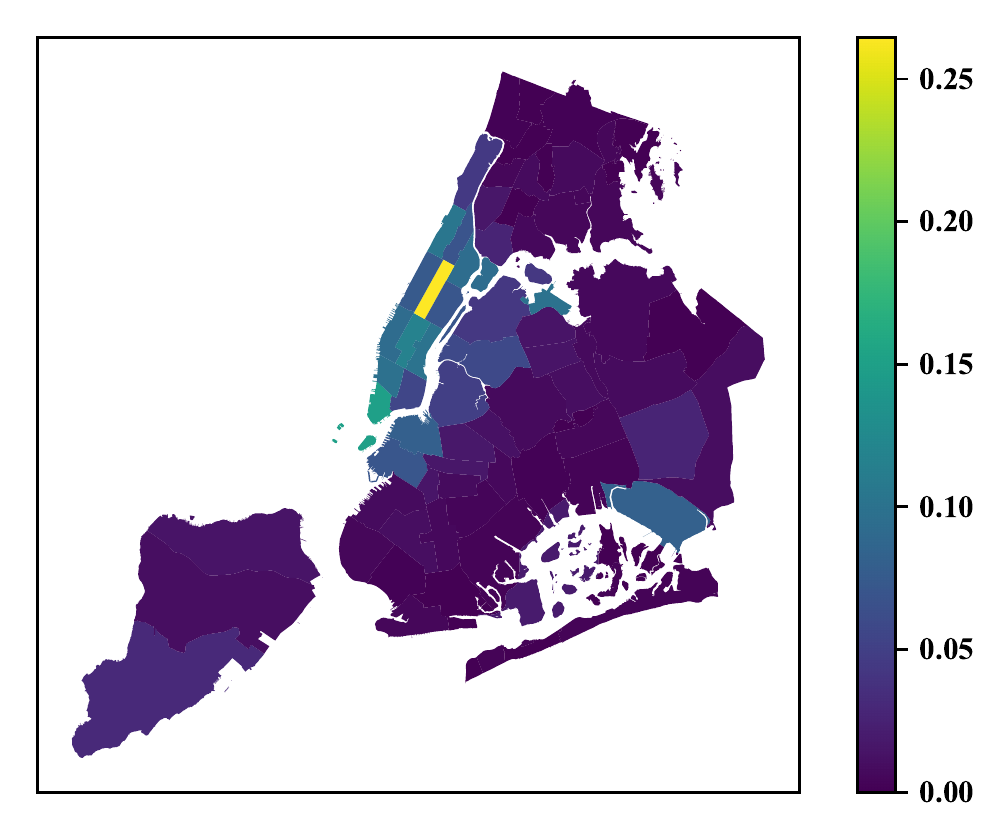}}
		\subfigure[mean of $p_{i0}^{O,TTS}$]{ \includegraphics[width=0.33 \columnwidth]{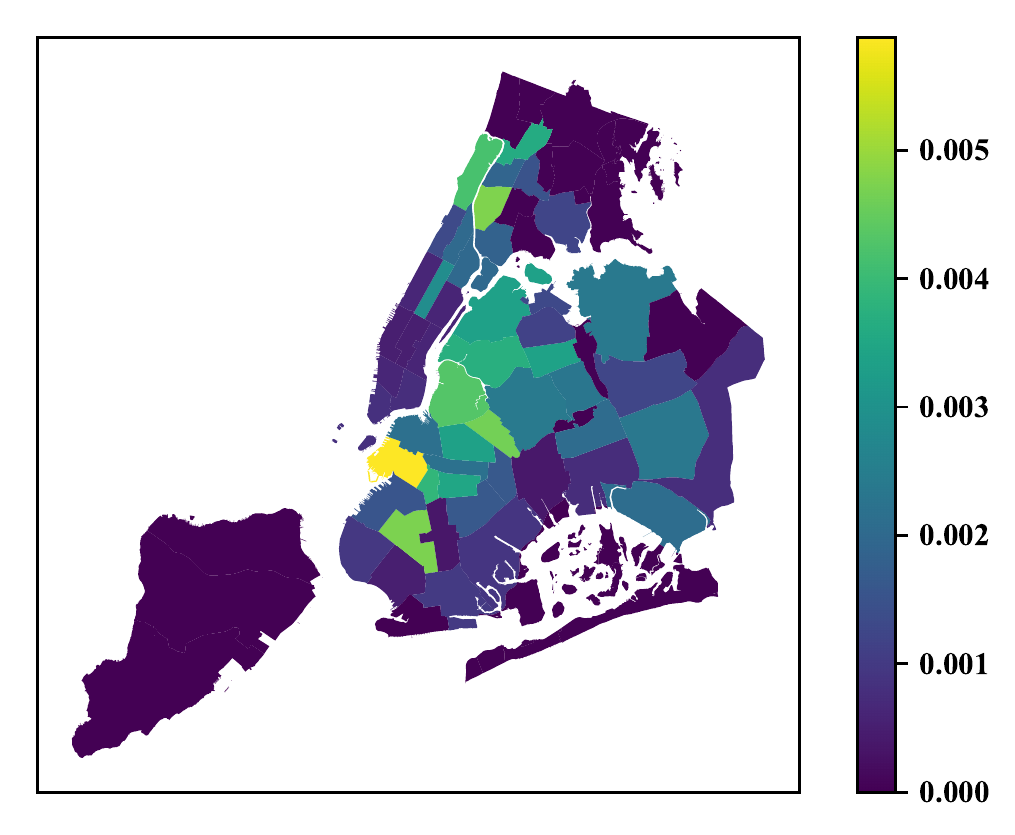}}		
	 \subfigure[variance of $p_i^{ATS}$]{ \includegraphics[width=0.33 \columnwidth]{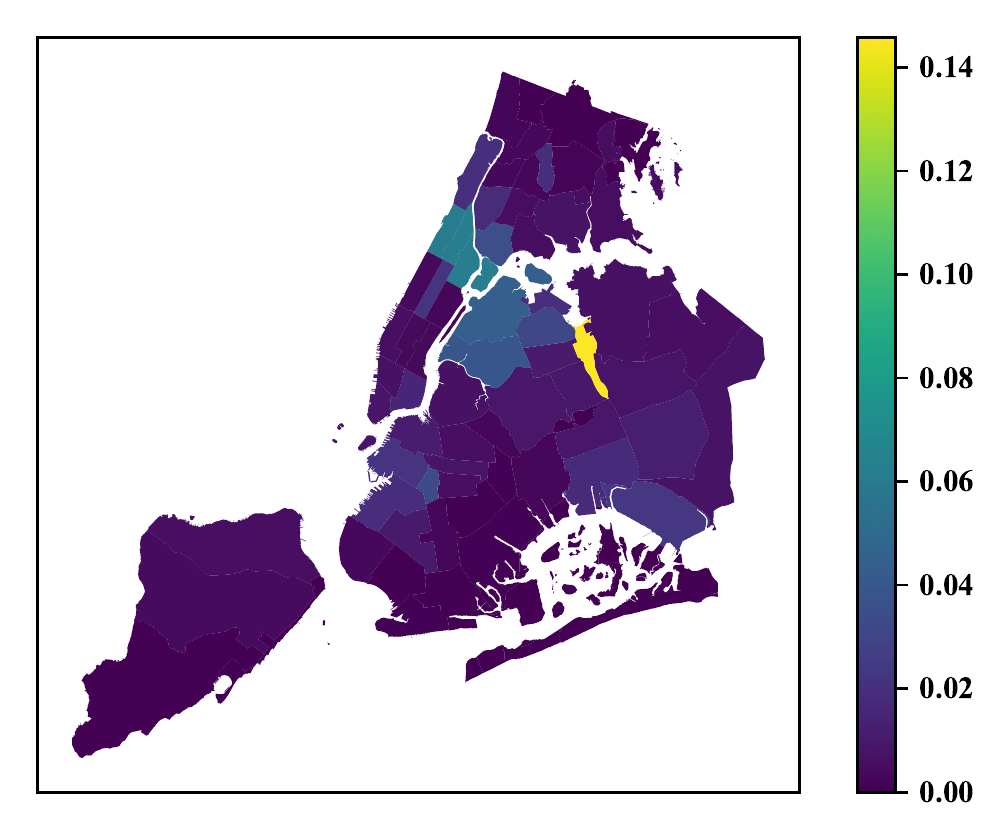}}
	}\centerline{
		\subfigure[variance of $p_i^{p,ATS}$]{ \includegraphics[width=0.33 \columnwidth]{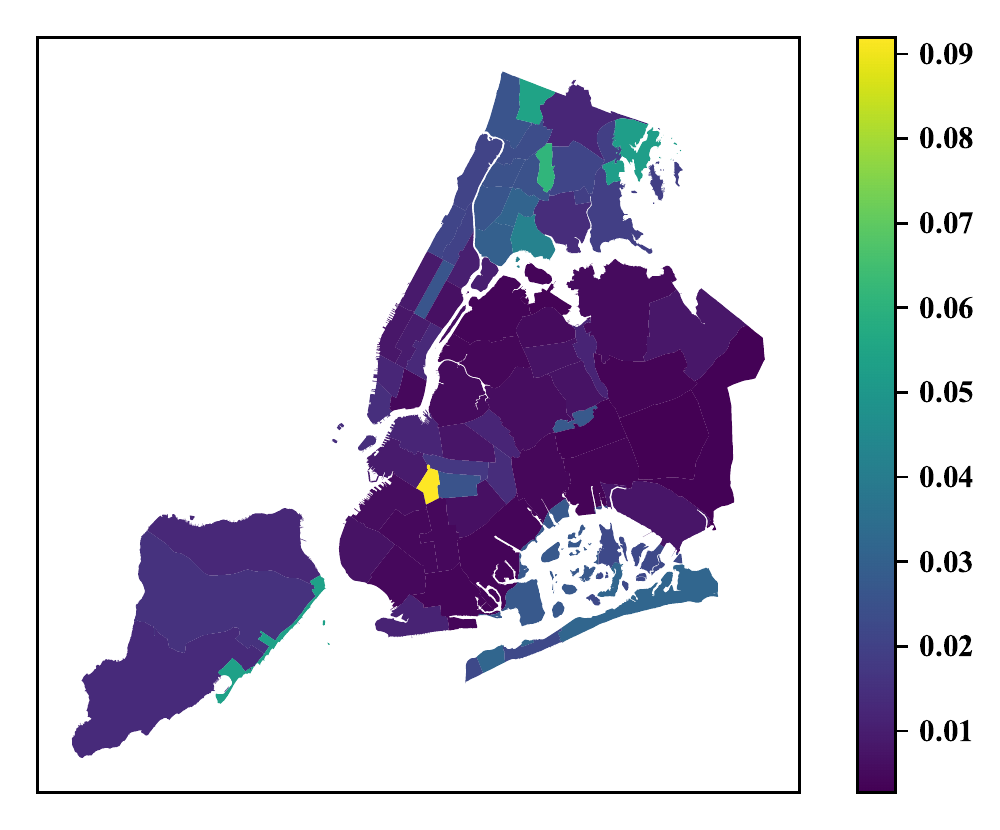}}
		\subfigure[variance of $p_{i0}^{E,ATS}$]{ \includegraphics[width=0.33 \columnwidth]{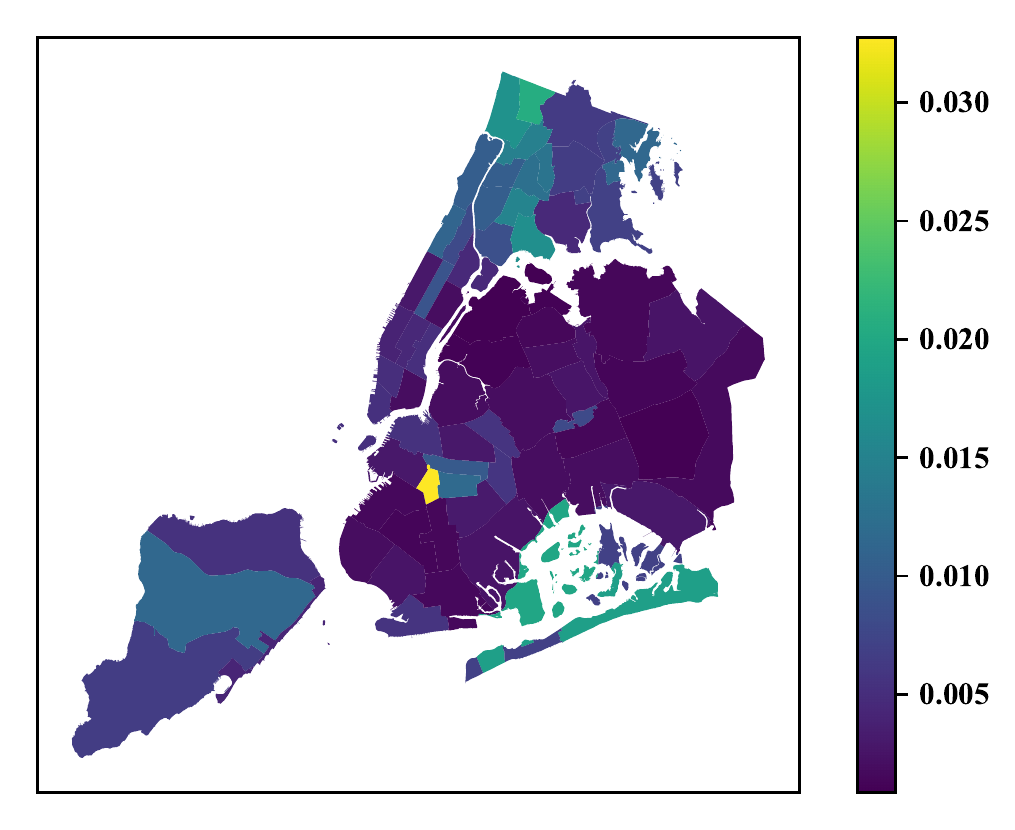}} 
		\subfigure[variance of $p_i^{p,TTS}$]{ \includegraphics[width=0.33 \columnwidth]{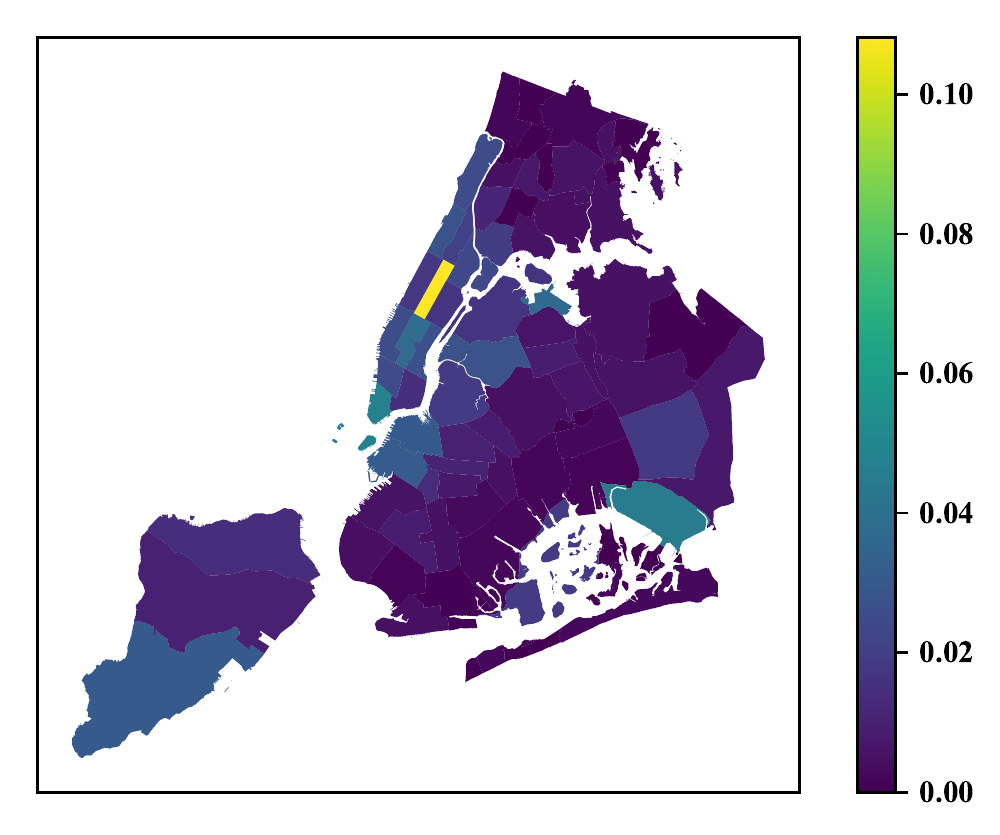}}}
	\centering
		\subfigure[variance of $p_{i0}^{O,TTS}$]{ \includegraphics[width=0.33 \columnwidth]{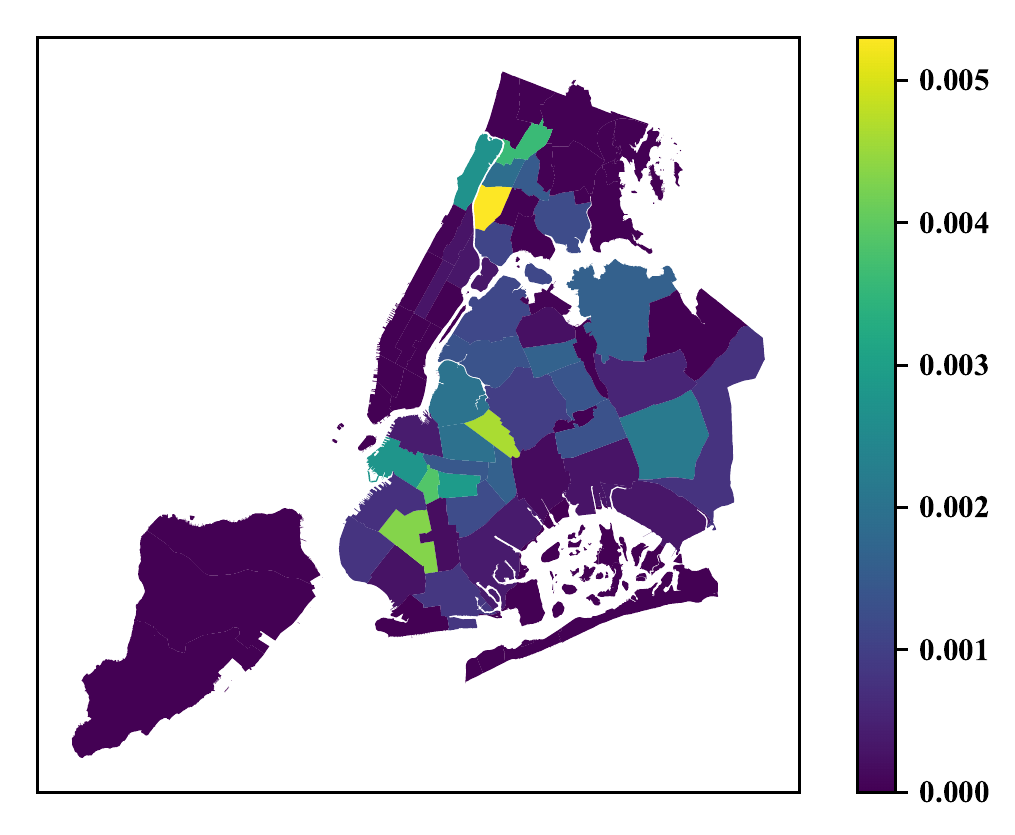}}

	\caption{The mean and variance of modal split, pickup, and system-exiting probabilities during peak hours}
	\label{fixed_split}
\end{figure}
\begin{figure}[!htbp]
	\centering
	\includegraphics[width=0.6 \columnwidth]{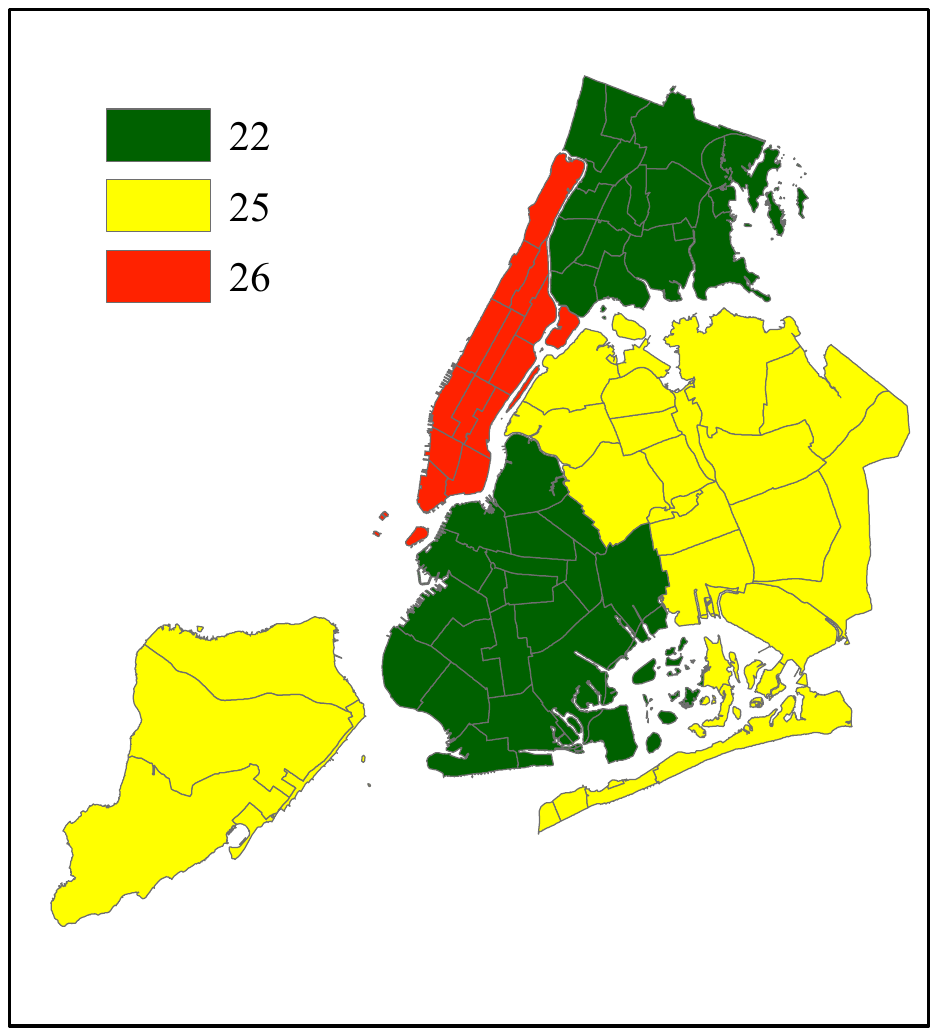}
	\caption{The number of servers of $M/M/c$ road queues}
	\label{mfd}
\end{figure}

\subsection{Model Evaluation}
In this section, we examine the model performance based on proposed settings and solutions of the linear programming problem. First, we compare the estimated $\lambda_i^{pv,*}$ that denotes the paired flow arrival rate in the synchronization process, to the observed passenger pickup rates in reality. Fig. \ref{paired_flow} shows almost same patterns between peak and off peak hours, but reveals many differences between ATS and TTS. The ATS system presents much lower absolute percentage errors (i.e. $<$5\%) in almost every spatial unit. In contrast, the proposed modeling structure has reliable outputs for ``hot'' areas of TTS system, which attract more than 90\% of TTS activities. Such significant differences may arise from spatial distribution of both services. Since the modeling structures involves vehicle movement over road network and routing probabilities, which directs majority of vehicles to ``hot'' areas and leads to unreliable estimations for remaining areas. In addition, the low percentage error also provides strong empirical evidence of $SM/M/1$ approximation with $M/M/1$.   

\begin{figure}[!htbp]
	\centerline{\subfigure[TTS - peak hours]{\includegraphics[width=0.485 \columnwidth]{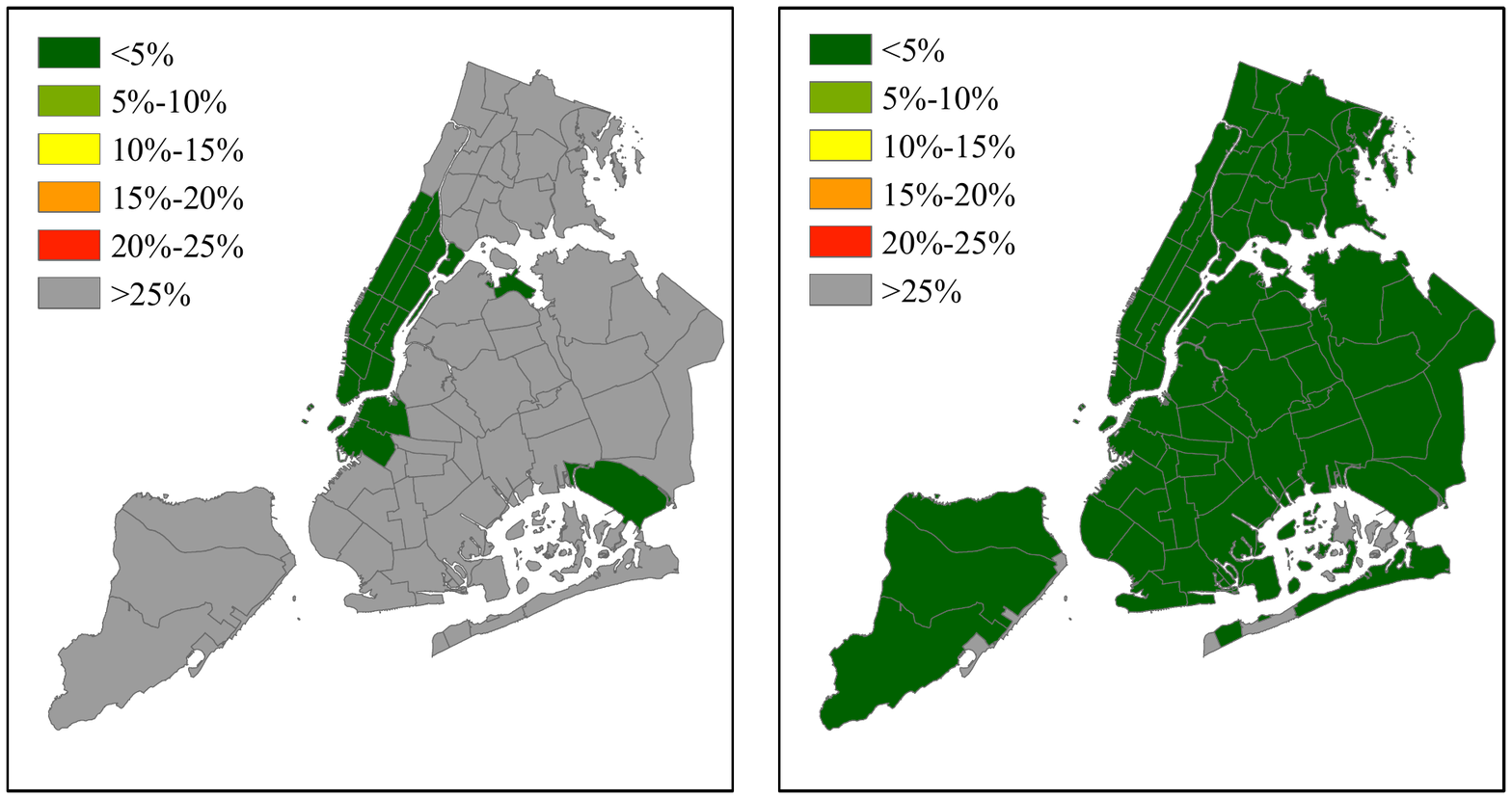}}
		\subfigure[ATS - peak hours]{	\includegraphics[width=0.48 \columnwidth]{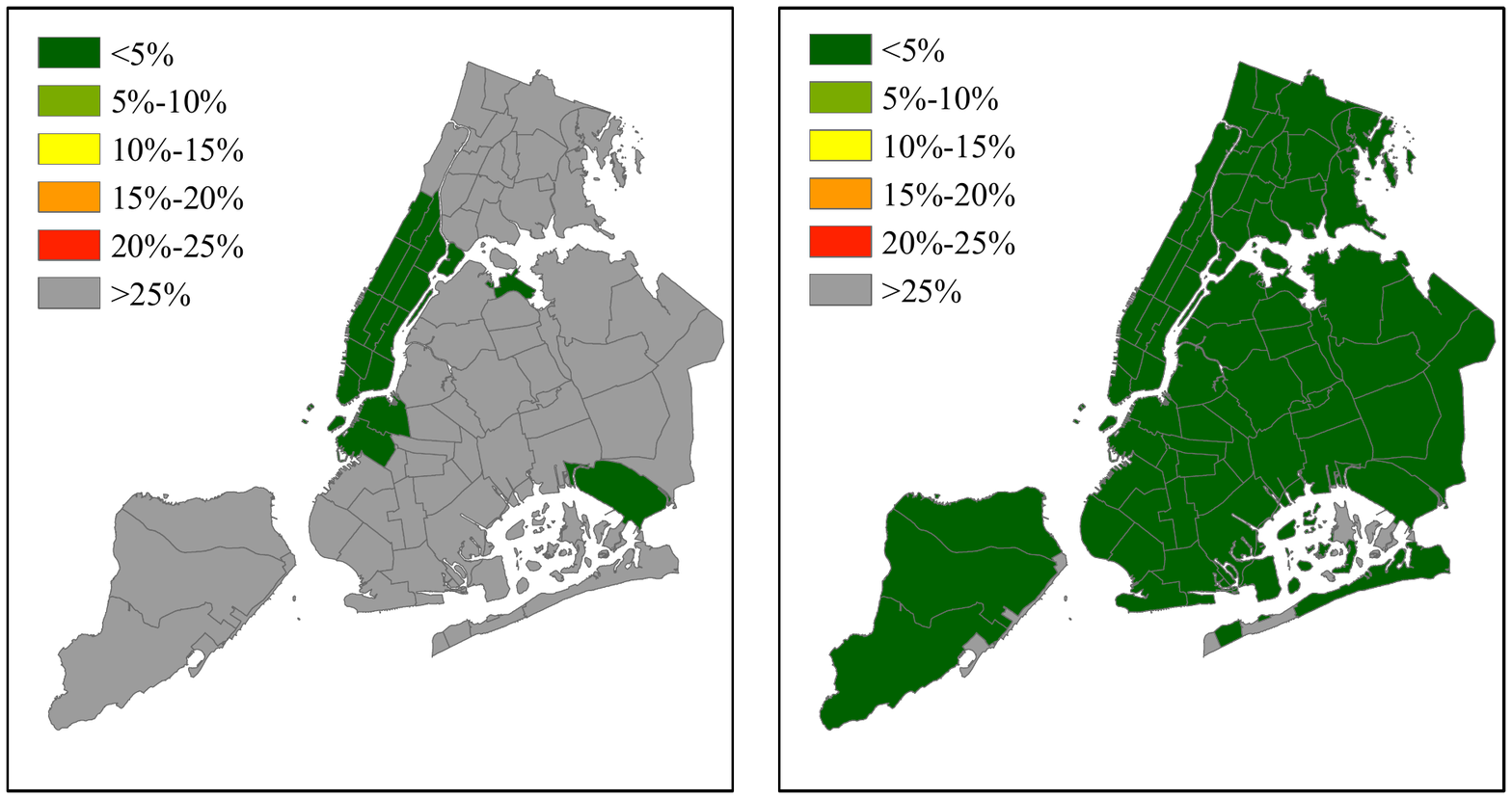} }
	} \centerline{
		\subfigure[TTS - off peak hours]{\includegraphics[width=0.48 \columnwidth]{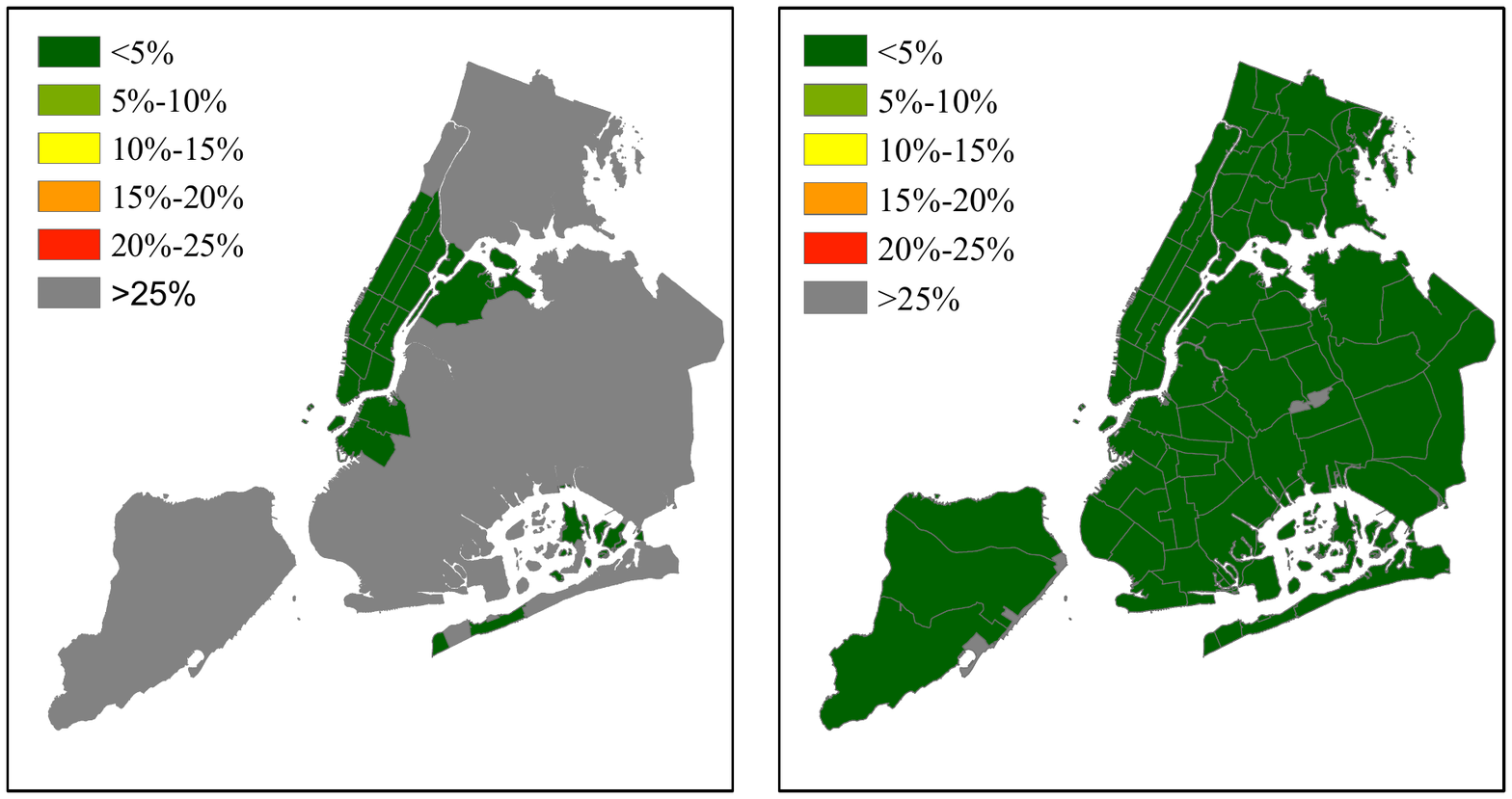}}
		\subfigure[ATS - off peak hours]{	\includegraphics[width=0.485 \columnwidth]{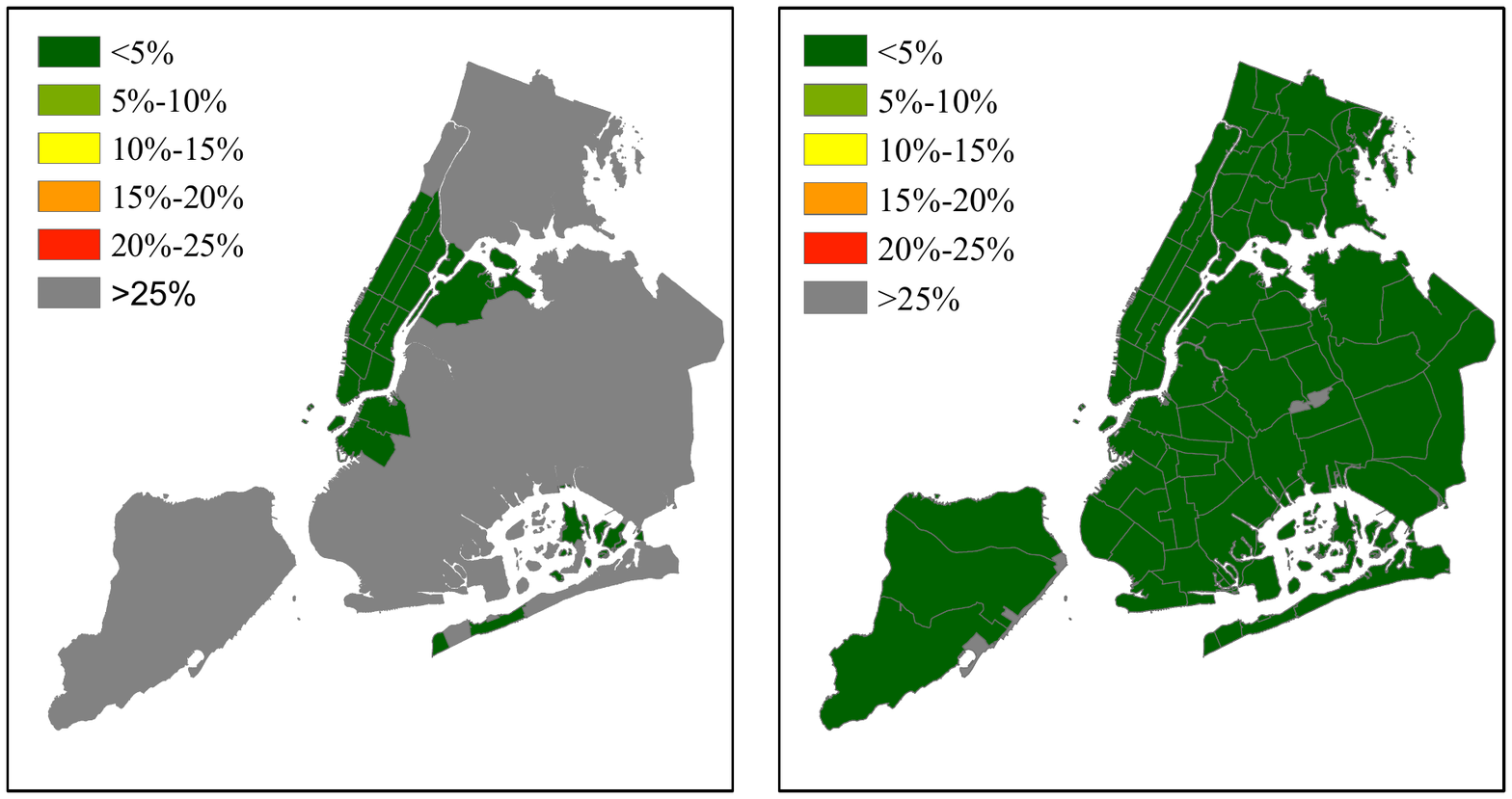} }}
	\caption{The absolute percentage errors between estimated $\lambda_{i}^{pv,*}$ and observed passenger pickup flows}
	\label{paired_flow}
\end{figure}

Moreover, we examine the taxi system performance, in terms of sojourn time (or total time from arrival to departure in one specific queue system), and evaluate the accuracy for both ATS and TTS, summarized in Fig.\ref{sojourn_perform}. Note that, multiple black spatial units in Fig. \ref{sojourn_perform} represent no accuracy measurement, since there are very limited empirical observations, resulting in unreliable measurements. Overall, the model fits the ATS system better and relatively worse for TTS system outside Manhattan, which is in line with findings in $\lambda_{i}^{pv,*}$. For most spatial units, the proposed modeling structures can yield accurate measurements on ATS vehicle-passenger matching performance with less than 10\% relative errors. However, similar to $\lambda_{i}^{pv,*}$ estimation, the outer Manhattan areas (gray areas in Fig.\ref{sojourn_perform}) still have worse outputs of TTS system performance. The reason is attributed to the limited number of TTS trips outside Manhattan and the unbalanced distribution of these trips.   

\begin{figure}[!htbp]
	\centerline{\subfigure[TTS - peak hours]{\includegraphics[width=0.49 \columnwidth]{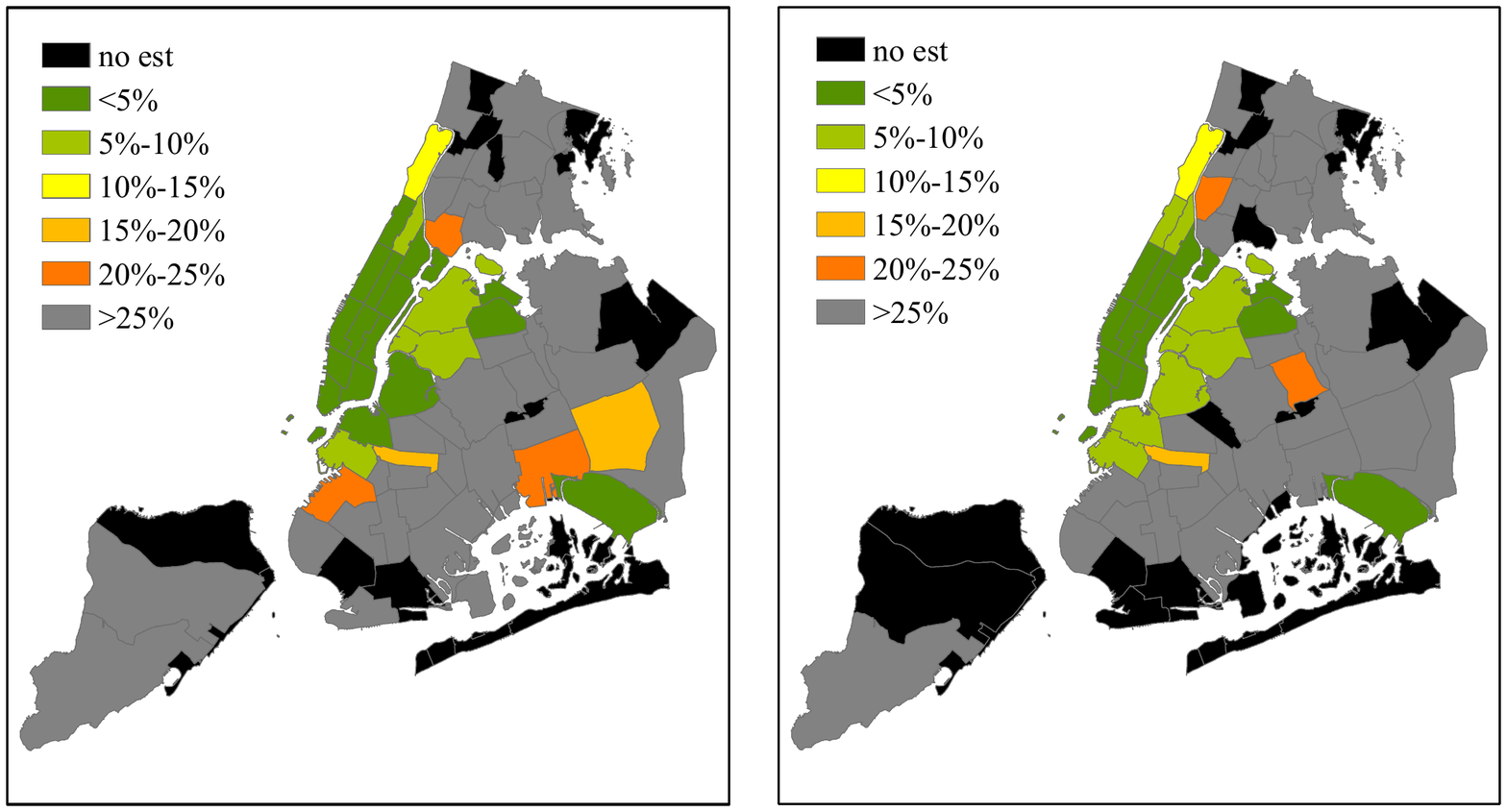}}
		\subfigure[ATS - peak hours]{	\includegraphics[width=0.503 \columnwidth]{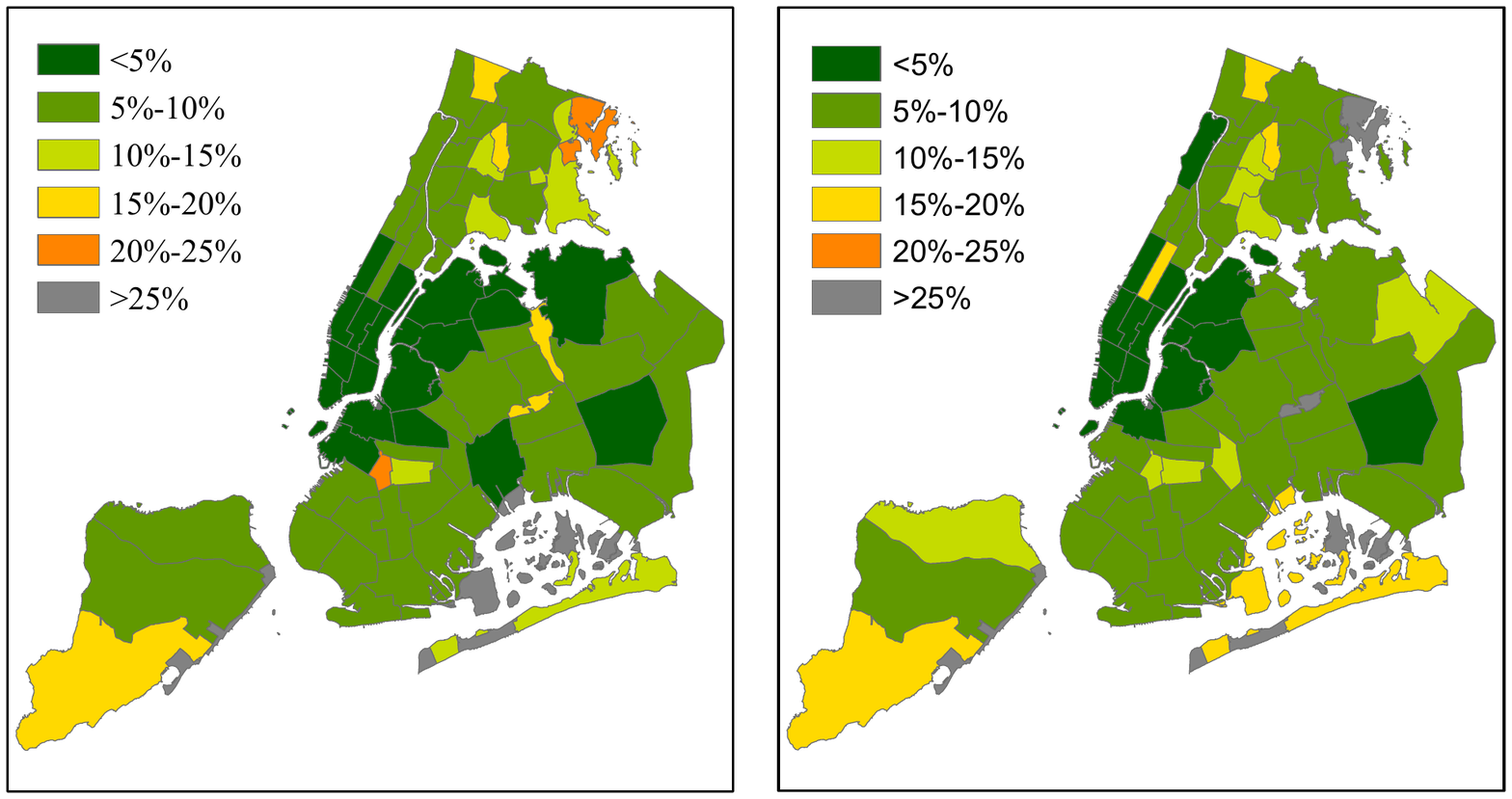}}}
	\centerline{
		\subfigure[TTS - off peak hours]{\includegraphics[width=0.49 \columnwidth]{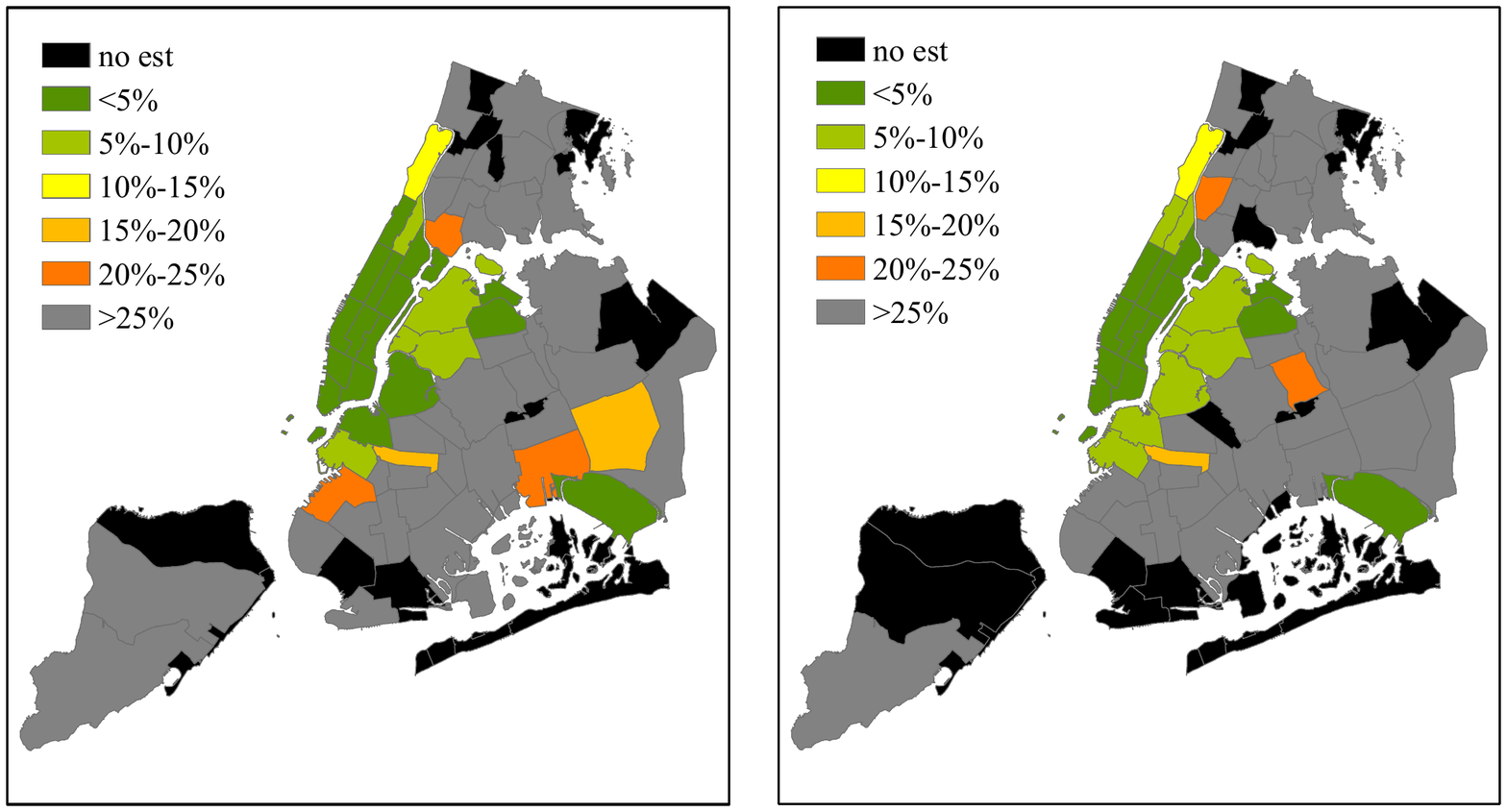}}
		\subfigure[ATS - off peak hours]{	\includegraphics[width=0.505 \columnwidth]{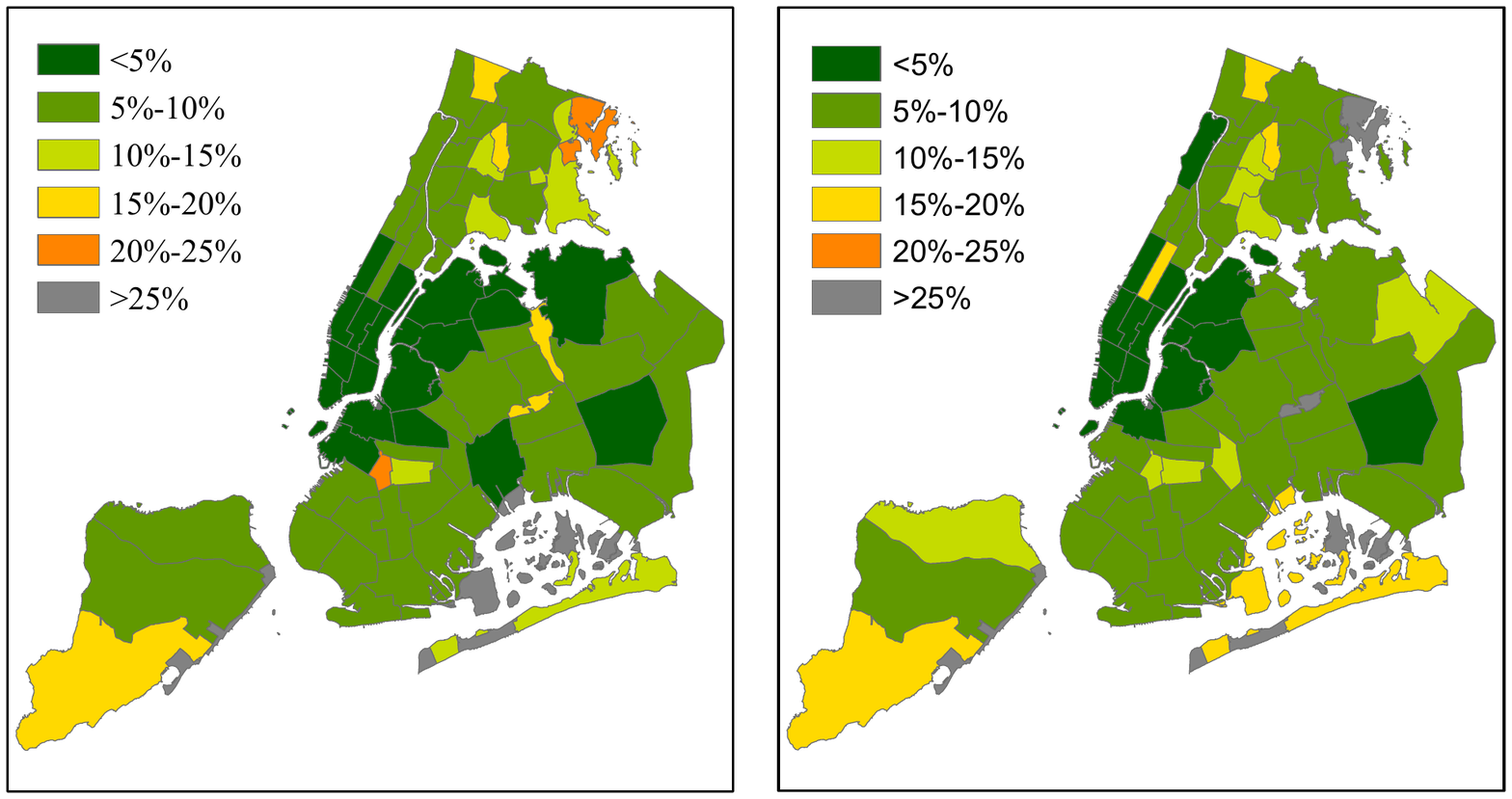} }}
	\caption{The mean absolute percentage errors between expected sojourn time and observed one at taxi queues}
	\label{sojourn_perform}
\end{figure}

Other performance metrics can be similarly examined at either taxi or road queues. We will not present all results due to the space limitations. Overall, the proposed queueing network model can describe vehicle-passenger matching process well and yields reliable estimations on both the expected and distribution of matching performance (e.g. sojourn time). At appropriate spatial and temporal aggregation (such as homogeneous spatial units and periods in this study), vehicles and passengers behave identically without much fewer variants, which is also our assumptions for the queueing network. However, the queueing network performance relies on the parameter calibrations that requires high-resolution dataset of individual movements. Our datasets are now are not of sufficient resolution to allow the calibration of all parameters (for example, TTS in outer Manhattan areas).

\section{Conclusions and Future Works}
The study develops a queueing network approach to describe the complex interactions between the ATS and TTS systems within an unified taxi market, as well as between the taxi- and urban road- system. We first introduce the queueing network structures in which not only the queue node itself can capture the dynamics of taxi passenger and vehicle behaviors but also the node connections can allow the flow exchanges accounting for network externalities. Specifically, we propose (1) the synchronized process $SM/M/1$ for both ATS and TTS passenger-vehicle matching behaviors; (2) the multi-server $M/M/c$ queue for the urban road system; and (3) the state-dependent service rate of $SM/M/1$. Moreover, we provide an approximation of the proposed non-stationary queueing network with a Jackson network and investigate the stationary state distributions. Finally, we fully utilize our rich dataset of TTS and ATTS movements from NYC to test the modeling approach. 

Overall, the application of the proposed modeling structure is far beyond what we have presented in this paper. One main characteristic of ATS is the dynamic pricing and thus the drivers' and passengers' incentives. The differentiated controls over both space and time make the problem more interesting but also challenging which are not fully addressed in this study. The model in this paper emphasizes the macroscopic interactions between urban road and taxi systems, but does not capture the behavioral dynamics of the individuals and how they respond to the taxi market (aka feedback in the system). In future studies, the proposed queueing network will be generalized to include the full dynamics of taxi markets and individual behaviors, thus allowing us to have in-depth insights into system control. Further, sensitivity analysis of the model should also be conducted to understand the stability of the outputs to minor changes in taxi supply and demand.                  

\section*{Acknowledgements}
We would like to thank Mr. Zengxiang Lei for his discussions and helps during paper revision.

\bibliography{refs}
\bibliographystyle{IEEEtran}

\newpage
\appendix
\renewcommand\thefigure{\thesection.\arabic{figure}}  
\renewcommand{\theequation}{\thesection.\arabic{equation}}
\setcounter{equation}{0}    
\setcounter{figure}{0}    
\section*{Appendices}
\section{Hypothesis Testing of Poisson Assumption}
The study proposes $M/M/1$ queue for vehicle-passenger matching of each taxi service type, which requires strong assumption of Poisson arrivals. However, we have not seen any empirical studies for the assumption of taxi flows, regardless of passenger and external vehicle arrivals. With massive taxi trip records, this study employs multiple hypothesis testing methods and performs extensive statistical tests. From the test results, we can expect following experiment setups in one large-scale taxi system, given strong empirical evidence: 
\begin{itemize}
	\item[-] Homogeneous Spatial Units are the spatial divisions of one large city or taxi system. In one specific homogeneous spatial unit, both traditional street-hailing (TTS) and emerging app-based (ATS) taxi services have Poisson passenger and vehicle arrivals;
	\item[-] Arrival Count Intervals are the time interval to count how many arrivals generate. The set of arrival counts should be from Poisson distribution;
	\item[-] Hours and Days are of interest. The taxi activities are significantly influenced by time-of-the-day and day-of-the-week. Certain hours and days will be more likely to have Poisson arrivals, other than all days and hours.  
\end{itemize}

The null hypothesis is that the observed taxi arrival events under given spatial division and time interval follow Poisson distribution. The alternative hypothesis is that the observed taxi arrival events under given spatial division and time interval do not follow Poisson distribution. The first hypothesis testing method is adapted from Kolmogorov-Smirnov test, considering the application for discrete events\cite{correct_ks}. The correction primarily consider sample size, as equation\ref{eq:ks_correct}. Once $D_n$ is greater than one critical value from Kolmogorov-Smirnov distribution, it rejects the null hypothesis and the observed arrival events can not be assumes as Poisson distribution at confidence level of 95\%.    
\begin{equation}\label{eq:ks_correct}
D_n=max_{x\in J}\sqrt{n}|H(x)-F_{n}(x)|-\frac{1}{\sqrt{n}}
\end{equation}   
where, $n$ is sample size; $J$ is levels of arrival count ranging from 0 to maximum count; $H(x)$ is hypothesized cumulative density function of one Poisson distribution with empirical mean estimated from random half of samples; and $F_{n}(x)$ is empirical cumulative density function for remaining half of samples.

Additionally, we introduce three statistical hypothesis testing method for homogeneous Poisson \cite{poisson_test}. The null hypothesis is that taxi arrival events are from Poisson distribution with same rate across time in one given spatial division. The alternative hypothesis is that taxi arrival events are from Poisson distribution but with different rates across time in one given spatial division. Such statistical testing are applied directly on a sequence of arrival counts $\{c_1,c_2,\cdots,c_i,\cdots,c_n\}$ and are generally based on $\chi^2$ distribution. The differences in three methods are in statistics computation. The Anscombe method first transforms original arrival counts then compute statistics with squared errors, as shown in equation \ref{eq:anscombe}. The likelihood ratio statistics are computed from ratio to mean value, as shown in equation \ref{eq:likelihood}. And the conditional $\chi^2$ statistics are measured with ratio of squared error to mean value, as shown in equation \ref{eq:conditional}. If computed statistics are greater than critical $\chi^2_{n-1}$, it rejects null hypothesis at confidence level of 95\% and the taxi arrival events are from Poisson distribution with time-dependent rates.     
\begin{equation}\label{eq:anscombe}
\begin{array}{l}
y_i=\sqrt{c_i+\frac{3}{8}}  \\
T_{anscombe}=4\sum\left(y_i-\bar{y}\right)^2
\end{array}	
\end{equation}

\begin{equation}\label{eq:likelihood}
T_{likelihood}=2\sum_{i=1}^n c_i ln\left(\frac{c_i}{\bar{c}}\right)
\end{equation}
\begin{equation}\label{eq:conditional}
T_{anscombe}=\sum_{i=1}^n \frac{\left(c_i-\bar{c}\right)^2}{\bar{c}}
\end{equation}
where, $n$ is sample size; $c_i$ is arrival count in a specific time interval $i$ and spatial division; and $\bar{c}$ is mean values of all arrival counts.

The potential aggregation scales, as well as study periods of interest, are summarized as follows:      
\begin{itemize}
	\item[-] The potential spatial scales are mainly based on four administrative divisions in NYC, including Borough (Figure \ref{spatial_division} (a), 5 in total, $\sim$60.4 mi$^2$ on average per Borough), Community Districts (Figure \ref{spatial_division} (b), 71 in total, $\sim$4.3 mi$^2$ on average per community district), Zip Code Tabulation Area [ZCTA] (Figure \ref{spatial_division} (c), 214 in total, $\sim$1.4 mi$^2$ on average per zip code tabulation area), and Census Tracts (Figure \ref{spatial_division}\begin{figure}[!h]
		\centerline{
			\subfigure[Borough]{\includegraphics[width=0.5 \columnwidth]{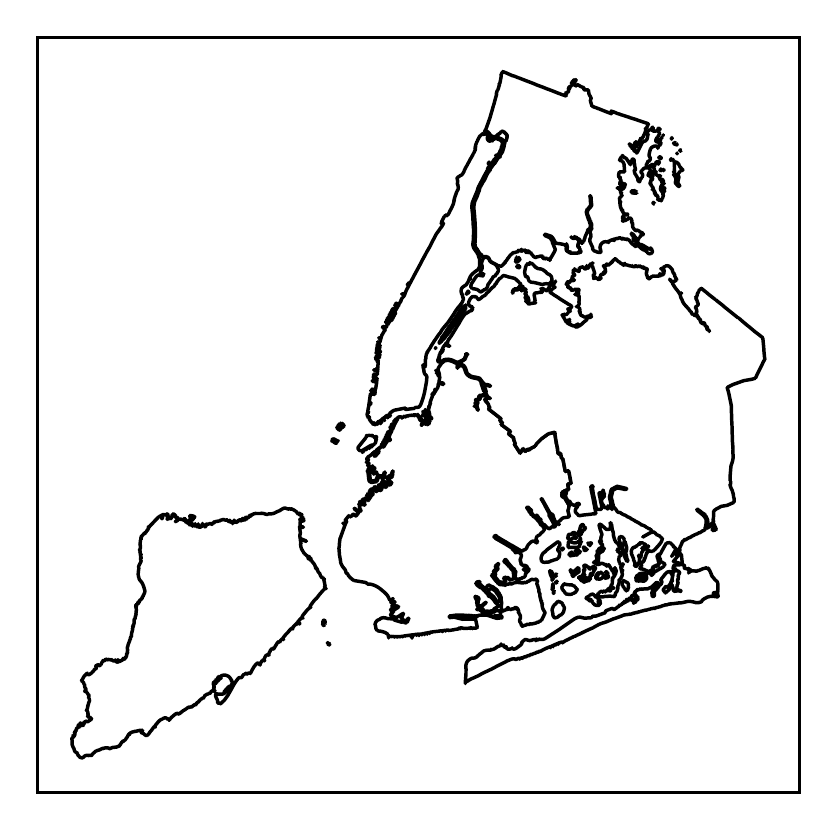}}
			\subfigure[Community Districts]{\includegraphics[width=0.5 \columnwidth]{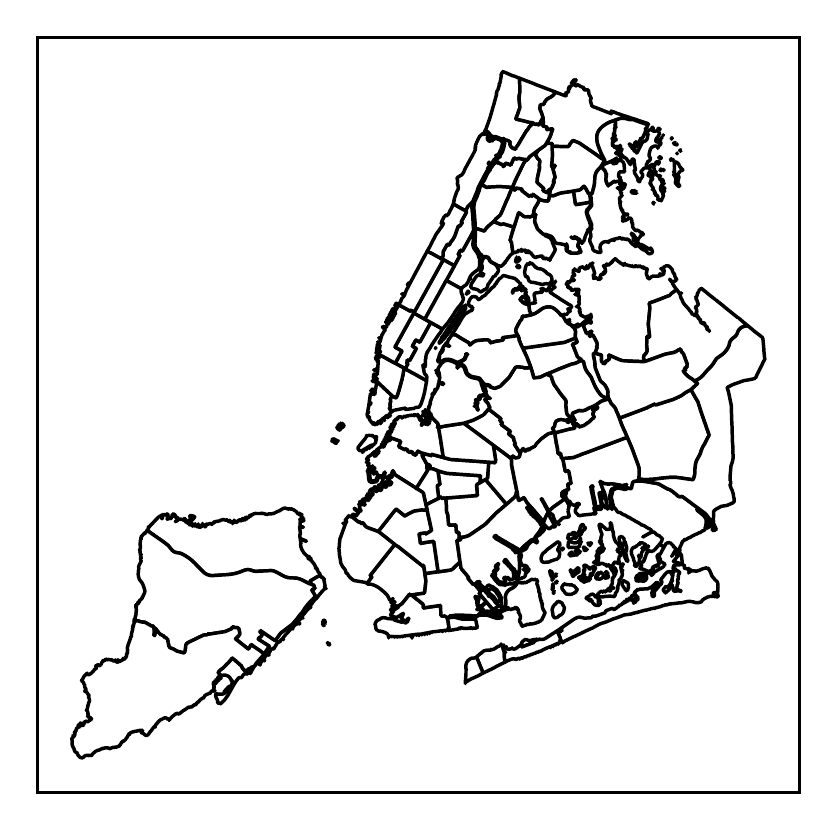}}
		}
		\centerline{
			\subfigure[Zip Code Tabulation Area]{\includegraphics[width=0.5 \columnwidth]{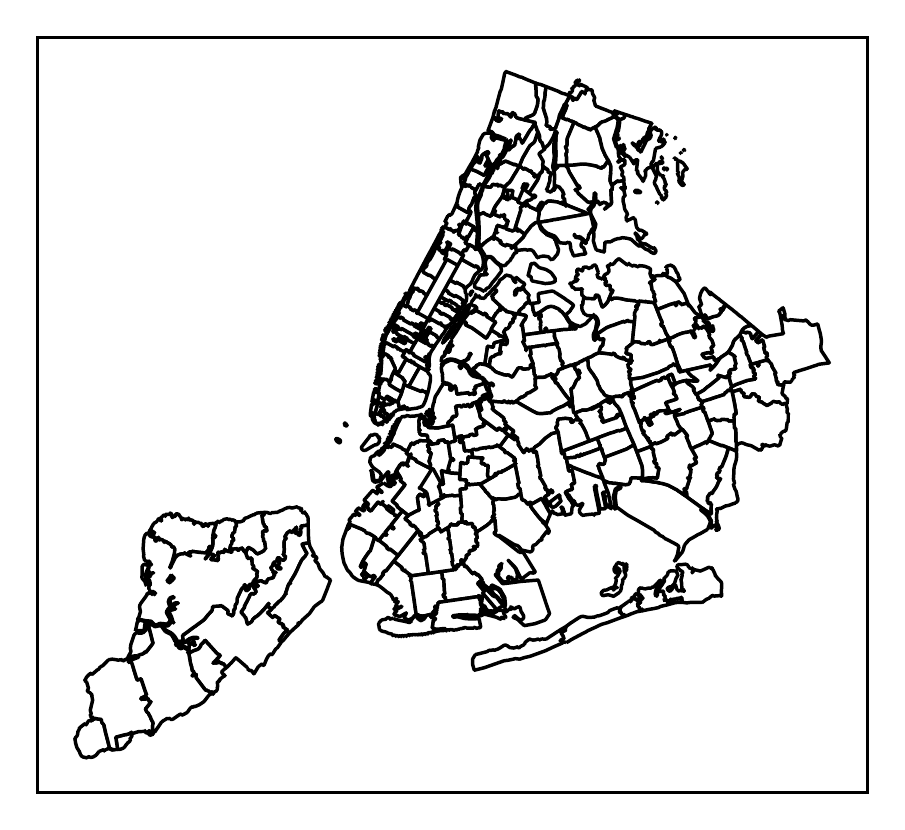}}
			\subfigure[Census Tracts]{\includegraphics[width=0.465 \columnwidth]{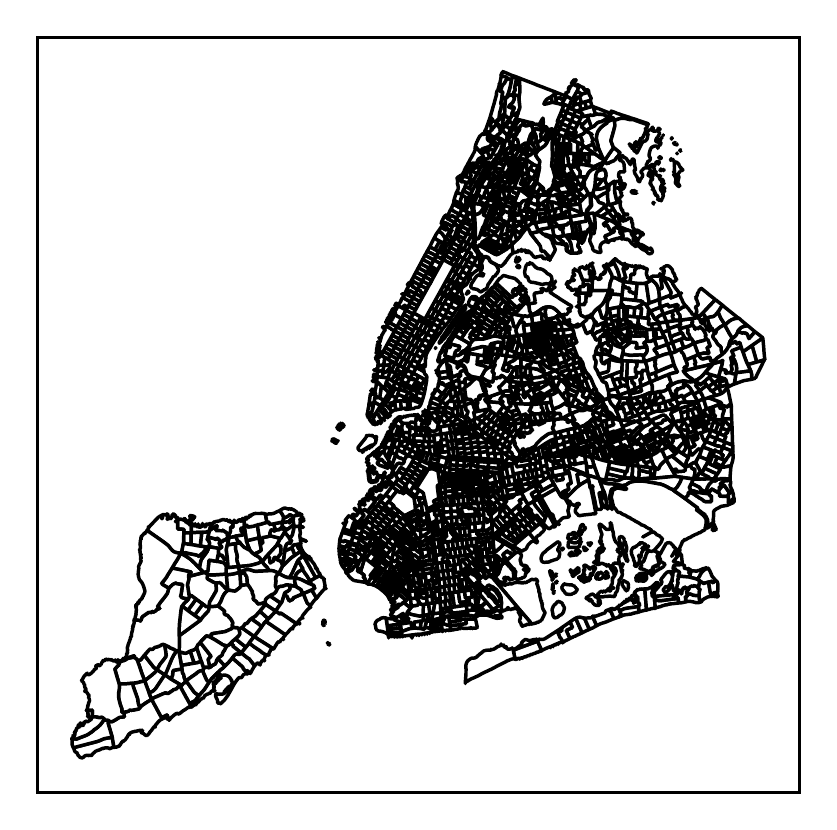}}
		}
		\caption{Potential spatial division in New York City}
		\label{spatial_division}
	\end{figure} (d), 2165 in total, $\sim$0.14 mi$^2$ on average per census tract). Note that we use administrative divisions rather than grid based spatial scale, since most socioeconomic variables are only available at administrative divisions and modeling at administrative divisions will be much easier to measure socioeconomic impacts on taxi activities in future studies.
	\item[-] We test seven different count intervals, that are 1-min, 5-min, 10-min, 15-min, 20-min, 30-min, and 60-min. In other words, we count TTS, ATS, or both TTS and TTS pickups (or vehicle arrivals) every count interval then test whether the flow can be described with Poisson distribution.
	\item[-] In addition, we also test homogeneous period selection, considering time-of-day and day-of-week effects. Regarding the peak (or off peak) hour, we include three difference cases, including 1-hour period (peak: 6pm to 7pm, or off peak: 10am to 11am), 2-hour period (peak: 5pm to 7pm, or off peak: 9am to 11am), and 3-hour period (peak: 5pm to 8pm, or off peak: 9am to 12pm). Moreover, we classify the weekdays from Mondays to Thursdays, compared to the all seven days case.     
\end{itemize}

\section{Spatiotemporal Aggregation}

Figure \ref{arrival-bo_peak} to \ref{arrival-census} show the percentage of zones not rejecting Poisson distribution with 4 methods, 7 count intervals, and day of the week. It is apparent that the smaller count interval generally leads to more zones not rejecting Poisson assumptions, across almost all plots. Although several hypothesis tests by corrected KS reveals similar percentages from 1-min to 60-min count interval, the homogeneous Poisson tests generally reject the null hypothesis of arrival counts are from one single homogeneous Poisson distribution if we have a larger count interval. Thus, we select 1-minute as the count interval in this study. 

Figure \ref{arrival-bo_peak}\begin{figure}[h]
	\centering
	\includegraphics[width=1.2 \columnwidth]{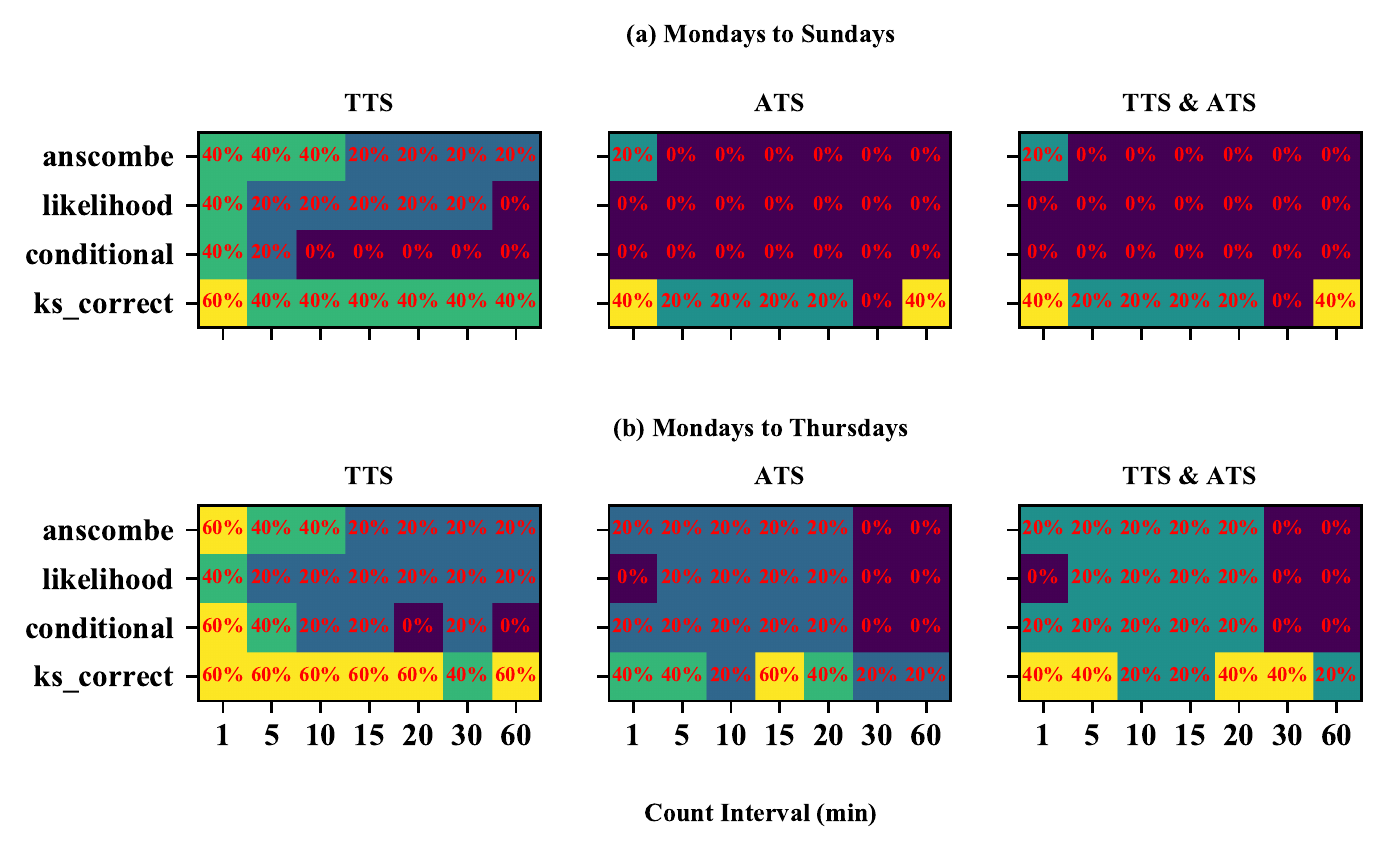}
	\caption{Hypothesis test results for passenger pickups at Boroughs in one-hour peak}
	\label{arrival-bo_peak}
\end{figure} \begin{figure}[h]
	\centering
	\includegraphics[width=1.2 \columnwidth]{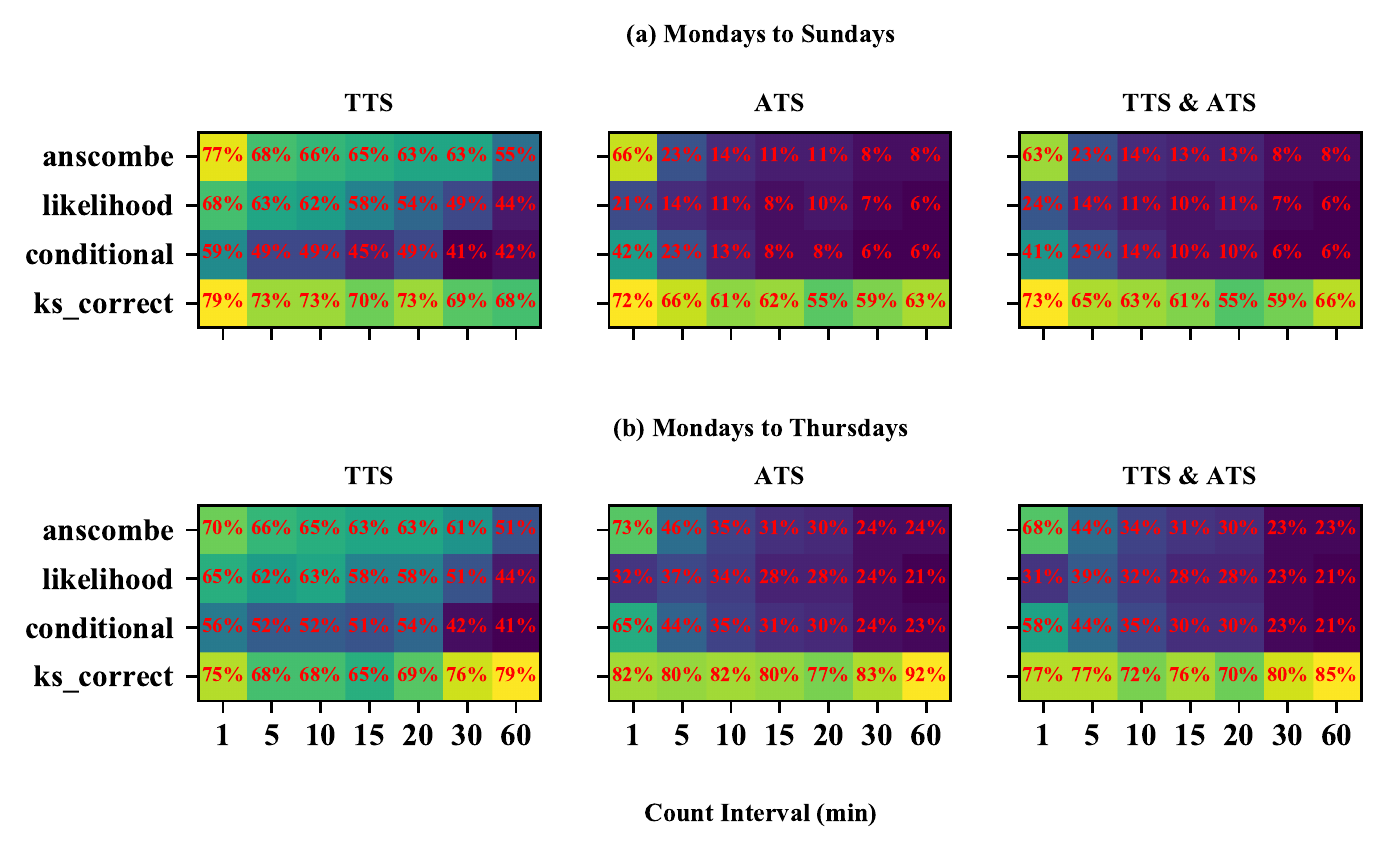}
	\caption{Hypothesis test results for passenger pickups at Community Districts in one-hour peak}
	\label{arrival-cd_peak} \end{figure} to Figure \ref{arrival-census_peak}\begin{figure}[h]
	\centering
	\includegraphics[width=1.2 \columnwidth]{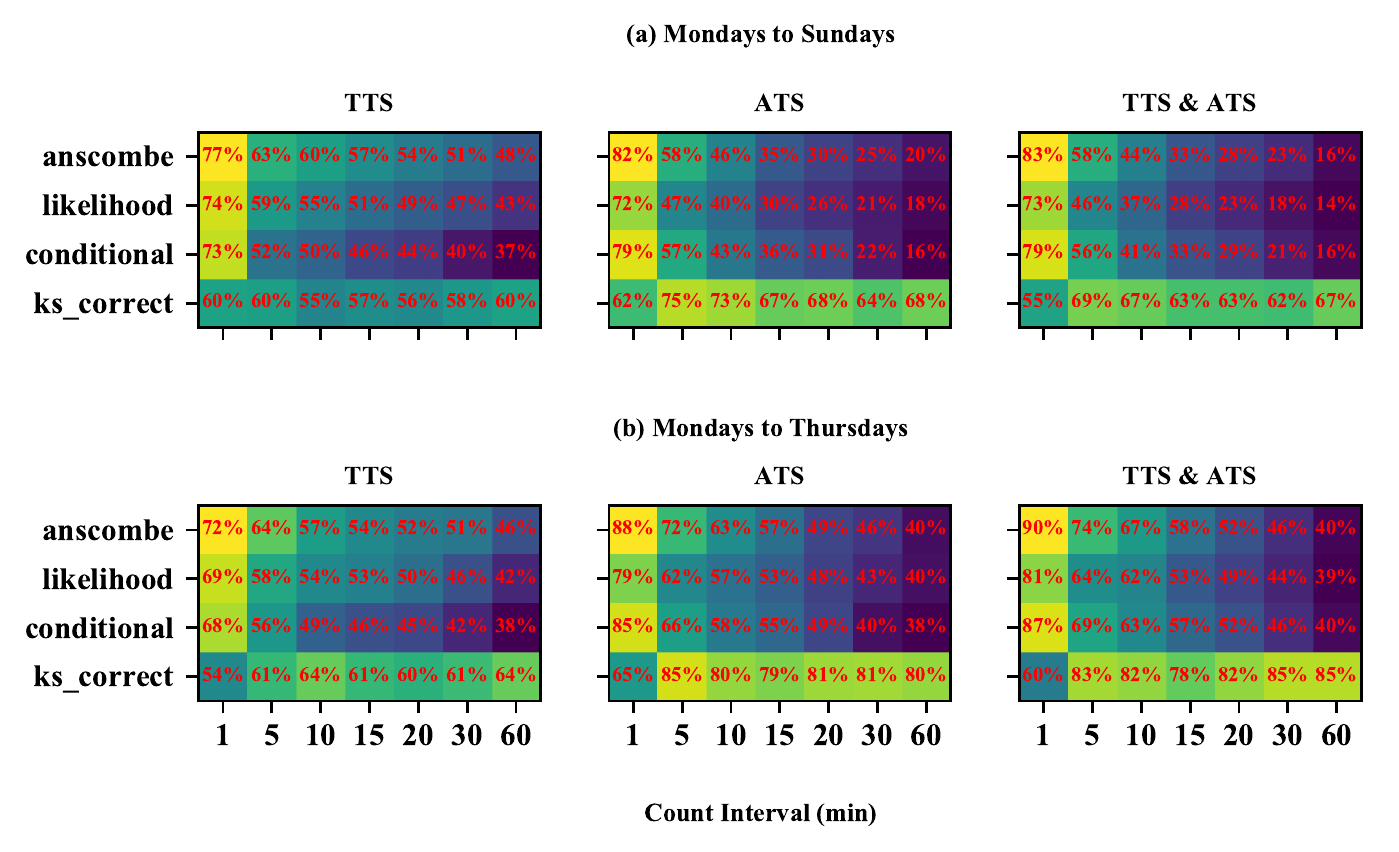}
	\caption{Hypothesis test results for passenger pickups at ZCTA in one-hour peak}
	\label{arrival-zcta_peak}
\end{figure} \begin{figure}[h]
	\centering
	\includegraphics[width=1.2 \columnwidth]{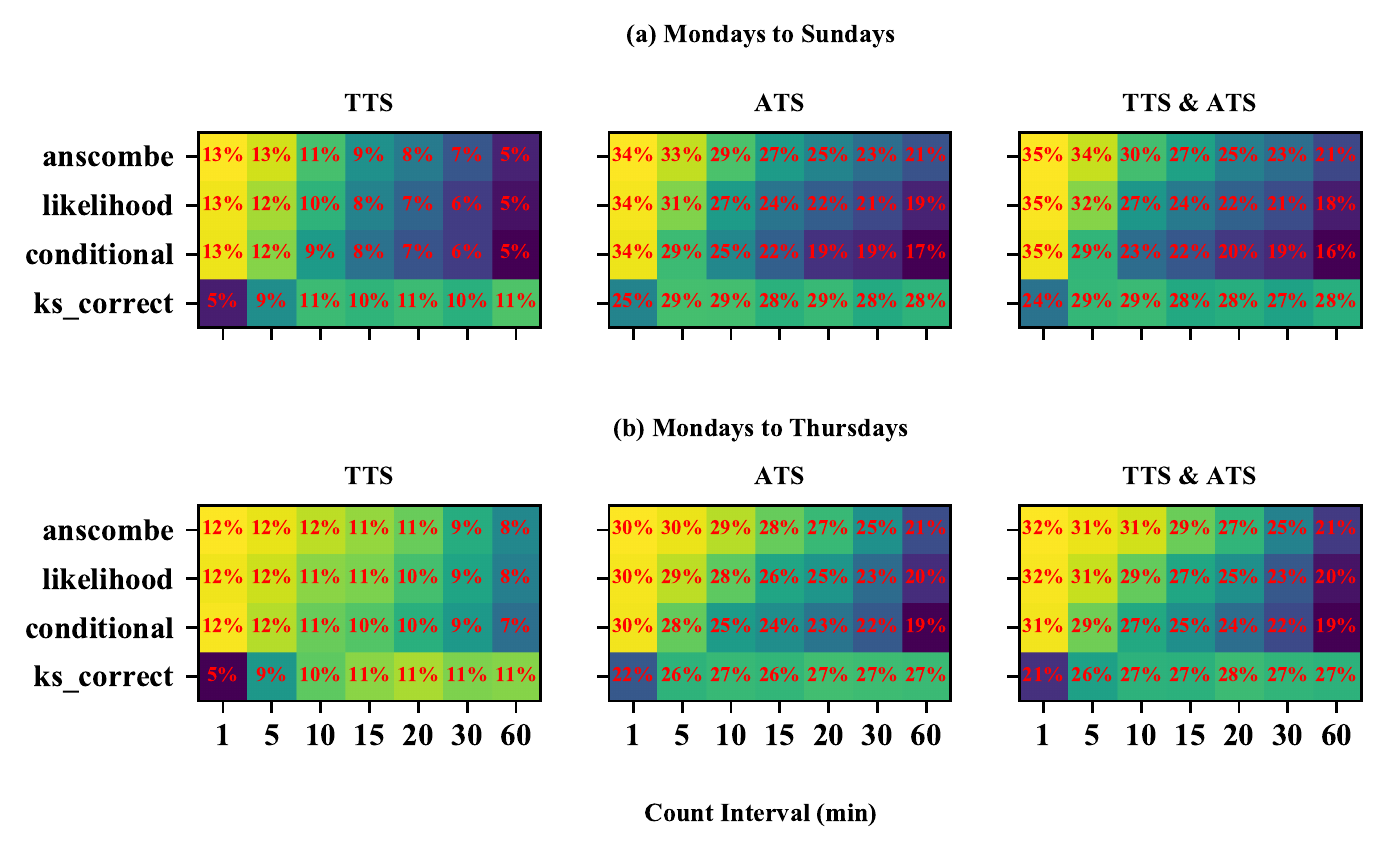}
	\caption{Hypothesis test results for passenger pickups at Census Tracts in one-hour peak}
	\label{arrival-census_peak}
\end{figure} summarize the percentages of zones where passenger pickups can be assumed as Poisson distribution at four levels of spatial scale in peak hour, respectively. Figure \ref{arrival-bo}\begin{figure}[h]
	\centering
	\includegraphics[width=1.2 \columnwidth]{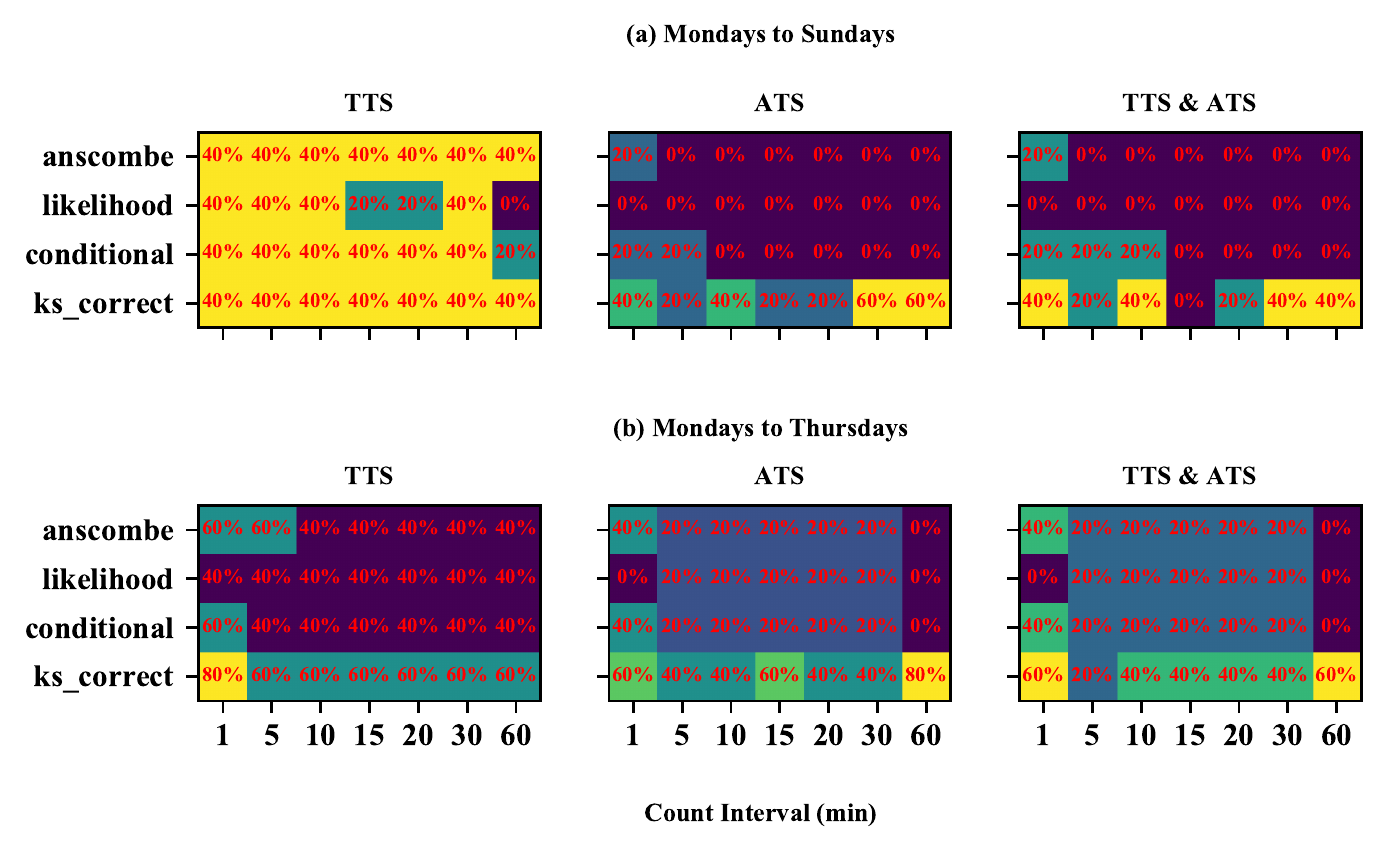}
	\caption{Hypothesis test results for passenger pickups at Boroughs in one-hour off peak}
	\label{arrival-bo}
\end{figure} \begin{figure}[h]
	\centering
	\includegraphics[width=1.2 \columnwidth]{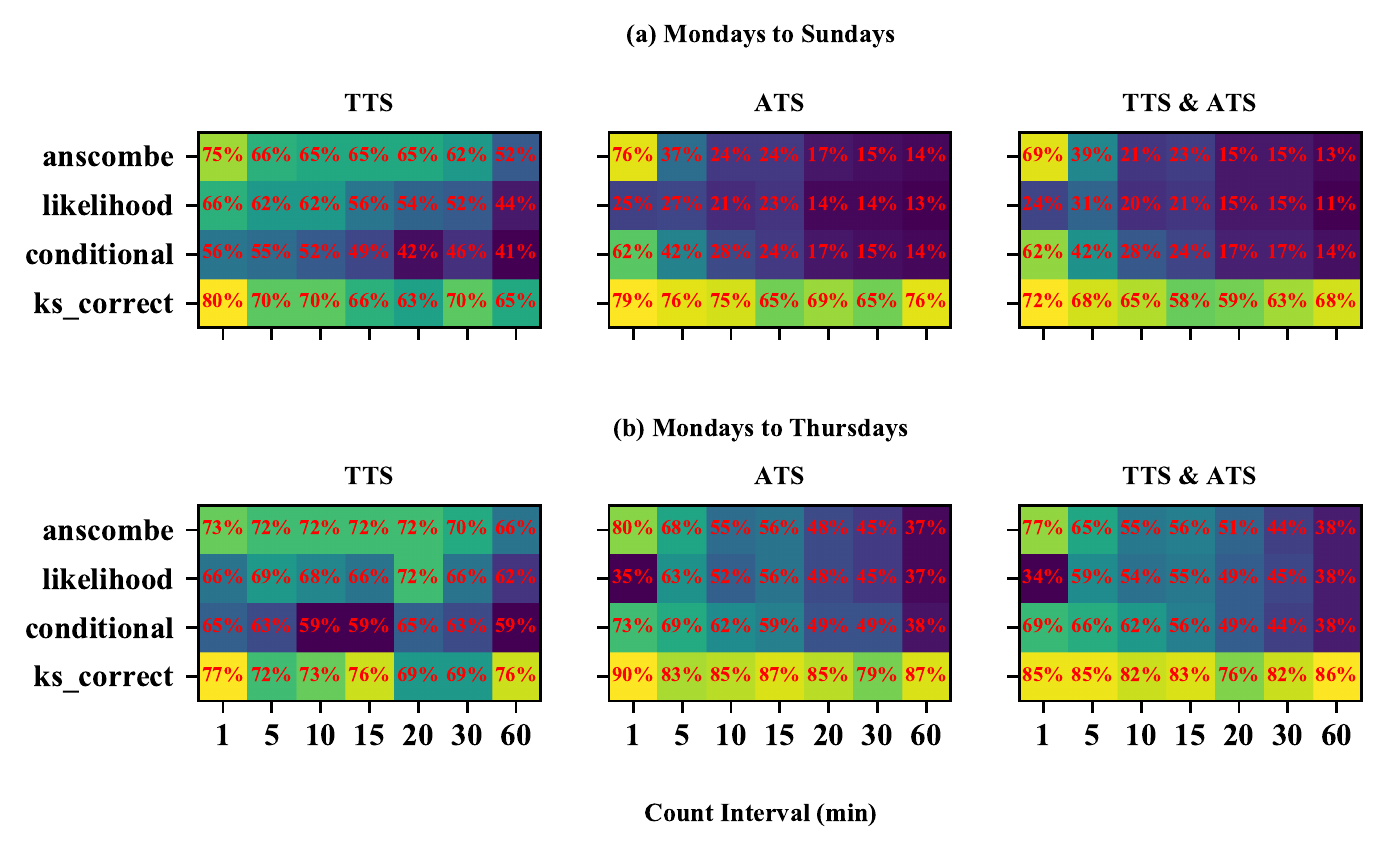}
	\caption{Hypothesis test results for passenger pickups at Community Districts in one-hour off peak}
	\label{arrival-cd} \end{figure} to Figure \ref{arrival-census}\begin{figure}[h]
	\centering
	\includegraphics[width=1.2 \columnwidth]{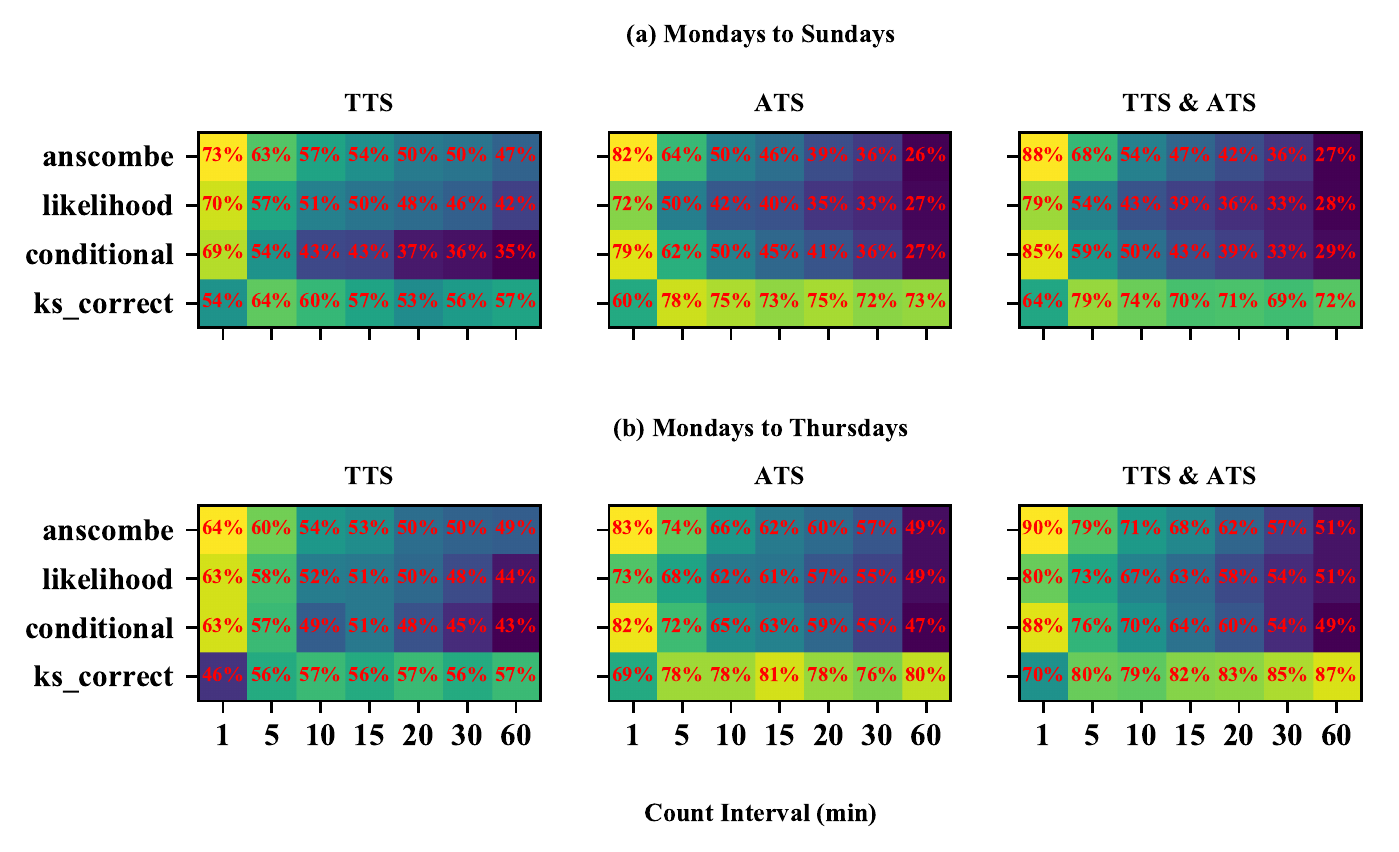}
	\caption{Hypothesis test results for passenger pickups at ZCTA in one-hour off peak}
	\label{arrival-zcta}
\end{figure} \begin{figure}[h]
	\centering
	\includegraphics[width=1.2 \columnwidth]{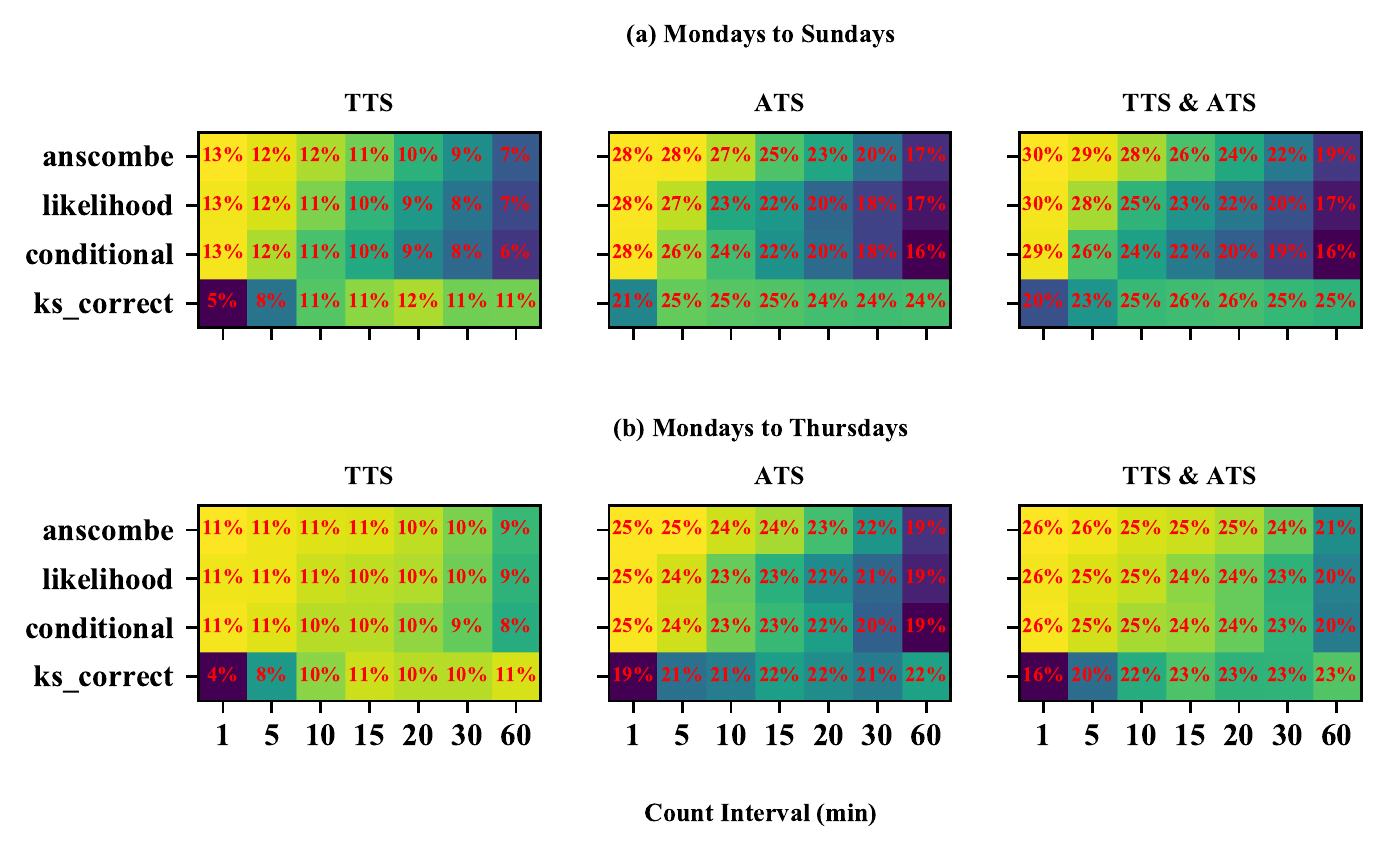}
	\caption{Hypothesis test results for passenger pickups at Census Tracts in one-hour off peak}
	\label{arrival-census}
\end{figure} summarize the percentages of zones where passenger pickups can be assumed as Poisson distribution at four levels of spatial scale in off peak hour, respectively. Within our expectations, the too large or small aggregation does not have perfect output, since large zones have more spatial interactions and heterogeneity and small zones usually do not have any rides. Both community district and ZCTA aggregation have higher percentages. And community district aggregation generally has a slightly higher percentage, regardless of methods, taxi services, count interval and day of the week. One interesting point is that the passenger pickups of TTS and ATS are likely independent considering significant Poisson tests for TTS, ATS, and overall pickups of both TTS and ATS.  

Figure \ref{vehicle-bo_peak}\begin{figure}[h]
	\centering
	\includegraphics[width=1.2 \columnwidth]{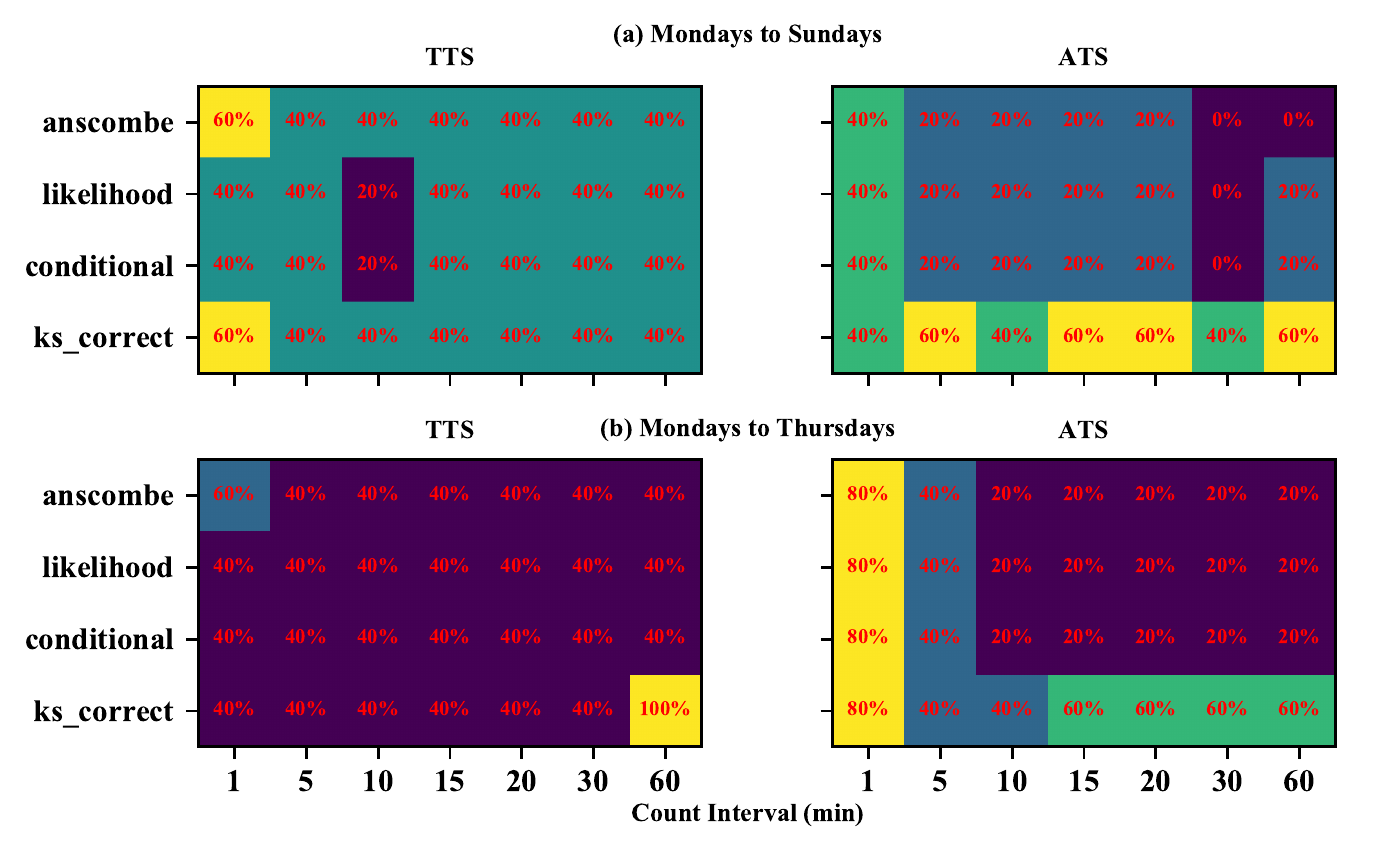}
	\caption{Hypothesis test results for vehicle arrivals at Boroughs in one-hour peak}
	\label{vehicle-bo_peak}
\end{figure} \begin{figure}[h]
	\centering
	\includegraphics[width=1.2 \columnwidth]{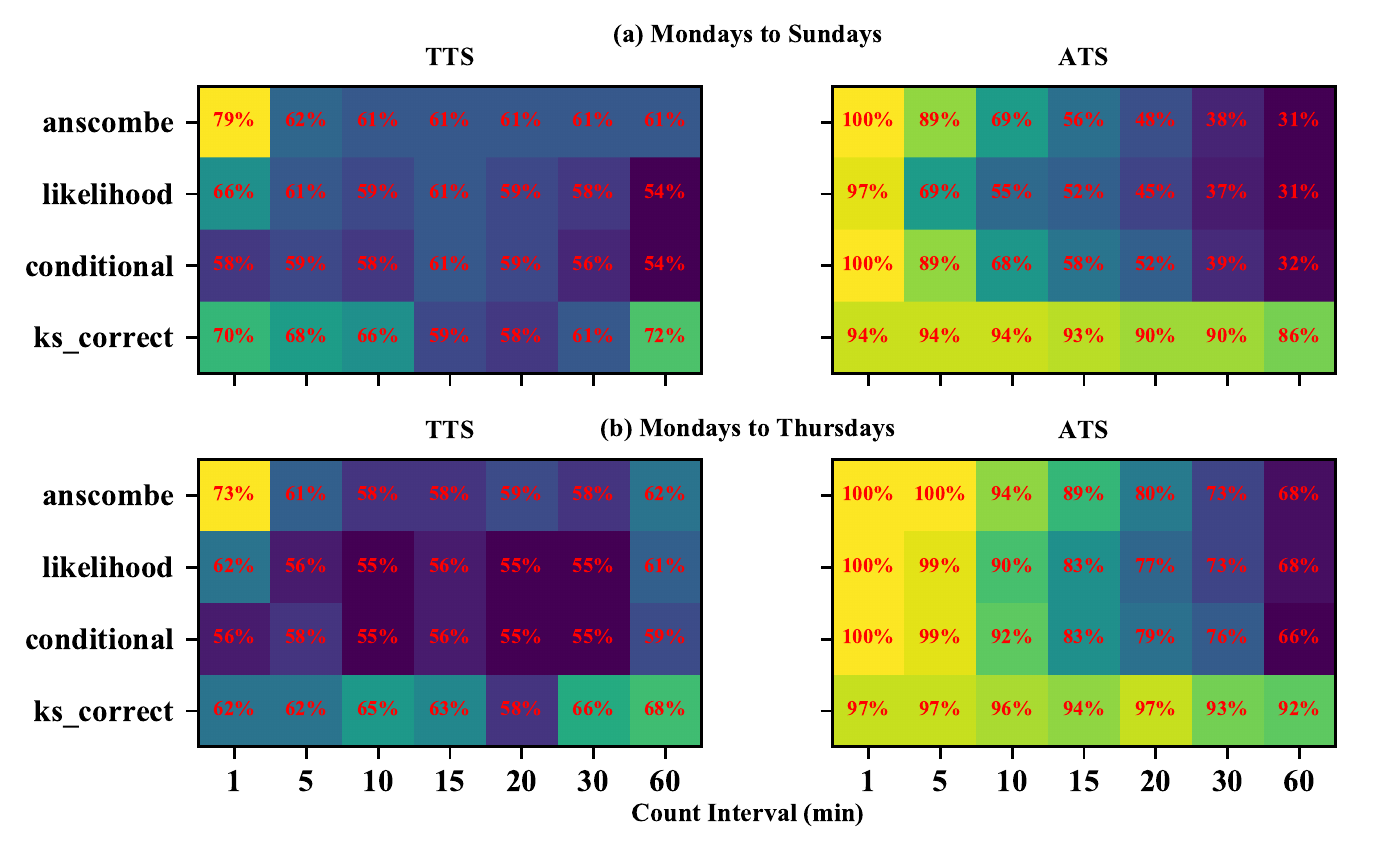}
	\caption{Hypothesis test results for vehicle arrivals at Community Districts in one-hour peak}
	\label{vehicle-cd_peak}
\end{figure} to Figure \ref{vehicle-census_peak} \begin{figure}[h]
	\centering
	\includegraphics[width=1.2 \columnwidth]{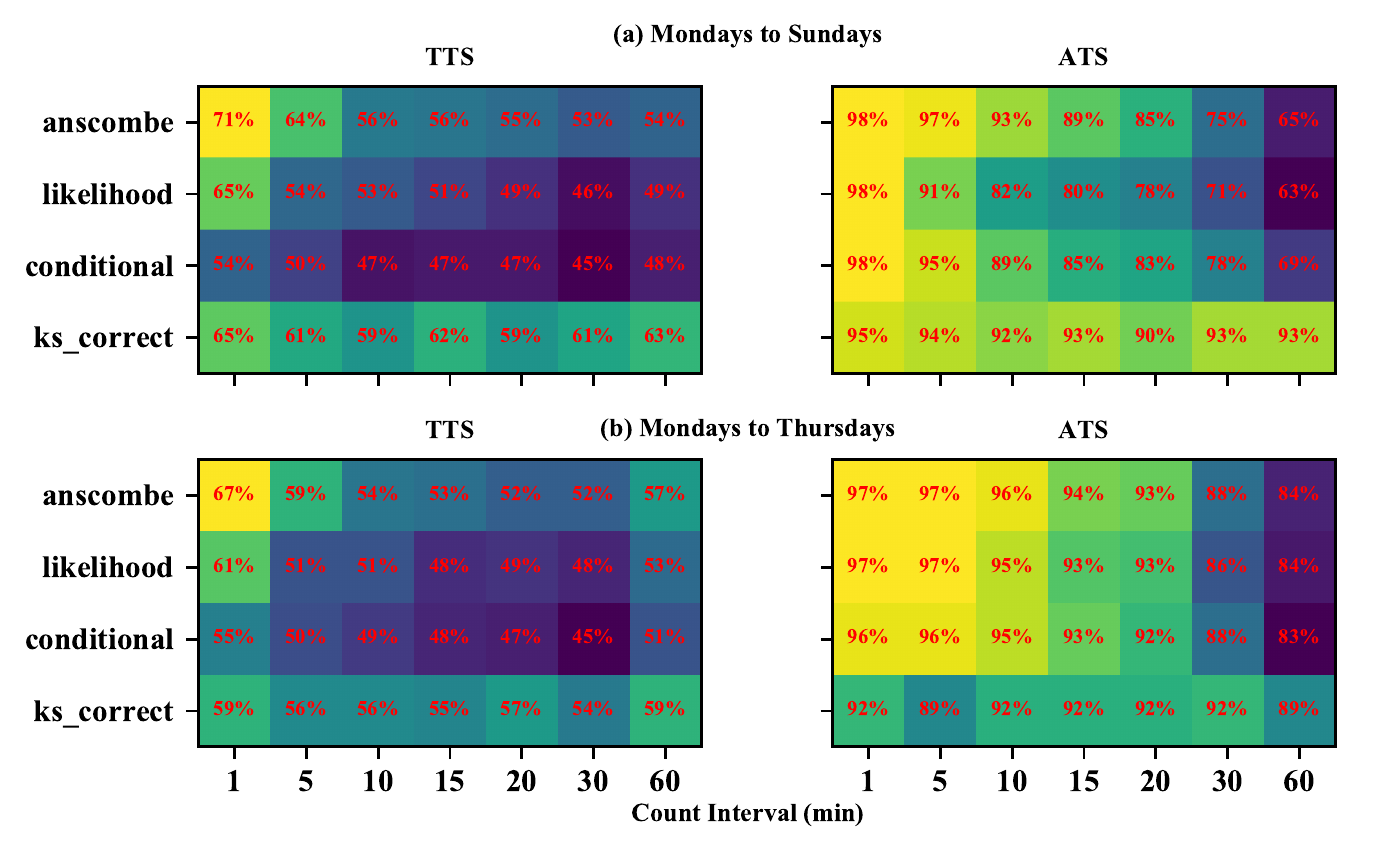}
	\caption{Hypothesis test results for vehicle arrivals at ZCTA in one-hour peak}
	\label{vehicle-zcta_peak}
\end{figure} \begin{figure}[h]
	\centering
	\includegraphics[width=1.2 \columnwidth]{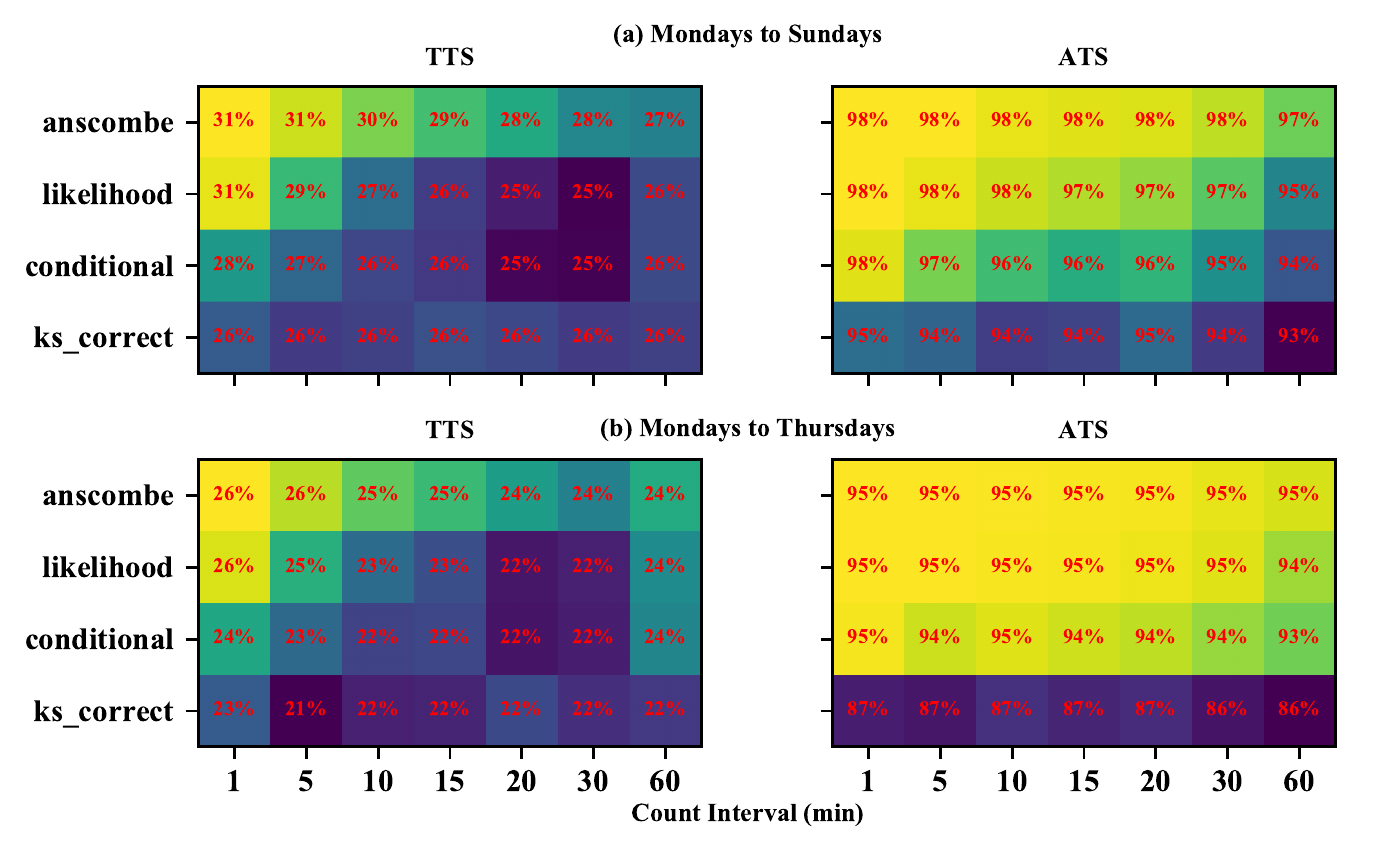}
	\caption{Hypothesis test results for vehicle arrivals at Census Tracts in one-hour peak}
	\label{vehicle-census_peak}
\end{figure} exhibit the percentages of zones where vehicle arrivals (newly online by ATS driver partners or a new shift of TTS driver) can be assumed as Poisson distribution at four levels of spatial scale in peak hour, respectively. Figure \ref{vehicle-bo}\begin{figure}[h]
	\centering
	\includegraphics[width=1.2 \columnwidth]{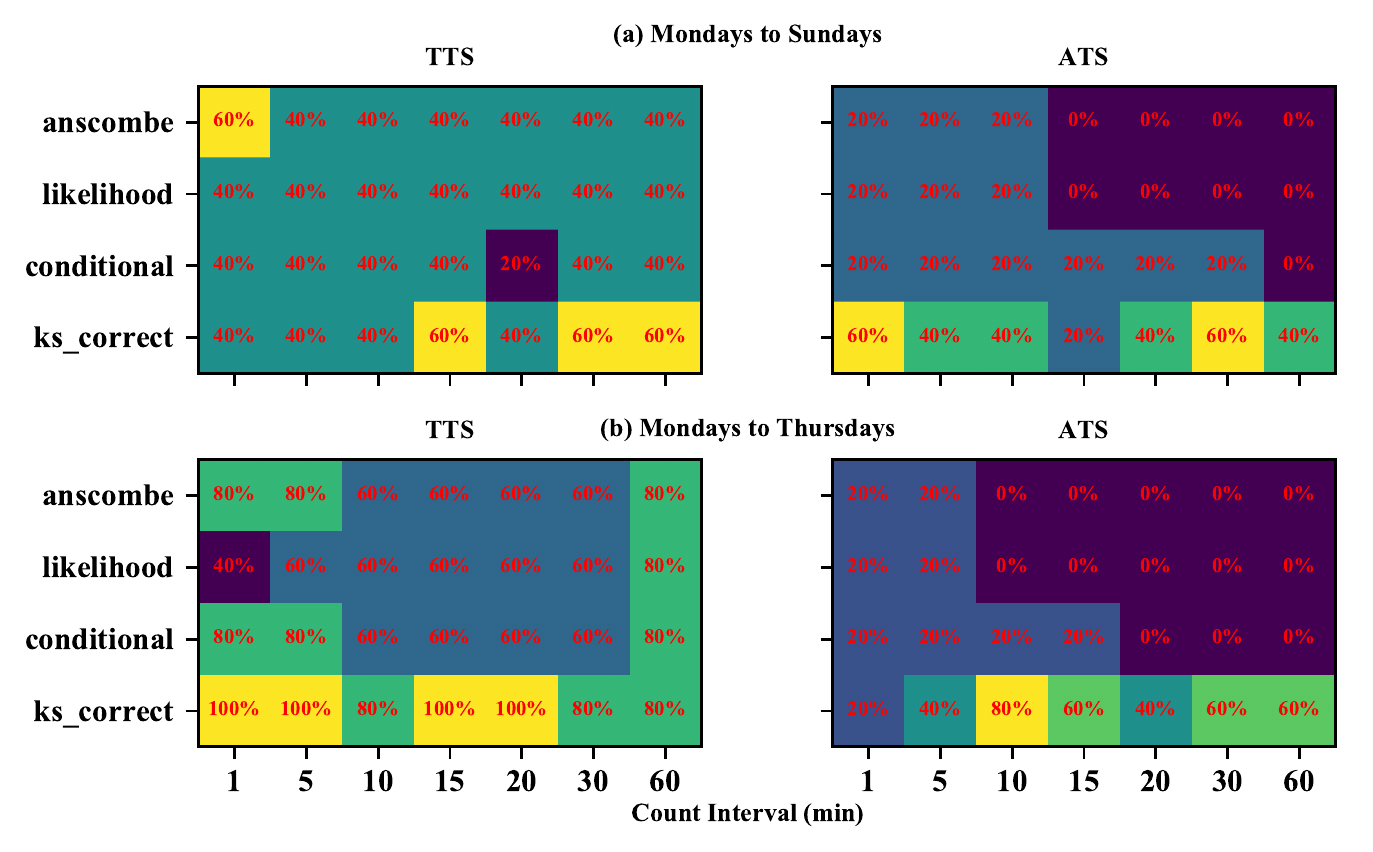}
	\caption{Hypothesis test results for vehicle arrivals at Boroughs in one-hour off peak}
	\label{vehicle-bo}
\end{figure} \begin{figure}[h]
	\centering
	\includegraphics[width=1.2 \columnwidth]{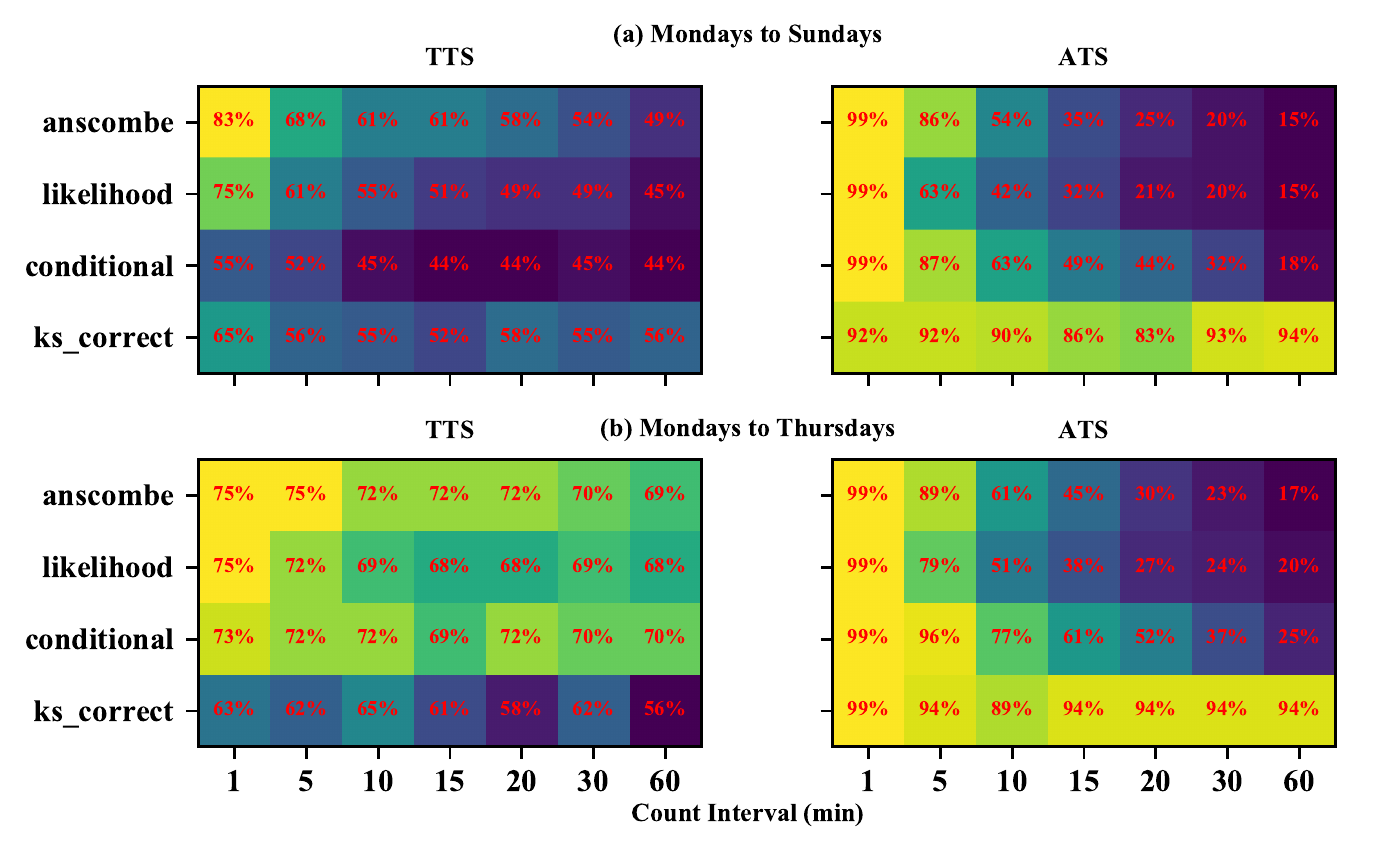}
	\caption{Hypothesis test results for vehicle arrivals at Community Districts in one-hour off peak}
	\label{vehicle-cd}
\end{figure} to Figure \ref{vehicle-census} \begin{figure}[h]
	\centering
	\includegraphics[width=1.2 \columnwidth]{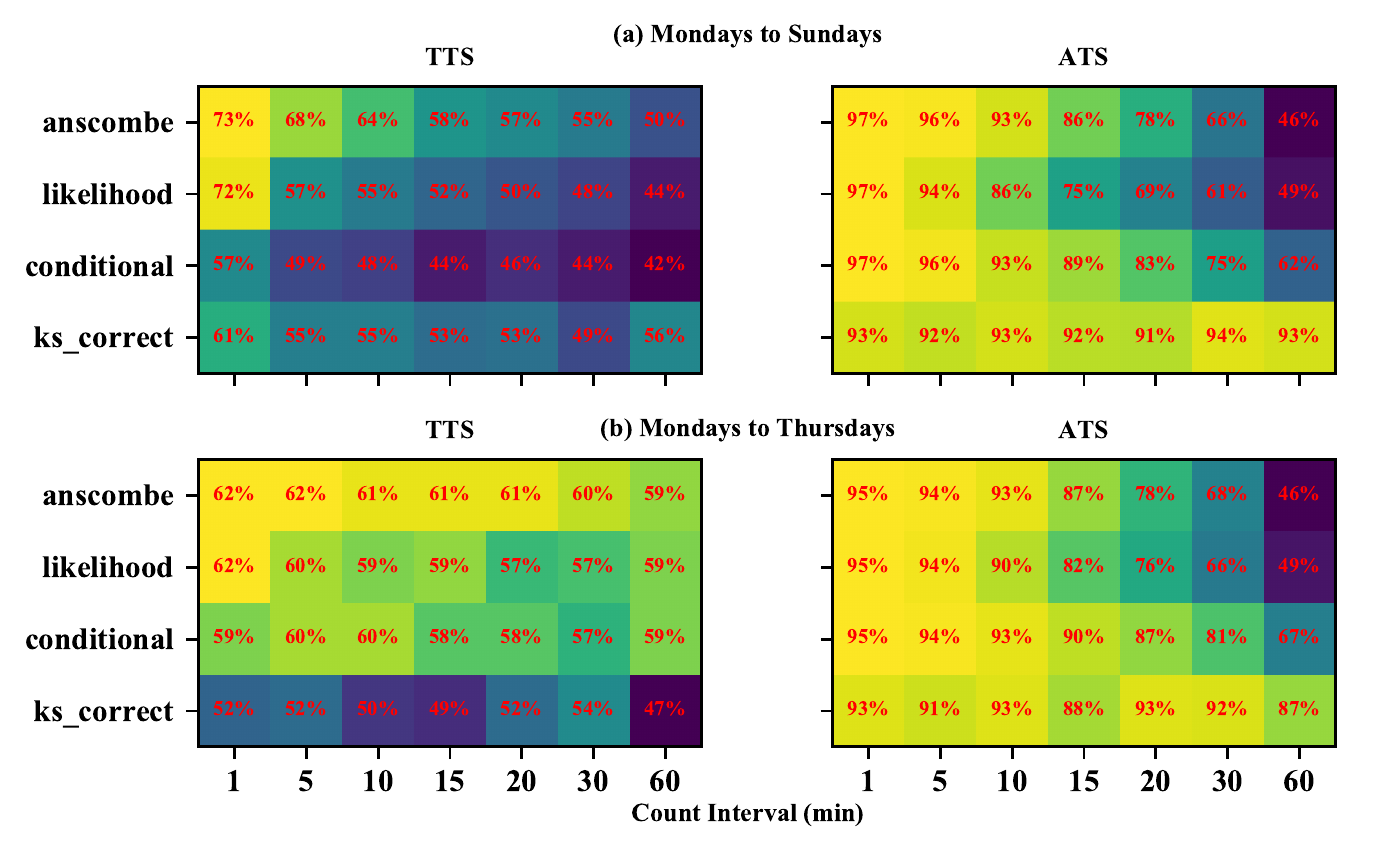}
	\caption{Hypothesis test results for vehicle arrivals at ZCTA in one-hour off peak}
	\label{vehicle-zcta}
\end{figure} \begin{figure}[h]
	\centering
	\includegraphics[width=1.2 \columnwidth]{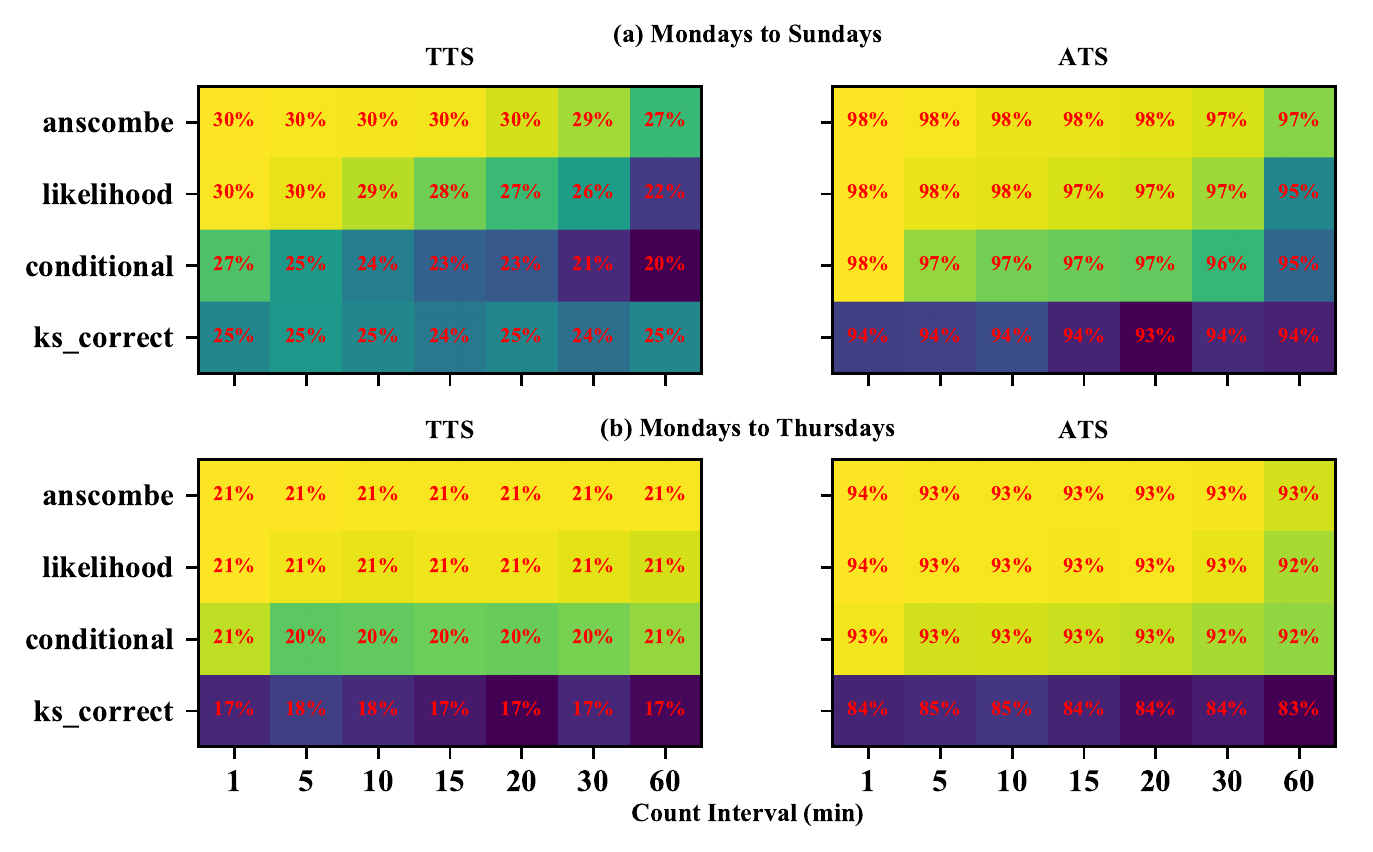}
	\caption{Hypothesis test results for vehicle arrivals at Census Tracts in one-hour off peak}
	\label{vehicle-census}
\end{figure} exhibit the percentages of zones where vehicle arrivals (newly online by ATS driver partners or a new shift of TTS driver) can be assumed as Poisson distribution at four levels of spatial scale in off peak hour, respectively. Similar as test results for passenger pickups, both community district and ZCTA aggregation reveals higher percentages. In particular, the percentages resulted from ATS vehicle arrivals, are sometimes close to 100\%. 	

To sum up, we choose community district to aggregate passenger and vehicle arrivals, and select 1 minute to count those arrivals. 

\section{Hours and Days of Interest}

Regarding the hours and days of interest, we would like to shorten our study period to weekdays (Mondays to Thursdays) and one-hour off peak, mainly depending on the following empirical evidences: 

Figure \ref{arrival-cd_peak}, \ref{arrival-period2_peak},\begin{figure}[h]
	\centering
	\includegraphics[width=1.2 \columnwidth]{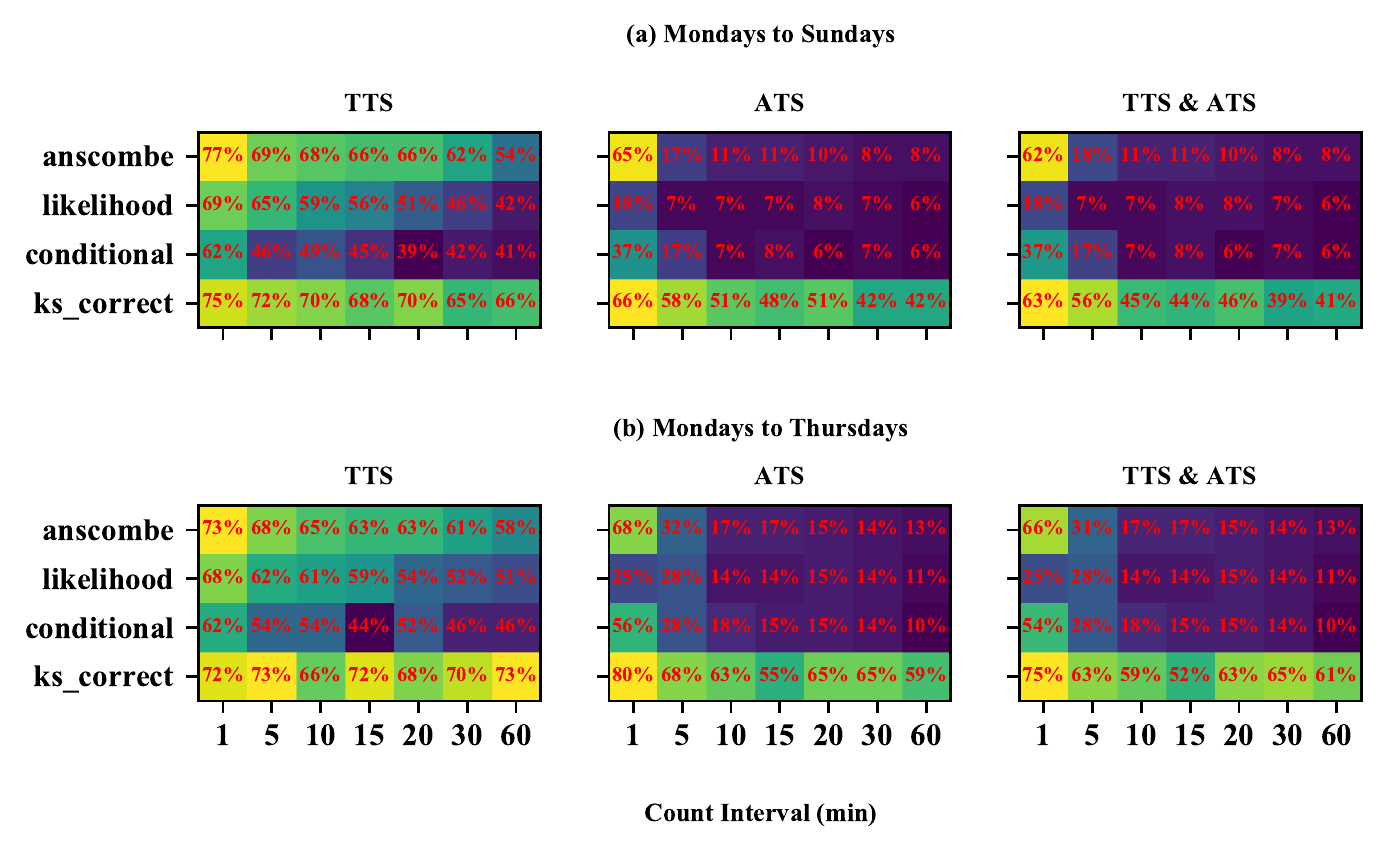}
	\caption{Hypothesis test results for passenger pickups at Community Districts in 2-hour off peak}
	\label{arrival-period2_peak}
\end{figure} and \ref{arrival-period3_peak}\begin{figure}[h]
	\centering
	\includegraphics[width=1.2 \columnwidth]{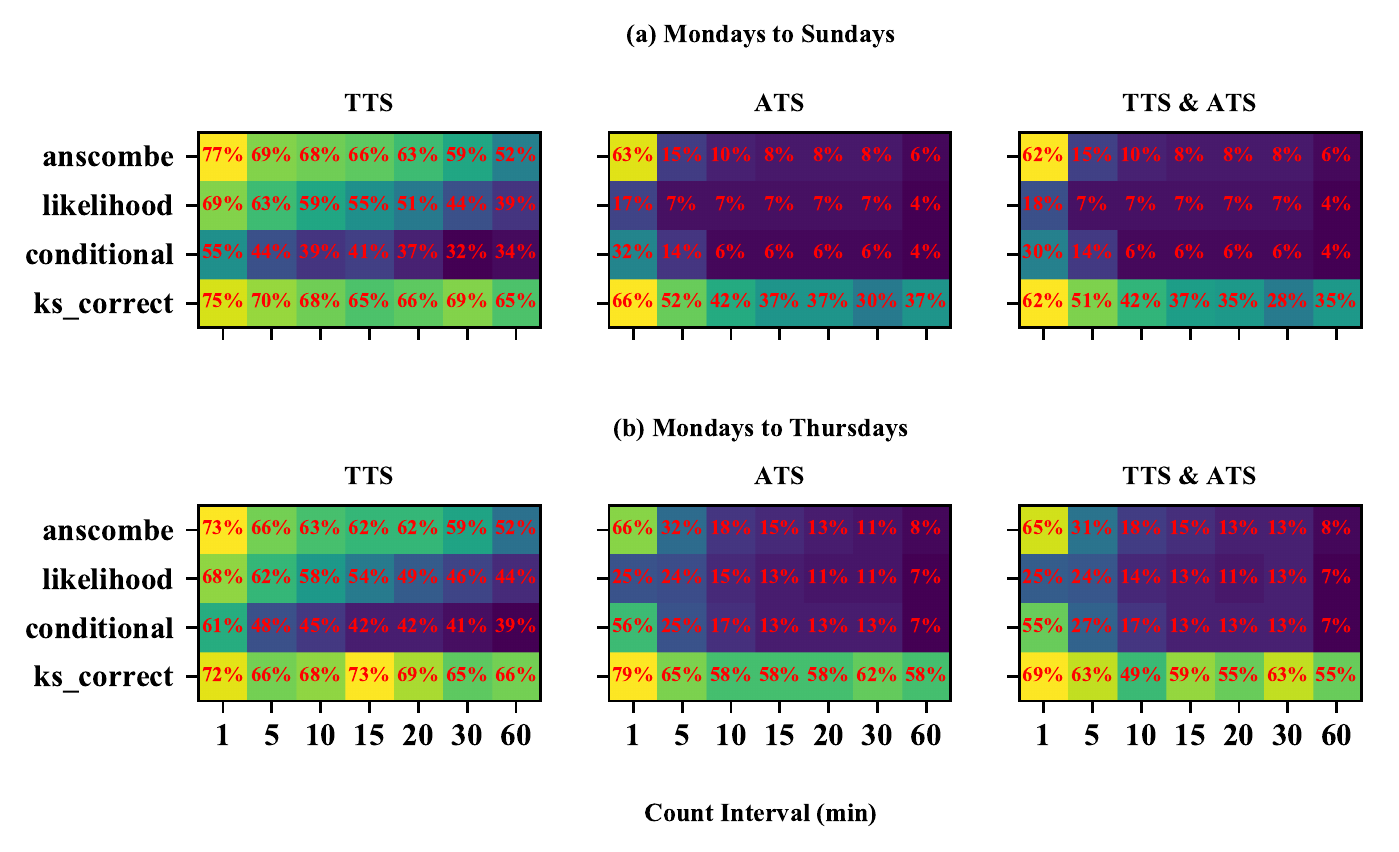}
	\caption{Hypothesis test results for passenger pickups at Community Districts in 3-hour off peak}
	\label{arrival-period3_peak}
\end{figure} compare the hypothesis results for community district aggregation of passenger pickups across different levels of peak hours, as well as day of the week. As number of hours included into peak hours increase, less community districts are not rejecting Poisson distribution. In addition, limiting to the weekdays from Monday to Thursday can slightly increase percentages of significant zones.  
Figure \ref{vehicle-cd_peak}, \ref{vehicle-period2_peak},\begin{figure}[h]
	\centering
	\includegraphics[width=1.2 \columnwidth]{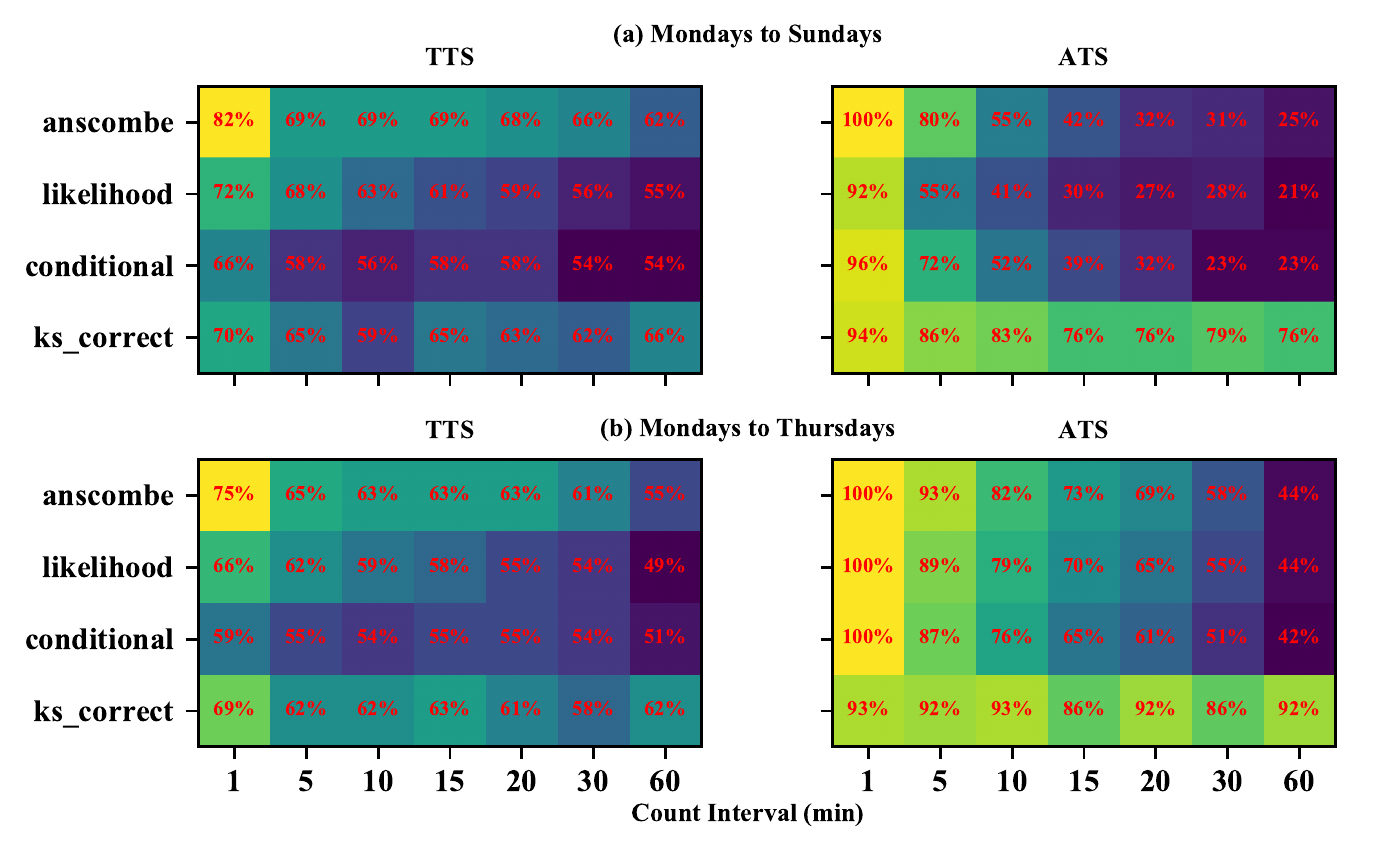}
	\caption{Hypothesis test results for vehicle arrivals at Community Districts in 2-hour off peak}
	\label{vehicle-period2_peak}
\end{figure} and \ref{vehicle-period3_peak}\begin{figure}[h]
	\centering
	\includegraphics[width=1.2 \columnwidth]{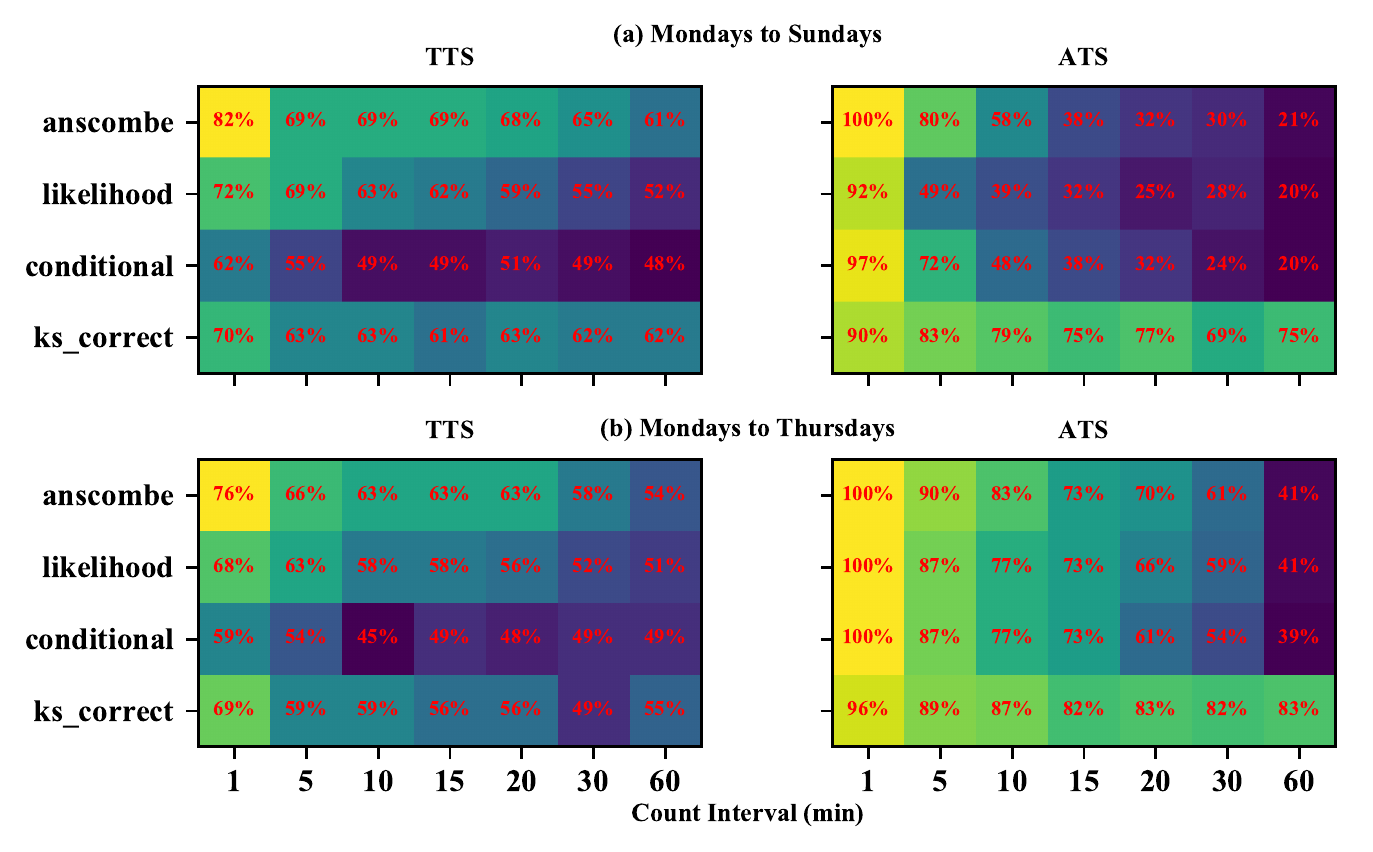}
	\caption{Hypothesis test results for vehicle arrivals at Community Districts in 3-hour off peak}
	\label{vehicle-period3_peak}
\end{figure} compare the hypothesis results for community district aggregation of vehicle arrivals across different levels of off peak hours, as well as day of the week. There are no big differences in the percentages by week of the day, as well as number of hours in off peak period. However, introducing more hours or focusing on weekdays can lead to very small increases in percentages. 

Figure \ref{arrival-cd}, \ref{arrival-period2},\begin{figure}[h]
	\centering
	\includegraphics[width=1.2 \columnwidth]{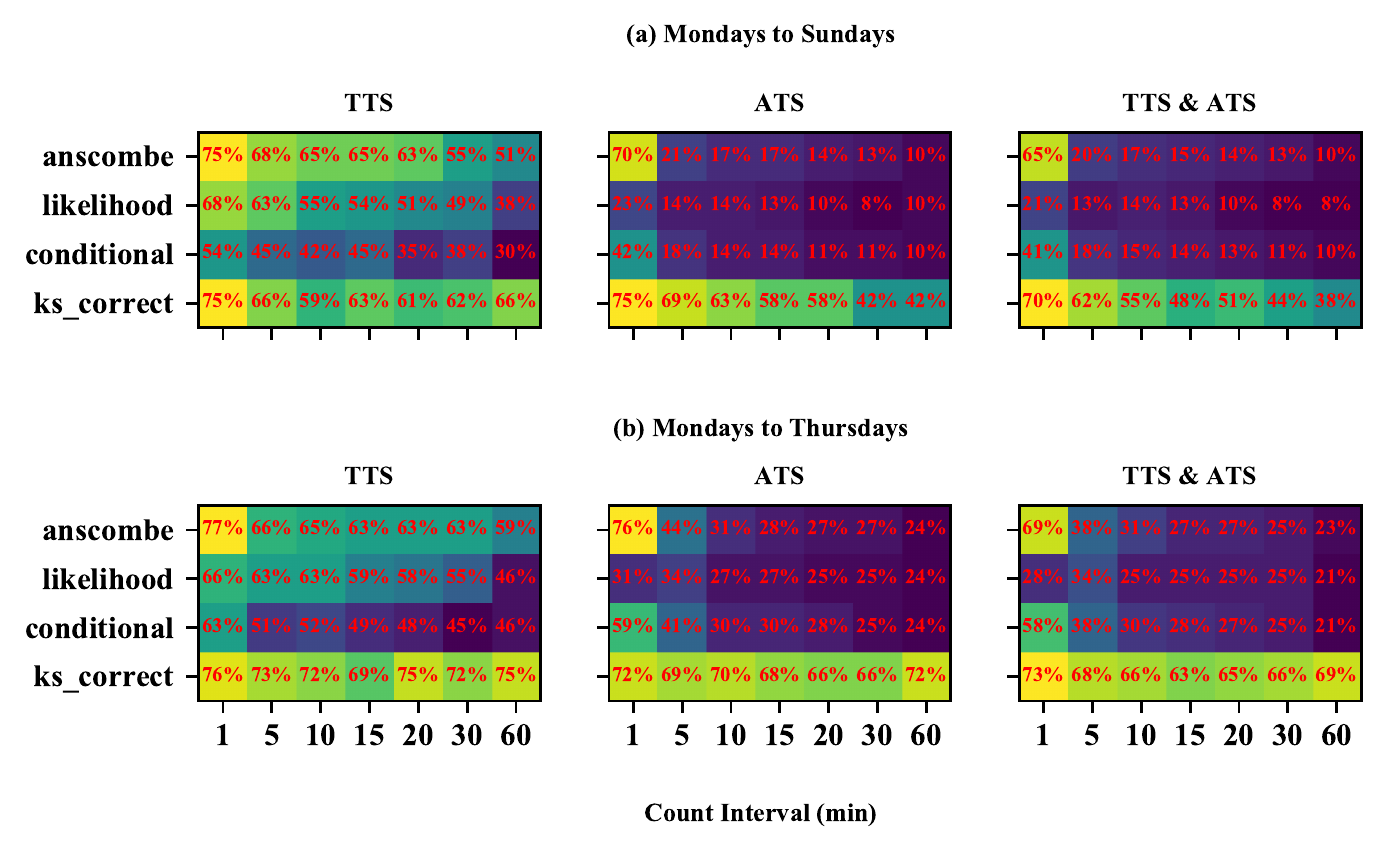}
	\caption{Hypothesis test results for passenger pickups at Community Districts in 2-hour off peak}
	\label{arrival-period2}
\end{figure} and \ref{arrival-period3}\begin{figure}[h]
	\centering
	\includegraphics[width=1.2 \columnwidth]{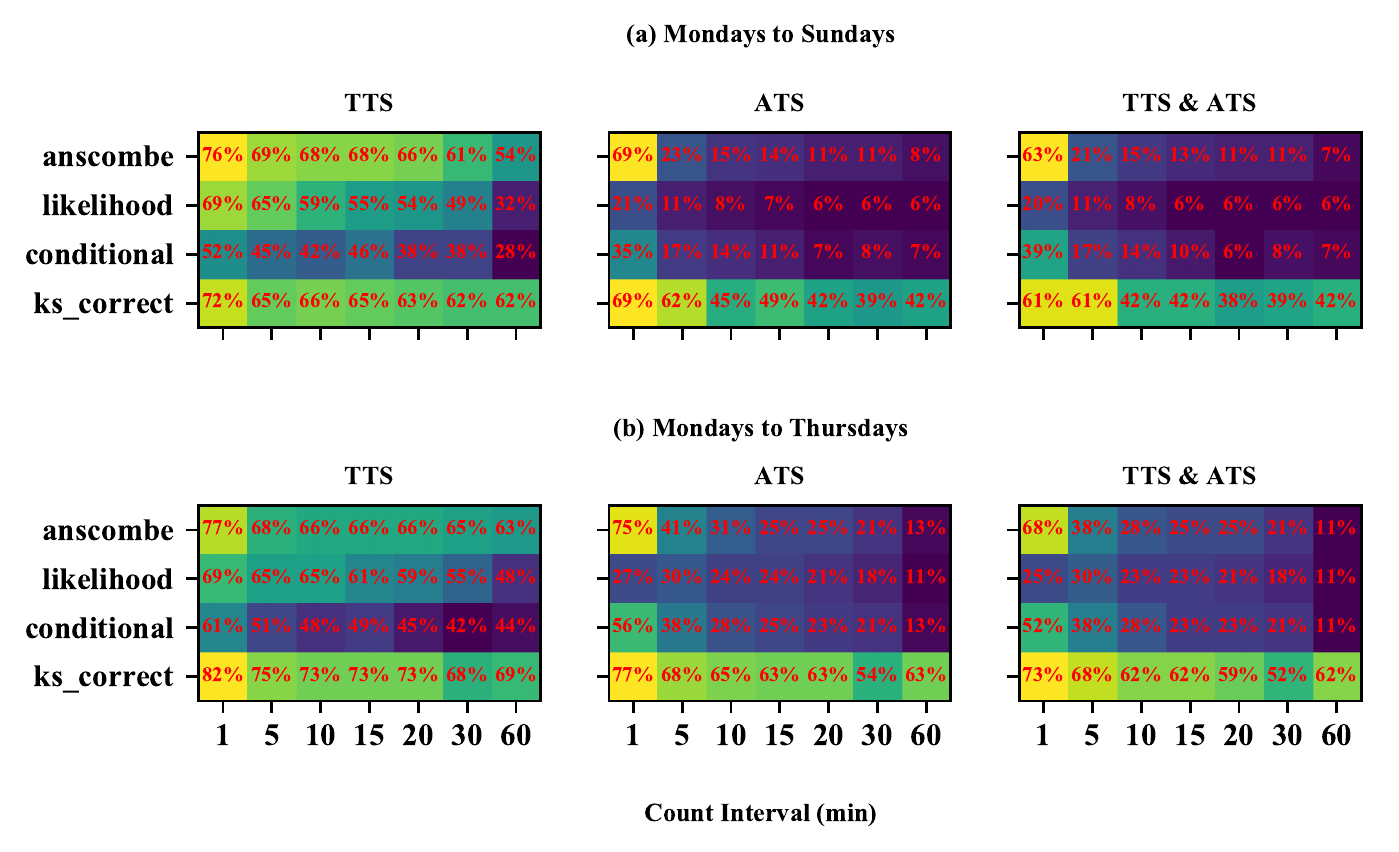}
	\caption{Hypothesis test results for passenger pickups at Community Districts in 3-hour off peak}
	\label{arrival-period3}
\end{figure} compare the hypothesis results for community district aggregation of passenger pickups across different levels of off peak hours, as well as day of the week. As number of hours included into off peak increase, less community districts are not rejecting Poisson distribution. In addition, limiting to the weekdays from Monday to Thursday can slightly increase percentages of significant zones.  
Figure \ref{vehicle-cd}, \ref{vehicle-period2},\begin{figure}[h]
	\centering
	\includegraphics[width=1.2 \columnwidth]{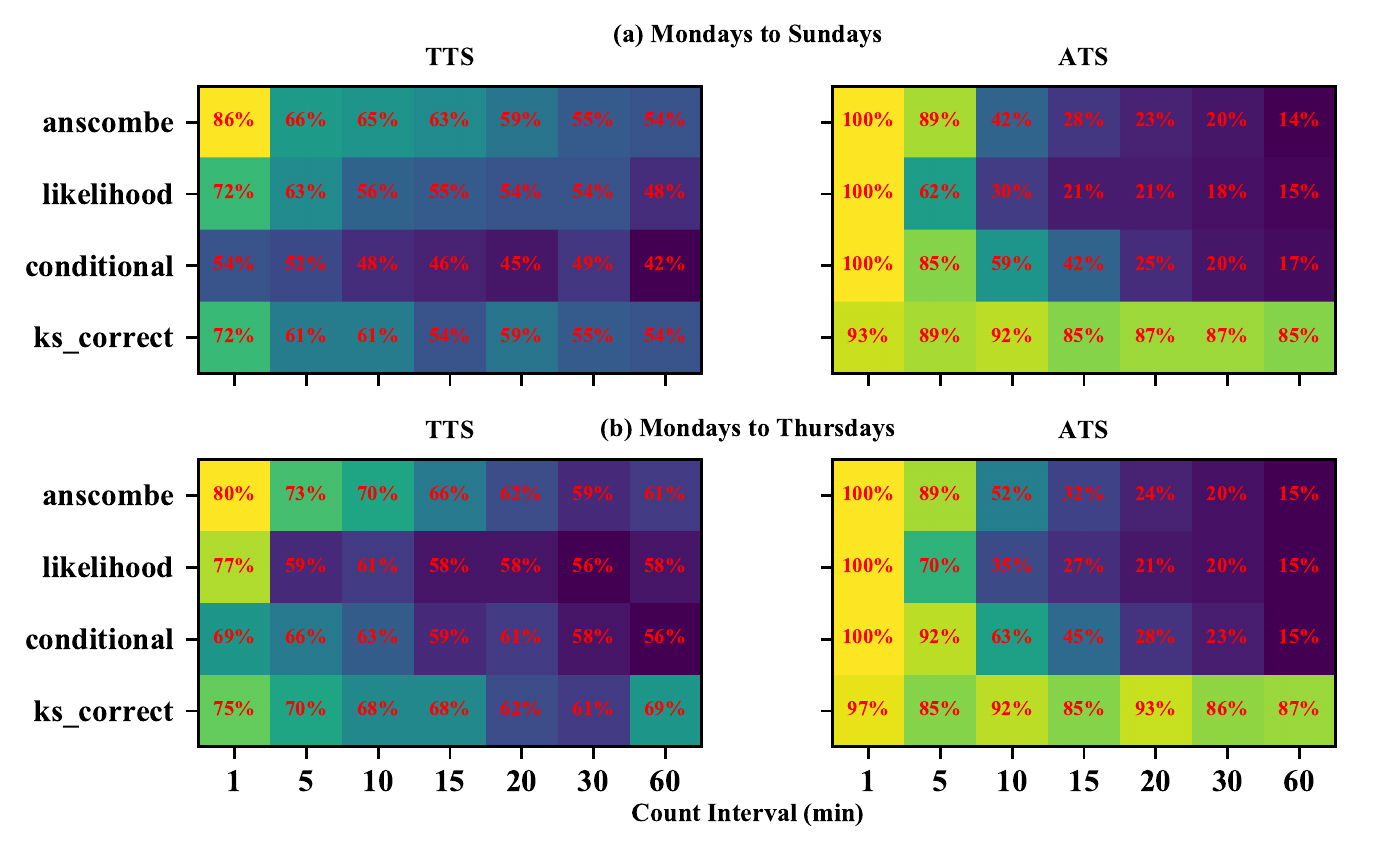}
	\caption{Hypothesis test results for vehicle arrivals at Community Districts in 2-hour off peak}
	\label{vehicle-period2}
\end{figure} and \ref{vehicle-period3}\begin{figure}[h]
	\centering
	\includegraphics[width=1.2 \columnwidth]{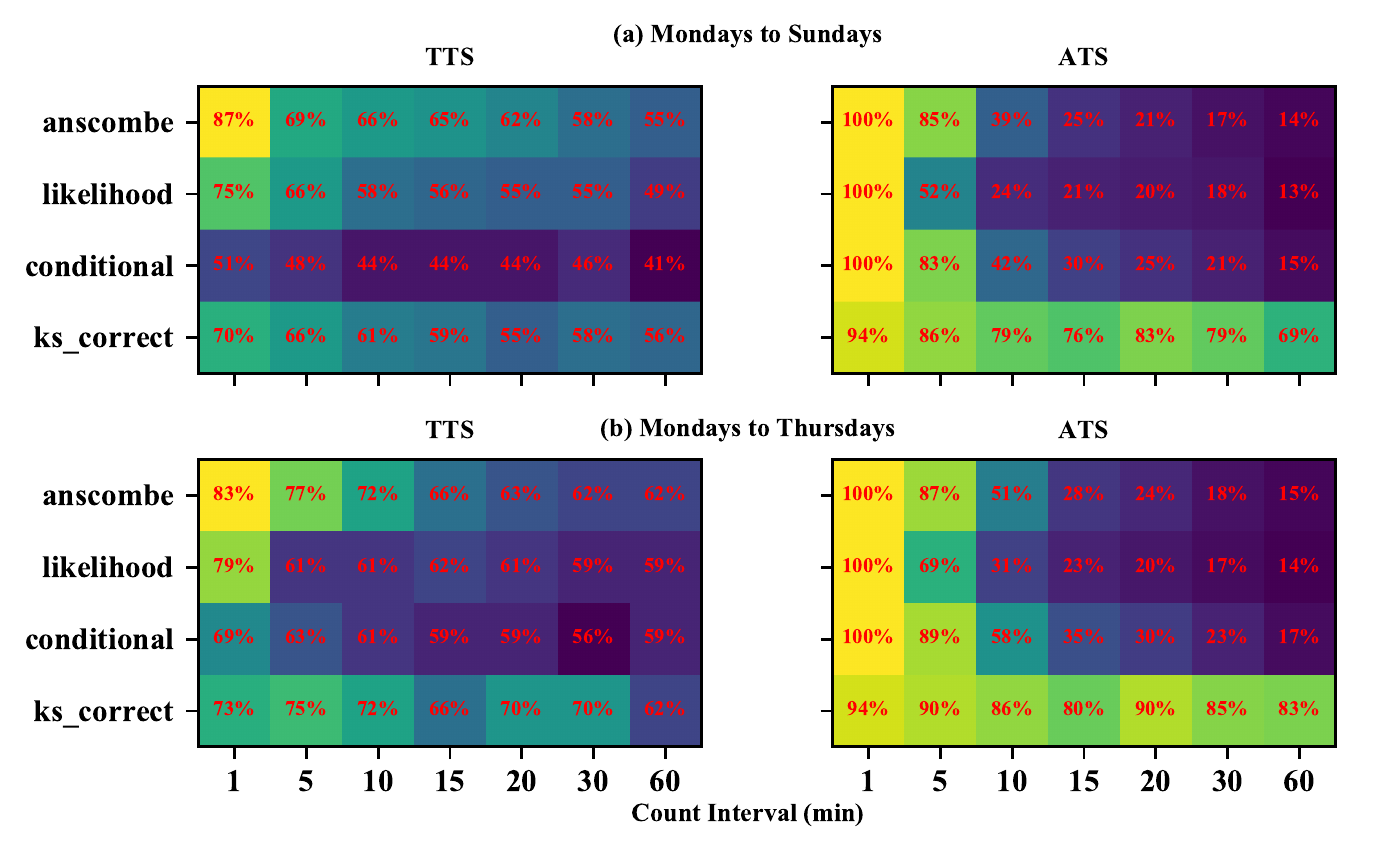}
	\caption{Hypothesis test results for vehicle arrivals at Community Districts in 3-hour off peak}
	\label{vehicle-period3}
\end{figure} compare the hypothesis results for community district aggregation of vehicle arrivals across different levels of off peak hours, as well as day of the week. There are no big differences in the percentages by week of the day, as well as number of hours in off peak period. However, introducing more hours or focusing on weekdays can lead to very small increases in percentages.

\section{Spatial Distribution}
Given the identified aggregation scales, we find the both passenger and vehicle arrivals can be assumed with Poisson distribution, in most community districts, as shown in Figure \ref{arrival-cd} and \ref{vehicle-cd}. In addition, we plot the spatial distribution of community districts not rejecting Poisson distribution in Figure \ref{arrival-spatial}\begin{figure}[!h]
	\centering
	\subfigure[1-min count interval in 1-hour off peak]{\includegraphics[width=1.2 \columnwidth]{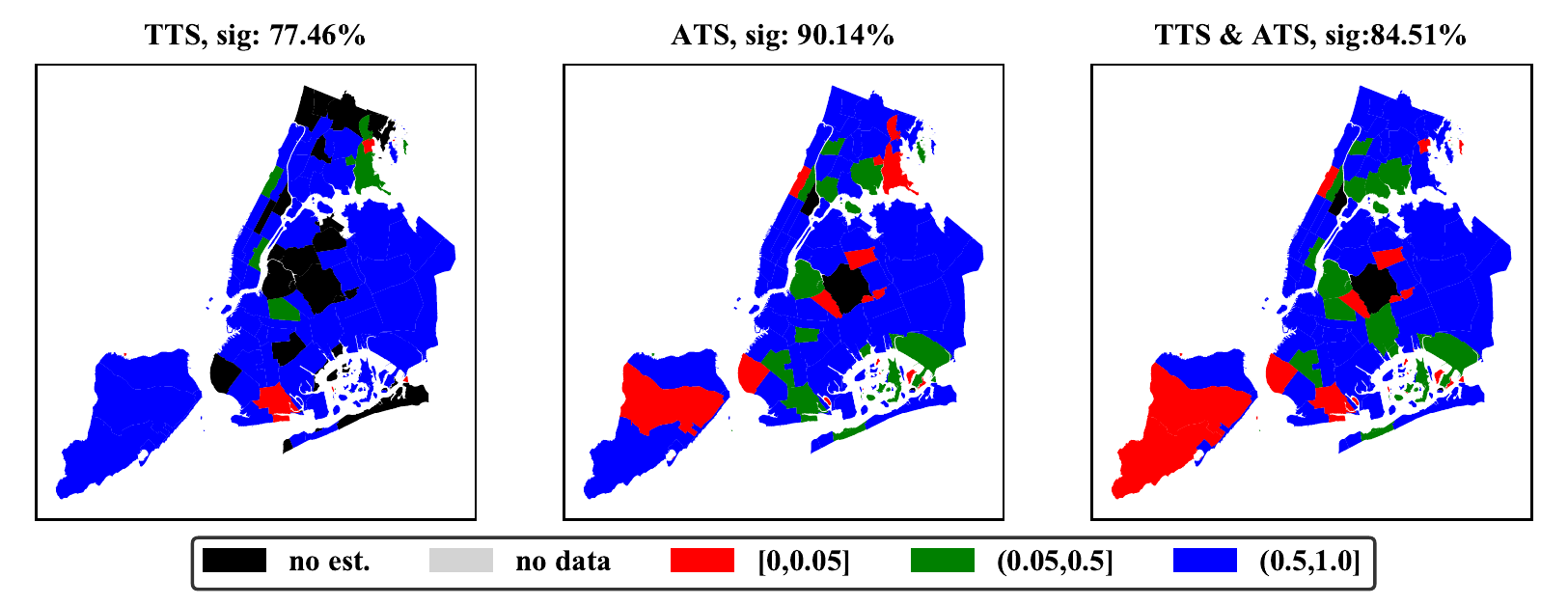}}
	\subfigure[1-min count interval in 1-hour peak]{\includegraphics[width=1.2 \columnwidth]{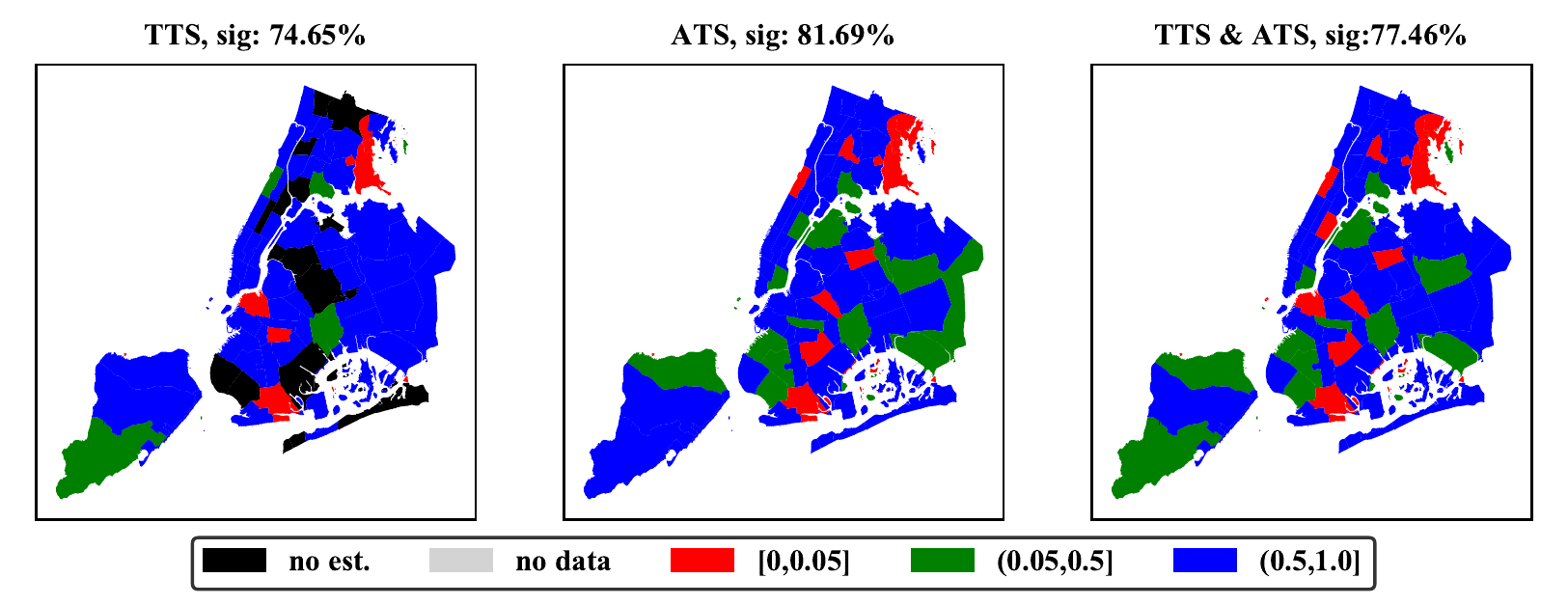}}
	\caption{Hypothesis test results for passenger pickups by Community Districts in weekdays. Note: `sig' indicates percentage of community districts not rejecting Poisson distribution, represented by blue and green color}
	\label{arrival-spatial}
\end{figure} and \ref{vehicle-spatial}\begin{figure}[!h]
	\centering
	\subfigure[1-min count interval in 1-hour off peak]{\includegraphics[width=1.2 \columnwidth]{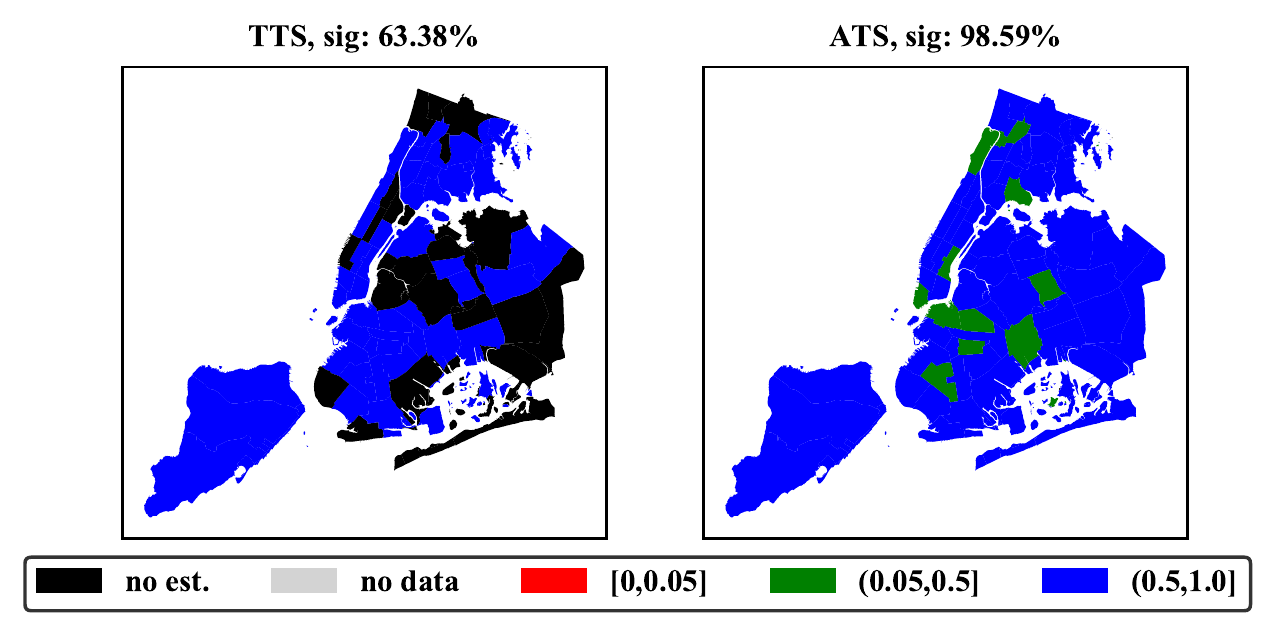}}
	\subfigure[1-min count interval in 1-hour peak]{\includegraphics[width=1.2 \columnwidth]{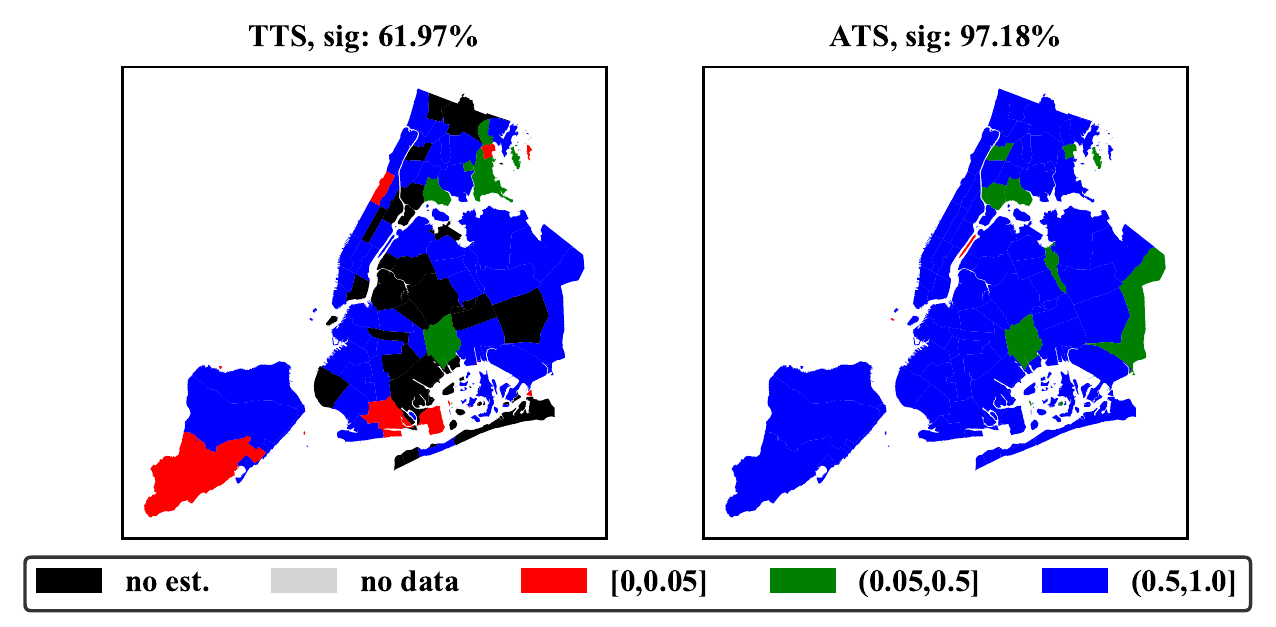}}
	\caption{Hypothesis test results for vehicle arrivals by Community Districts in weekdays. Note: `sig' indicates percentage of community districts not rejecting Poisson distribution, represented by blue and green color}
	\label{vehicle-spatial}
\end{figure}. The both figures indicate that those insignificant (i.e. rejecting Poisson assumption) community districts mainly locate in remote suburban areas, generally with rare TTS and ATS activities. In New York City, most taxi activities concentrate in Manhattan downtown and midtown, as well as two airports, Brooklyn downtown, and Queens downtown. Our hypothesis tests strongly support the Poisson assumption on passenger and vehicle arrivals in those areas. For more details on TTS and ATS activities and facts in NYC, you can refer to 2018 NYC TAXI FACT BOOK (\url{http://www.nyc.gov/html/tlc/downloads/pdf/2018_tlc_factbook.pdf}). 

\end{document}